\newtheorem{theorem}{Theorem}[section]
\newtheorem{proposition}[theorem]{Proposition}
\newtheorem{lemma}[theorem]{Lemma}
\newtheorem{corollary}[theorem]{Corollary}
\newcommand{\bm}[1]{\mbox{\boldmath$#1$}}
\newcommand{\be}{\begin{equation}}
\newcommand{\ee}{\end{equation}}
\newcommand{\bea}{\begin{eqnarray}}
\newcommand{\eea}{\end{eqnarray}}
\newcommand{\non}{\nonumber}
\begin{document}

\title{Reduction formula of form factors for the integrable spin-$s$ 
XXZ chains and application to the correlation functions}
\author{
Tetsuo Deguchi\footnote{e-mail deguchi@phys.ocha.ac.jp} 
}
\date{ }
\maketitle
\begin{center}   
Department of Physics, Graduate School of Humanities and Sciences, 
Ochanomizu University \\ 
 2-1-1 Ohtsuka, Bunkyo-ku, Tokyo 112-8610, Japan 
\end{center} 

\begin{abstract} 
For the integrable spin-$s$ XXZ chain 
we express explicitly any given spin-$s$ form factor 
in terms of a sum over the scalar products of the spin-1/2 operators.  
Here they are given by the operator-valued matrix elements of  
the monodromy matrix of the spin-1/2 XXZ spin chain. 
In the paper  we call an arbitrary matrix element of a local operator 
between two  Bethe eigenstates a form factor of the operator.   
We derive all important formulas of the fusion method in detail.  
We thus revise the derivation of the higher-spin XXZ form factors 
given in a previous paper. The revised method has  
several interesting applications in mathematical physics. For instance,  
we express the spin-$s$ XXZ correlation function of an arbitrary entry 
at zero temperature in terms of a sum of multiple integrals.  
\end{abstract}

\newpage 
%
%
\setcounter{equation}{0} 
\renewcommand{\theequation}{1.\arabic{equation}}
\section{Introduction} 

The multiple-integral representations of correlation functions 
of the spin-1/2 XXZ spin chain have attracted much interest 
during the last two decades in the mathematical physics 
of integrable quantum spin chains 
\cite{Korepin,Slavnov,Miki,Jimbo-Miwa,MS2000,KMT1999,KMT2000,Goehmann-CF}. 
They are also derived for the integrable higher-spin XXX 
spin chains through the algebraic Bethe-ansatz method 
\cite{Kitanine2001,Castro-Alvaredo}. 
The multiple-integral representations of 
the finite-temperature correlation functions 
of the integrable isotropic spin-1 chain have been explicitly derived 
\cite{GSS}.

The Hamiltonian of the spin-1/2 XXZ spin chain 
under the periodic boundary conditions (P.B.C.) is given by  
\begin{align}
{\cal H}_{\rm XXZ} =  {\frac 1 2} 
\sum_{j=1}^{L} \left(\sigma_j^X \sigma_{j+1}^X +
 \sigma_j^Y \sigma_{j+1}^Y + \Delta \sigma_j^Z \sigma_{j+1}^Z  \right) \, . 
\label{hxxz} 
\end{align}
Here $\sigma_j^{a}$ ($a=X, Y, Z$) are the Pauli matrices defined 
on the $j$th site and $\Delta$ denotes the anisotropy 
of the exchange coupling. The P.B.C. are given by 
$\sigma_{L+1}^{a} = \sigma_{1}^{a}$ for $a=X, Y, Z$.   
In terms of the $q$ parameter of the quantum group $U_q(sl_2)$, 
we express $\Delta$ by  
$\Delta = {\frac 1 2}(q+q^{-1})$.  
We define parameter $\eta$ by $q = \exp \eta$. 
The transfer matrix of the XXZ spin chain has free parameters which we call 
the inhomogeneity parameters $w_j$ for $j=1, 2, \ldots, L$.

Recently, a systematic method for evaluating the form factors 
and correlation functions of the integrable higher-spin 
XXZ spin chain has been proposed by applying the fusion method 
\cite{DM1,DM2,DM3}. However, the proposed method was not completely correct 
\cite{DM4}. 
In the paper, we revise the previous method for evaluating 
the spin-$\ell/2$ form factors, and formulate systematic formulas 
by which we can express any given spin-$\ell/2$ form factor 
in terms of a sum over the scalar products of the spin-1/2 operators. 
We show the derivation of the revised method, explicitly. 
As an application of of the fusion method we derive 
a concise multiple-integral representation for 
the spin-$s$ XXZ correlation function of an arbitrary 
entry in a region of the gapless regime, 
which is now expressed in terms of a sum of the multiple integrals.  
Here we remark that we call 
an arbitrary  matrix element of a local spin-$s$ operator 
between two Bethe eigenvectors 
a form factor of the operator in the paper, 
following Refs. \cite{Slavnov,KMT1999}.   

Let us review the fusion method for evaluating 
the higher-spin form factors briefly, and point out where it was wrong. 
We consider the integrable spin-$\ell/2$ XXZ spin chain 
for an integer $\ell$ with $\ell > 1$. 
In the fusion method we construct the spin-$\ell/2$ XXZ transfer matrix 
in the following two steps: 
we first construct the spin-1/2 transfer matrix with $w_j=w_j^{(\ell)}$ 
from the product of the spin-1/2 $R$-matrices with 
their rapidities shifted by the inhomogeneity parameters $w_j$ 
which are given by the $N_s$ pieces of the complete $\ell$-strings 
$w_j^{(\ell)}$ such as 
$w_j=w_j^{(\ell)}$ for $j=1, 2, \ldots, L$ (see \S 2.4); 
we then multiply the product with the spin-$\ell/2$ projection operators.    
Here, the spin-$\ell/2$ chain with $N_s$ sites 
is defined on the spin-1/2 chain with $L$ sites, where $L=\ell N_s$.

When we evaluate the spin-$\ell/2$ form factors with the fusion method, 
we reduce each of the spin-$\ell/2$ operators 
into a sum of products of the local spin-1/2 operators,  
which we want to express in terms of the operator-valued matrix elements 
of the spin-1/2 monodromy matrix 
through the formula of the quantum inverse-scattering problem (QISP).  
We then want to calcuate the expectaton value or the matrix elements 
of the sums of products of the local spin-1/2 operators 
with respect to the Bethe states 
by making use of Slavnov's formula of scalar products 
for the spin-1/2 Bethe-ansatz operators.   
Here we recall that the inhomogeneity parameters $w_j$ 
of the spin-1/2 monodromy matrix 
are given by the $N_s$ pieces of the complete $\ell$-strings, 
$w_j=w_j^{(\ell)}$, for $j=1, 2, \ldots, L$. 

However, the QISP formula does not hold, 
if one of the transfer matrices appearing in it is nonregular.  
Here we remark that it has the product of the inverse operators 
of the transfer matrices where the spectral parameters $\lambda$ are 
given by some of inhomogeneity parameters $w_j$.   
In fact, we can show that the spin-1/2 transfer matrix 
with $w_j=w_j^{(\ell)}$ is non-regular at 
$\lambda=w_{\ell(k-1)+1}^{(\ell)}$, 
the first rapidity of the $k$th complete $\ell$-string for 
an integer $k$ of $1 \le k \le N_s$ (see \S 3.6). 
Consequently, the QISP formula does not hold in the straightforward form 
for the fusion method. 
Thus, we want to avoid such special values of the spectral parameter 
when we evaluate the matrix elements or expectation values 
of the higher-spin local operators through the fusion method. 
In the revised method 
we avoid directly putting the complete $\ell$-strings $w_j^{(\ell)}$ 
into the inhomogeneity parameters $w_j$, as we shall see later.  

The main result of the present paper should have several interesting 
applications in the mathematical physics of exactly solvable models. 
For instance, with the revised method we can evaluate 
the form factors of various solvable quantum spin chains 
associated with the affine quantum group \cite{DMo1}. 
Then, for the spin-$s$ XXZ spin chain 
we can derive the multiple-integral representation of correlation functions 
through the revised method, as we have mentioned in the above. 
Moreover, it should be an interesting problem to calculate some form factors 
and matrix elements of the local spin operators 
for the superintegrable chiral Potts chains through the present method 
\cite{ND-CP2,SCP}.  
Furthermore, there is another interesting possible application. 
We can construct integrable quantum impurity models such as 
consisting of one spin-$S$ site with $N$ spin-1/2 sites 
through the fusion method. We can then calculate the form factors 
for the spin-$S$ site and the correlation functions between the spin-$S$ site 
and the other spin-1/2 sites, by  applying the method in the present paper. 
These interesting topics should be discussed in separate papers.

The paper consists of the following. In section 2 we introduce 
the finite-dimensional representations of the quantum group and 
construct the monodromy matrix of the integrable higher-spin XXZ spin chain 
through the fusion method. We then introduce 
the complete $\ell$-strings, $w_j^{(\ell)}$, where  
the $k$th complete $\ell$-string $w_{\ell(k-1)+\alpha}^{(\ell)}$ 
for $1 \le \alpha \le \ell$ is given by the sequence of $\ell$ complex numbers 
shifted by $\eta$ successively, such as 
$\xi_k, \xi_k-\eta, \ldots, \xi_k-(\ell-1)\eta$.   
In section 3, we define the higher-spin elementary operators 
$E^{i, \, j}$, 
which have only one nonzero matrix element of entry $(i,j)$ 
with respect to the basis vectors and their conjugate vectors. Then,  
we explicitly derive a formula (Proposition \ref{prop:E-w}), 
by which we can reduce any given product 
of the higher-spin elementary operators 
into a sum of products of the spin-1/2 elementary operators. 
In order to prove it we derive two expressions for a 
given product of the spin-1/2 elementary operators. 
Interestingly, the overall factor for the form factor 
of a higher-spin operator depends on whether the operator 
is associated with principal grading or homogeneous grading.  
In order to make the form factors independent of the grading $w$   
we define the general spin-$\ell/2$ elementary operators 
$\widehat{E}^{i, \, j \,  (\ell \, w)}$. 
We show in \S 3.6 that the spin-1/2 transfer matrix 
with $w_j=w_j^{(\ell)}$ is non-regular 
at $\lambda=w_{\ell(k-1)+1}^{(\ell)}=\xi_k$.  
In section 4, we reduce the form factor of a product of  
the spin-$\ell/2$ elementary operators  
into those of the spin-1/2 elementary operators 
(Proposition \ref{prop:<gen-E(w)>-e}).  
We introduce the {\it almost complete $\ell$-strings}, 
$w_j^{(\ell; \, \epsilon)}$ ($1 \le j \le L)$, 
a set of inhomogeneity parameters that are slightly different from 
the complete $\ell$-strings $w_j^{(\ell)}$ by the order of 
a small parameter $\epsilon$.  
We reduce the spin-$\ell/2$ form factors into a sum of 
the spin-1/2 scalar products 
by making use of the QISP formula with inhomogeneity parameters given by 
the almost complete $\ell$-strings and by sending $\epsilon$ to 0 
\cite{Assumption}. 
We can revise the expressions of the higher-spin form factors 
given in Ref. \cite{DM1} making use of Proposition 4.5 \cite{ER1}. 
In section 5, we show an explicit formula by which we can calculate 
every spin-$\ell/2$ form factor in terms of the scalar products 
of the spin-1/2 operators. Finally, in section 6, we express  
the correlation function of an arbitrary entry for 
 the integrable spin-$s$ XXZ spin chain 
in terms of a sum of the multiple integrals. 
Moreover, the normalization factors 
are systematically shown for the general spin-$\ell/2$ elementary operators 
$\widehat{E}^{i, \, j \,  (\ell \, w)}$.  
Here, the expression of the correlation functions 
is different from 
that of Ref. \cite{DM2} mainly 
with respect to the sum over the multiple integrals 
\cite{ER2}.  Due to the spin inversion symmetry, 
however, the spin-1 one point functions 
are expressed in terms of single multiple integrals, 
which are the same with those of Ref. \cite{DM2},  
.

%
%
\setcounter{equation}{0} 
 \renewcommand{\theequation}{2.\arabic{equation}}
\section{Affine quantum group and the monodromy matrix 
}

%
%
\subsection{Spin-$\ell/2$ representations of the quantum group $U_q(sl_2)$}

The quantum algebra $U_q(sl_2)$ 
is an associative algebra over ${\bf C}$ generated by  
$X^{\pm}, K^{\pm}$  with the following relations: 
\cite{Jimbo-QG,Jimbo-Hecke,Drinfeld}
\begin{eqnarray} 
K X^{\pm} K^{-1}  & = & q^{\pm 2} X^{\pm} \, ,  \quad 
K K^{-1} = K^{-1} K  = 1 \, , \nonumber \\
{[} X^{+}, X^{-} {]} & = &  
{\frac   {K - K^{-1}}  {q- q^{-1}} } \, . 
\end{eqnarray}
The algebra $U_q(sl_2)$ is also a Hopf algebra over ${\bf C}$ 
with comultiplication 
\begin{eqnarray} 
\Delta (X^{+}) & = & X^{+} \otimes 1 + K \otimes X^{+}  \, , 
 \quad 
\Delta (X^{-})  =  X^{-} \otimes K^{-1} + 1 \otimes X^{-} \, ,  \nonumber \\
\Delta(K) & = & K \otimes K  \, , 
\end{eqnarray} 
and antipode:  
$S(K)=K^{-1} \, , S(X^{+})= - K^{-1} X^{+} \, , S(X^{-}) = -  X^{-} K$, and   
coproduct: $\epsilon(X^{\pm})=0$ and $\epsilon(K)=1$.


Let us introduce some symbols of $q$-analogues. 
For any given integer $n$ we define the $q$-integer of $n$ 
by $[n]_q= (q^n-q^{-n})/(q-q^{-1})$, and the $q$-factorial of $n$ by 
\begin{equation} 
[n]_q ! = [n]_q \, [n-1]_q \, \cdots \, [1]_q \, .  
\end{equation}
For integers $m$ and $n$ satisfying $m \ge n \ge 0$ 
we define the $q$-binomial coefficients as follows
\begin{equation} 
\left[ 
\begin{array}{c} 
m \\ 
n 
\end{array}  
 \right]_q 
= {\frac {[m]_q !} {[m-n]_q ! \, [n]_q !}}  \, . 
\end{equation}

We first formulate the spin-1/2 representation $V^{(1)}$.  
Let $| \alpha \rangle$ ($\alpha=0, 1$) be the basis vectors. 
We have  $X^{-} | 0 \rangle = |1 \rangle $, $X^{-} | 1 \rangle = 0$, 
$X^{+} | 0 \rangle = 0$, $X^{+} | 1 \rangle = |0 \rangle$, 
and $K | \alpha \rangle = q^{1 - 2\alpha} | \alpha \rangle$ for 
$\alpha=0, 1$.  

We now construct the spin-$\ell/2$ representation $V^{(\ell)}$ of $U_q(sl_2)$ 
for a non-negative integer $\ell$,  
and in the tensor product space $(V^{(1)})^{\otimes \ell}$ of 
the spin-1/2 representations $V^{(1)}$. 
Let us  introduce the basis vectors of $V^{(\ell)}$, 
$|| \ell, n  \rangle$, for $n=0, 1, \ldots, \ell$.  
First, we define the highest weight vector $||\ell, 0 \rangle$ by 
\begin{equation} 
||\ell , 0 \rangle = |0 \rangle_1 \otimes |0 \rangle_2 \otimes 
\cdots \otimes |0 \rangle_\ell \, . 
\end{equation}   
Here  $|\alpha \rangle_j$ for $\alpha=0, 1$ 
denote the basis vectors of the spin-1/2 representation defined 
on the $j$th component of the tensor product $(V^{(1)})^{\otimes \ell}$. 
We remark that 0 and 1 corresponds to up-spin, $\uparrow$, 
and down-spin, $\downarrow$, respectively.  We also remark that  
$K ||\ell, 0 \rangle = q^{\ell} || \ell, 0 \rangle$. 
We define $|| \ell, n \rangle$ for $n \ge 1$ by 
\be 
|| \ell, n \rangle = \left( \Delta^{(\ell-1)} (X^{-}) \right)^n ||\ell, 0 \rangle \,  {\frac 1 {[n]_q!}} \, . 
\ee
We then have the following: \cite{DM1}    
\begin{equation} 
|| \ell, i \rangle  = 
\sum_{1 \le a(1) < \cdots < a(i) \le \ell} 
\sigma_{a(1)}^{-} \cdots \sigma_{a(i)}^{-} ||\ell,  0 \rangle \, 
q^{a(1)+ a(2) + \cdots + a(i)  - i \ell + i(i-1)/2} \quad \mbox{for} 
\, \, i=0, 1, \ldots, \ell.  
\label{eq:|ell,i>} 
\end{equation}
Here the sum is taken over all such integers $a(1), a(2), \ldots, a(i)$ 
that satisfy $1 \le a(1) < \cdots < a(i) \le \ell$. We denote by 
$\sigma_j^{\pm}$ the Pauli matrices acting on the $j$th site.

We denote by $F(\ell, n)$ the ``square length''s of vectors $|| \ell, n \rangle$ as follows.    
\be 
F(\ell, n) =  \left( || \ell, n \rangle \right)^{T} \cdot || \ell, n \rangle . 
\ee
Here the superscript $T$ denotes the matrix transposition. 
Setting $\left( || \ell, 0 \rangle \right)^{T} \cdot || \ell, 0 \rangle =1$, 
we have 
\be 
F(\ell, n) = \left[
\begin{array}{c}
 \ell \\
  n 
\end{array} 
\right]_q \, q^{-n(\ell-n)} \, . \label{eq:normalization}
\ee
We thus define conjugate vectors $\langle \ell, j||$ by 
\be 
\langle \ell, j || = \left( || \ell, j \rangle \right)^{T} /F(\ell, j) \quad 
\mbox{for} \, \, j=0, 1, \ldots, \ell. 
\ee
Explicitly we have the following:  
\begin{equation} 
\langle \ell, j || =  
\left[ 
\begin{array}{c} 
\ell \\ 
j 
\end{array}  
 \right]_q^{-1} \, q^{j(\ell-j)} \, 
\sum_{1 \le b(1) < \cdots < b(j) \le \ell} 
\langle \ell, 0 || \sigma_{b(1)}^{+} \cdots \sigma_{b(j)}^{+} \, 
q^{b(1) + b(2) + \cdots + b(j) - j \ell + j(j-1)/2}   \, . 
\label{eq:<ell,n|}
\end{equation}

It is easy to show the normalization factor (\ref{eq:normalization}) 
by making use of the following lemma:  
\begin{lemma} For an integer $n$ with $0 < n \le \ell$ we have 
\be 
\sum_{1 \le a(1) < \cdots < a(n) \le \ell} 
q^{2a(1) + \cdots + 2 a(n)} = q^{n (\ell+1)} \, 
\left[
\begin{array}{c}
\ell \\
n
\end{array}
\right]_q \, . \label{eq:sum-q-factors}
\ee
\end{lemma}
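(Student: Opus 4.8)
The plan is to prove (\ref{eq:sum-q-factors}) by induction on $\ell$, reducing it to a recursion for the sum on the left and to a $q$-Pascal identity for the Gaussian binomial coefficient on the right.

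First I would denote the left-hand side of (\ref{eq:sum-q-factors}) by $S(\ell,n)$ and split the sum according to whether the largest index $a(n)$ equals $\ell$ or not. The configurations with $a(n)\le\ell-1$ contribute exactly $S(\ell-1,n)$, while those with $a(n)=\ell$ contribute $q^{2\ell}\,S(\ell-1,n-1)$, since deleting the index $\ell$ leaves a strictly increasing sequence $1\le a(1)<\cdots<a(n-1)\le\ell-1$. This yields the recursion
\be
S(\ell,n) = S(\ell-1,n) + q^{2\ell}\,S(\ell-1,n-1)\, ,
\ee
valid for $1\le n\le\ell$, together with the boundary values $S(\ell,0)=1$ and $S(\ell,n)=0$ for $n>\ell$. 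The base case of the induction is $\ell=n$, where $S(n,n)=q^{2(1+2+\cdots+n)}=q^{n(n+1)}$, which agrees with the claimed formula since the $q$-binomial coefficient $\left[\begin{array}{c} n \\ n \end{array}\right]_q$ equals $1$.

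Next I would record the elementary identity $[\ell-n]_q+q^{\ell}[n]_q=q^{n}[\ell]_q$, which follows at once from $[m]_q=(q^m-q^{-m})/(q-q^{-1})$ after clearing the common denominator. Dividing through by $[n]_q!\,[\ell-n]_q!$ turns it into the $q$-Pascal relation
\be
q^{n}\left[\begin{array}{c} \ell \\ n \end{array}\right]_q = \left[\begin{array}{c} \ell-1 \\ n \end{array}\right]_q + q^{\ell}\left[\begin{array}{c} \ell-1 \\ n-1 \end{array}\right]_q \, .
\ee
Assuming (\ref{eq:sum-q-factors}) for $\ell-1$ (for all admissible $n$, with the convention that the case $n=0$ reads $S(\ell-1,0)=1$), the recursion above gives $S(\ell,n)=q^{n\ell}\left[\begin{array}{c} \ell-1 \\ n \end{array}\right]_q+q^{(n+1)\ell}\left[\begin{array}{c} \ell-1 \\ n-1 \end{array}\right]_q = q^{n\ell}\left(\left[\begin{array}{c} \ell-1 \\ n \end{array}\right]_q+q^{\ell}\left[\begin{array}{c} \ell-1 \\ n-1 \end{array}\right]_q\right)$, and the $q$-Pascal relation collapses the parenthesis to $q^{n}\left[\begin{array}{c} \ell \\ n \end{array}\right]_q$, so that $S(\ell,n)=q^{n(\ell+1)}\left[\begin{array}{c} \ell \\ n \end{array}\right]_q$, closing the induction.

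I do not expect a genuine obstacle here; the only step that needs care is the bookkeeping of the powers of $q$, since the paper uses the symmetric $q$-integers $[m]_q=(q^m-q^{-m})/(q-q^{-1})$, so one must use exactly the version of the $q$-Pascal identity displayed above and not its mirror image obtained by $q\to q^{-1}$. As a cross-check, or as an alternative self-contained argument, one may observe that the left-hand side of (\ref{eq:sum-q-factors}) is the elementary symmetric polynomial $e_n(q^2,q^4,\ldots,q^{2\ell})=q^{2n}\,e_n(1,q^2,\ldots,q^{2(\ell-1)})$ and read off $e_n(1,q^2,\ldots,q^{2(\ell-1)})$ as the coefficient of $x^n$ in the finite product $\prod_{i=0}^{\ell-1}(1+xq^{2i})$ via the Gauss $q$-binomial theorem; converting to the symmetric convention reproduces the overall factor $q^{n(\ell+1)}$.
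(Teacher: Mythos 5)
Your proof is correct, and it takes a genuinely different route from the paper's. The paper simply quotes the $q$-binomial theorem $\prod_{k=1}^{\ell}(1-zq^{2k})=\sum_{n}(-1)^n z^n q^{n(n+1)}\left[\begin{array}{c}\ell\\ n\end{array}\right]_q$ and expands the product in $z$ to identify the left-hand side of (\ref{eq:sum-q-factors}) as the coefficient of $(-z)^n$ — essentially the ``cross-check'' you mention in your last paragraph. You instead prove the identity from scratch by induction on $\ell$, via the recursion $S(\ell,n)=S(\ell-1,n)+q^{2\ell}S(\ell-1,n-1)$ and the symmetric $q$-Pascal identity $q^{n}\left[\begin{array}{c}\ell\\ n\end{array}\right]_q=\left[\begin{array}{c}\ell-1\\ n\end{array}\right]_q+q^{\ell}\left[\begin{array}{c}\ell-1\\ n-1\end{array}\right]_q$, both of which you verify correctly in the conventions of the paper. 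Your version is self-contained and, notably, keeps the symmetric normalization $[m]_q=(q^m-q^{-m})/(q-q^{-1})$ explicit throughout; the paper's shortcut is faster but relies on quoting the $q$-binomial theorem in a form whose exponent must be matched carefully to the symmetric convention (as written in (\ref{eq:q-binomial}), the exponent $q^{n(n+1)}$ is the one appropriate to the non-symmetric $q^2$-binomial coefficient, and consistency with the symmetric $\left[\begin{array}{c}\ell\\ n\end{array}\right]_q$ requires $q^{n(\ell+1)}$ instead — precisely the bookkeeping pitfall you flag). The only cosmetic point in your write-up is that the ``base case $\ell=n$'' really means that for each fixed $n$ the induction on $\ell$ starts at $\ell=n$, so the recursion is only invoked for $1\le n\le\ell-1$, where both terms on the right are covered by the inductive hypothesis; as you have set it up, this is sound.
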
 
\begin{proof}
Let us consider the $q$-binomial theorem:
\be 
\prod_{k=1}^{\ell}(1 - zq^{2k}) = \sum_{n=o}^{\ell}
 (-1)^n z^n q^{n(n+1)} \, 
\left[
\begin{array}{c}
\ell \\
n
\end{array}
\right]_q \, . \label{eq:q-binomial}
\ee
Expanding the left hand side of (\ref{eq:q-binomial})
with respect to $z$ we have  
\be 
\prod_{k=1}^{\ell}(1 - zq^{2k})
= \sum_{n=0}^{\ell} (-z)^n \sum_{1 \le a(1) < \cdots < a(n) \le \ell} 
q^{2 a(1) + \cdots + 2 a(n)} \, .  
\ee
Hence we have (\ref{eq:sum-q-factors}). 
\end{proof} 

We remark that when $q$ is complex and not real,
the conjugate vector $\langle \ell, j ||$ is different from 
the Hermitian conjugate of a given vector $|| \ell, j \rangle$.  
Thus, the pairing of $\langle \ell, j||$ and $|| \ell, k \rangle$ 
does not give a standard scalar product if $q$ is complex and not real. 
However, the Hermitian conjugate of a vector $|| \ell, j \rangle$ is not 
covariant with respect to the quantum group $U_q(sl_2)$ 
if $q$ is complex and not real, while the 
transposed vector is covariant.   
Therefore, we express the transposed vector 
as the conjugate vector $\langle \ell, j ||$.


We now introduce the projection operator which maps the tensor product 
of the spin-1/2 representations $(V^{(1)})^{\otimes \ell}$ 
to the spin-$\ell$ representation $V^{(\ell)}$. 
In terms of the basis vectors and their conjugate vectors we define it 
by  
\be 
P^{(\ell)}= \sum_{n=0}^{\ell} || \ell, n \rangle \langle \ell, n ||.  
\label{eq:proj}
\ee
We shall denote it also 
by $P^{(\ell)}_{1 \cdots \ell}$, since it is defined on the 
tensor product $(V^{(1)})^{\otimes \ell}$.

%
%
\subsection{Operators in the tensor product space}

Let us consider the tensor product 
$V_1^{(\ell)} \otimes \cdots \otimes V_{N_s}^{(\ell)}$ 
of the $(\ell+1)$-dimensional vector spaces $V^{(\ell)}_j$ 
with parameter $\lambda_j$ 
for $j=1, 2 , \ldots, N_s$. Here we assume  $L= \ell N_s$. 
We call the tensor product 
$V_1^{(\ell)} \otimes \cdots \otimes V_{N_s}^{(\ell)}$ 
 the quantum space. 
In the most general cases, 
we consider the tensor product 
of the auxiliary space $V_0^{(2s_0)}$  and the quantum space 
$\left( V_1^{(2s_1)} \otimes \cdots V_{r}^{(2s_{r})} \right)$ 
such as 
$V_0^{(2s_0)} \otimes \left( V_1^{(2s_1)} \otimes \cdots 
\otimes V_{r}^{(2s_{r})} \right)$ with $2s_1 + \cdots + 2s_r = L$,  
where $V_{j}^{(2s_j)}$ have spectral parameters 
$\lambda_j$ for $j=1, 2, \ldots, r$. Here 
$s_j$ are given by integers or half-integers for $j=0, 1, 2, \ldots, r$.      

We denote by $e^{a, \, b}$ such two-by-two matrices 
that have only one nonzero element equal to 1 
at the entry of $(a, b)$ for $a, b= 0, 1$. 
We also express by $E^{a, \, b \, (2s)}$ the 
$(2s+1)$-by-$(2s+1)$ matrices with 
unique nonzero element 1 at the entry of $(a, b)$ 
for $a, b= 0, 1, \ldots, 2s$. If it is defined 
on the $j$th component of the quantum space,
 we denote it by $E^{a, \, b \, (2s)}_j$.  
For a given set of matrix elements 
${\cal A}^{a, \, \alpha}_{b, \, \beta}$  for $a,b=0, 1, \ldots, 2s_j$ and  
$\alpha, \beta=0, 1, \ldots, 2s_k$,  
we define operators ${\cal A}_{j, k}$ by 
\bea 
{\cal A}_{j, k} & = & \sum_{a,b=1}^{2s_j} \sum_{\alpha, \beta=1}^{2s_k} 
{\cal A}^{a, \, \alpha}_{b, \, \beta} I^{(2s_0)} \otimes
I^{(2s_1)} \otimes \cdots \otimes I^{(2s_{j-1})}  \non \\ 
& & \quad \otimes E^{a, \, b \, (2s_j)} \otimes 
I^{(2s_{j+1})}  \otimes \cdots \otimes I^{(2s_{k-1})}  
\otimes E^{\alpha, \, \beta \, (2s_k)} \otimes I^{(2s_{k+1})} 
 \otimes \cdots \otimes I^{(2s_r)} \, .   
\label{eq:defAjk}
\eea
Similarly, for a set of matrix elements 
${\cal B}^{a}_{b}$  for $a, b=0, 1, \ldots, 2s_j$,  
we define operators ${\cal B}_{j}$ by 
\be 
{\cal B}_{j} = \sum_{a,b=1}^{2s_j} 
{\cal B}^{a}_{b} I^{(2s_0)} \otimes
I^{(2s_1)} \otimes \cdots \otimes I^{(2s_{j-1})} 
\otimes E^{a, \, b \, (2s_j)} \otimes I^{(2s_{j+1})} 
\otimes  \cdots \otimes I^{(2s_r)} \, .   
\label{eq:defBj}
\ee

%
%
\subsection{$R$-matrix and the monodromy matrix of types $(1, 1^{\otimes L})$}

Let us introduce the $R$-matrix of the XXZ spin chain 
\cite{Korepin,MS2000,KMT1999,KMT2000}. 
Let $V_1$ and $V_2$ be two-dimensional vector spaces. 
The $R$-matrix acting on $V_1 \otimes V_2$ associated with homogeneous grading 
of type $w=+$ is given by  
\be 
{R}^{(1 \, +)}_{1 2} (\lambda_1-\lambda_2) = \sum_{a,b,c,d=0,1} 
R^{(1 \, +)}(u)^{a \, b}_{c \, d} \, 
\, e_1^{a, \, c} \otimes e_2^{b, \, d} = 
\left( 
\begin{array}{cccc}
1 & 0 & 0 & 0 \\ 
0 & b(u) & c^{-}(u) & 0 \\
0 & c^{+}(u) & b(u) & 0 \\
0 & 0 & 0 & 1 \\
\end{array} 
\right)_{[1,2]} \, \label{eq:R+},  
\ee
where $u=\lambda_1-\lambda_2$, 
$b(u) = \sinh u/\sinh(u + \eta)$ and 
$c^{\pm}(u) = \exp( \pm u) \sinh \eta/\sinh(u + \eta)$. 
Here, the suffix $[1,2]$ 
in eq. (\ref{eq:R+}) denotes that the matrix acts on 
the tensor product of $V_1$ and $V_2$. 

We remark that the $R$-matrix associated with homogeneous grading 
of type $w=-$, ${R}^{(1 \, -)}_{1 2} (\lambda_1-\lambda_2)$,
 is given by exchanging all the $\pm$ signs in (\ref{eq:R+}) 
\cite{DM1,DM2}. 

In the massless regime, we set $\eta= i \zeta$ by a real number  
$\zeta$,   
and we have $\Delta= \cos \zeta$. In the paper we mainly consider 
the region $0 \le \zeta < \pi/2s$ for the correlation functions.  
In the massive regime, we assign $\eta$ a real nonzero number 
and we have $\Delta = \cosh \eta > 1$.

We denote by $R^{(1 \,  p)}(u)$ or simply by $R(u)$ the symmetric $R$-matrix 
where $c^{\pm}(u)$ of (\ref{eq:R+}) are replaced by 
$c(u)= \sinh \eta/\sinh(u+\eta)$ \cite{DM1}. The symmetric $R$-matrix 
is compatible with principal grading  of 
the affine quantum group $U_q(\widehat{sl}_2)$ \cite{DM1}.

Let us now consider the $(L+1)$th tensor product 
of the spin-1/2 representations, 
which consists of the tensor product of the auxiliary space $V_0^{(1)}$ and 
the quantum space which is given by the $L$th tensor product of $V_j^{(1)}$ 
for $j=1, 2, \ldots, L$; i.e., 
$V_0^{(1)} \otimes \left( V_1^{(1)} \otimes \cdots \otimes V_L^{(1)} \right)$. 
We call the tensor product that of type $(1,1^{\otimes L})$.

Let $\{ w_j \}_L$ denote the set of free parameters 
$w_{j}$ for $j=1 , 2, \ldots, L$. We call them inhomogeneity parameters. 
We define the spin-1/2 XXZ monodromy matrices 
associated with grading of types $w= \pm, p$ by 
\be 
T^{(1, \, 1 \, w)}_{0, \, 1 2 \ldots L}(\lambda; \{ w_j \}_L)=
R^{(1 \, w)}_{0, 1 2 \cdots L} 
 = R^{(1 \, w)}_{0 L}(\lambda-w_L) \cdots R^{(1 \, w)}_{0 1}(\lambda-w_1) \, . 
\ee
Here $R^{(1 \, w)}_{jk}$ denote 
the $R$-matrices associated with grading of types $w = \pm$ and $p$  
with inhomogeneity parameters $\{w_j \}_L$.

%
%
\subsection{Rapidities forming $n$-strings}

We introduce a set of rapidities for a positive integer $n$, 
which we call a complete $n$-string.  
It is given by a set of $n$ rapidities of the following form: 
\be 
\lambda_j = \Lambda + (n - 2j +1) \eta/2 \, , \quad \mbox{for} \, \, 
j=1, 2, \ldots, n. 
\ee
Here we call parameter $\Lambda$ the center of the complete $n$-string. 

Let $\epsilon$ be an infinitesimally small number; i.e., we have 
$| \epsilon| \ll 1$. We take generic parameters $r_b$ for $b=1, 2, \ldots, n$. 
We define an ``almost complete $n$-string'' by 
the following set of $n$ rapidities 
\be 
\lambda_j = \Lambda + (n - 2j +1) \eta/2 + \epsilon r_b 
\, , \quad \mbox{for} \, \, 
j=1, 2, \ldots, n. 
\ee 
They are different from the complete $n$-string with center $\Lambda$ 
by the small numbers $\epsilon r_b$.  

We introduce  $N_s$ sets of almost complete $\ell$-strings 
 $w_j^{(\ell; \, \epsilon)}$ for $1 \le j \le L$ as follows. 
\be 
w_j^{(\ell; \, \epsilon)} = \xi_b - (\beta-1) \eta + \epsilon r_b^{(\beta)} 
\quad \mbox{ for} \, \, \beta= 1, 2, \ldots, \ell; \, b=1, 2, \ldots, N_s .  
\label{eq:complete-string}
\ee
Here we assume that parameters $r_b^{(\beta)}$ are generic. 
We recall $\ell N_s = L$.

When $\epsilon=0$ the set of inhomogeneity parameters 
$w_j^{(\ell; \, \epsilon)}$ gives $N_s$ pieces of complete $\ell$-strings. 
We denote them by $w_j^{(\ell)}$; i.e., $w_j^{(\ell)}=w_j^{(\ell; \, 0)}$ 
for $j=1, 2, \ldots, L$. 
We shall put $w_j^{(\ell; \, \epsilon)}$ 
and  $w_j^{(\ell)}$ into the homogeneous parameters 
$w_j$ ($1 \le j \le L$) of the spin-1/2 transfer matrix defined 
on the $L$ sites, later.

%
%
\subsection{Monodromy matrix of type  $(1, \ell^{\otimes N_s})$ }

We shall now construct 
the spin-$\ell/2$ monodromy matrix on the spin-$\ell/2$ chain 
with $N_s$ sites. We define $L$ by $L=\ell N_s$. We consider 
the tensor product of the spin-1/2 auxiliary space $V^{(1)}$ and 
the $N_s$th tensor product $(V^{(\ell)})^{\otimes N_s}$ 
of quantum spaces $V^{(\ell)}$; i.e.  
$V^{(1)} \otimes (V^{(\ell)})^{\otimes N_s}$.  
Here we construct $(V^{(\ell)})^{\otimes N_s}$ 
in $(V^{(1)})^{\otimes L}$.

In the fusion construction of the higher-spin monodromy matrices  
it is useful to 
introduce the following symbols for the spin-1/2 monodromy matrices:  
\be 
T^{(1, \, \ell \, w; \, \epsilon)}_{0, \, 1 2 \cdots L}(\lambda) 
= T^{(1, \, 1 \, w)}(\lambda; \{w_j^{(\ell; \, \epsilon)} \}_L) \, , \quad  
\mbox{for} \, \, w= \pm, p,   
\ee
where we put $w_j=w_j^{(\ell; \, \epsilon)}$ for $j=1, 2, \ldots, L$.  
It is nothing but the spin-1/2 monodromy matrix $T^{(1, \, 1 \, w)}(\lambda)$ 
with special inhomogeneity parameters.  

We introduce the following symbol: 
\be 
T^{(1, \, \ell \, +; \, 0)}(\lambda) 
= \lim_{\epsilon \rightarrow 0} T^{(1, \, \ell \, + ; \, \epsilon)}(\lambda) . 
\ee 
We thus express 
the (0,1)-element of the spin-1/2 monodromy matrix 
$T^{(1, \, \ell \, +; \, 0)}(\lambda)$ as 
\be 
B^{(\ell \, +; \, 0)}(\lambda) = 
\left(T^{(1, \, \ell \, + ; \, 0)}(\lambda) \right)_{0, \, 1} .  
\ee

Let us denote by $P^{(\ell)}_{\ell (b-1)+1}$ 
the projection operator which maps 
the tensor product of the spin-1/2 representations 
$V_{\ell (b-1)+1}^{(1)} \otimes \cdots \otimes V_{\ell (b-1)+\ell}^{(1)}$ 
to the $b$th component of the $N_s$th tensor product 
$(V^{(\ell)})^{\otimes N_s}$.  
Here $b$ is an integer satisfying $1 \le b \le N_s$, and 
the tensor product  
$V_{\ell (b-1)+1}^{(1)} \otimes \cdots \otimes V_{\ell (b-1)+\ell}^{(1)}$ 
corresponds to the $\ell (b-1)+1$th to $\ell (b-1)+\ell$th 
components of the $L$th tensor product $(V^{(1)})^{\otimes L}$.  
We define $P^{(\ell)}_{1 \cdots L}$ by 
\be 
P^{(\ell)}_{1 \cdots L}= \prod_{b=1}^{N_s} 
P^{(\ell)}_{\ell (b-1)+1} \, . 
\ee 
We construct the spin-$\ell/2$ monodromy matrix 
$T^{(1, \, \ell\, +)}_{0, \, 1 2 \cdots N_s}(\lambda)$ 
associated with homogeneous grading 
by applying the projection operator $P^{(\ell)}_{1 \cdots L}=
\prod_{b=1}^{N_s} P^{(\ell)}_{\ell (b-1)+1}$ as follows 
\cite{DM1}. 
\be 
T^{(1, \, \ell \, +)}_{0, \, 1 2 \cdots N_s}
(\lambda; \{ \xi_b \}_{N_s})= P^{(\ell)}_{1 \cdots L} 
T^{(1, \, \ell \, + ; \, 0)}_{0, \, 1 2 \cdots L}(\lambda) 
P^{(\ell)}_{1 \cdots L} \, . 
\ee

%
%
\subsection{Higher-spin monodromy matrix of type 
$(\ell, \, (2s)^{\otimes N_s})$ }

We set the inhomogeneity parameters $w_j$ for $j=1, 2, \ldots, L$,  
as $N_s$ sets of complete $2s$-strings \cite{DM1}. 
We define $w_{(b-1)\ell+ \beta}^{(2s)}$ for $\beta = 1, \ldots, 2s$, 
as follows.  
\be 
w_{2s(b-1)+ \beta}^{(2s)} = \xi_b - (\beta-1) \eta \, , \quad 
 \mbox{for} \quad b = 1, 2, \ldots, N_s . 
\label{eq:ell-strings}
\ee

We now define the monodromy matrix of type $(1, (2s)^{\otimes N_s})$ 
associated with homogeneous grading.  
We first recall the definition of the monodromy matrix as follows.   
\bea 
{T}^{(1, \, 2s \, +)}_{0, \, 1 2 \cdots N_s}
(\lambda_0; \{ \xi_b \}_{N_s} ) 
& = & {P}_{12 \cdots L}^{(2s)} 
R_{0, \, 1 \ldots L}^{(1, \, 1 \, +)} 
(\lambda_0; \{ w_{j}^{(2s)} \}_L) 
{P}_{12 \cdots L}^{(2s)}  \non \\  
& = & 
\left( 
\begin{array}{cc} 
{A}^{(2s +)}(\lambda; \{ \xi_b \}_{N_s}) & 
{B}^{(2s +)}(\lambda; \{ \xi_b \}_{N_s}) \\ 
{C}^{(2s +)}(\lambda; \{ \xi_b \}_{N_s}) & 
{D}^{(2s +)}(\lambda; \{ \xi_b \}_{N_s})  
\end{array} 
\right) \, . 
\eea

We shall now define the monodromy matrix of type 
$(\ell, \, (2s)^{\otimes N_s})$ associated with homogeneous grading. 
It acts on the tensor product of the auxiliary space $V_{a_1 \cdots a_{\ell}}$ 
and the quantum space $V_1^{(2s)} \otimes \cdots \otimes V_{N_s}^{(2s)}$.   
Let us express the tensor product  
$V_0^{(\ell)} \otimes \left( V_1^{(2s)} \otimes \cdots 
\otimes V_{N_s}^{(2s)} \right)$  by the following symbol   
\be 
(\ell, \, (2s)^{\otimes N_s}) 
= (\ell, \, \overbrace{2s, 2s, \ldots, 2s}^{N_s}) \, . 
\ee
Here we recall that $V_0^{(\ell)}$ abbreviates 
$V_{a_1 a_2 \ldots a_{\ell}}^{(\ell)}$.  
For the auxiliary space $V_0^{(\ell)}$ 
we define the monodromy matrix of type 
$(\ell, \, (2s)^{\otimes N_s})$ by 
\be 
{T}^{(\ell, \, 2s \, +)}_{0, \, 1 2 \cdots N_s} 
 =  {P}^{(\ell)}_{a_1 a_2 \cdots a_{\ell}} \,   
{T}_{a_1, \, 1 2 \cdots N_s}^{(1, \, 2s \, +)}(\lambda_{a_1}) 
{T}_{a_2, \, 1 2 \cdots N_s}^{(1, \, 2s \, +)}(\lambda_{a_1}-\eta) 
\cdots 
{T}_{a_{\ell}, \, 1 2 \cdots N_s}^{(1, \,  2s \, +)}
(\lambda_{a_1}-(\ell-1)\eta) \,  
{P^{(2s)}}_{a_1 a_2 \cdots a_{\ell}} \, . 
\ee 
Here we remark that it is associated with homogeneous grading.

%
%
\subsection{Gauge transformations}

Let us consider a two-by-two diagonal matrix 
$\Phi(w)={\rm diag}(1, \exp(w))$. 
In terms of the tensor-product notation (\ref{eq:defBj}) 
we introduce $\Phi_j(w)$ for $j=0, 1, \ldots, L$, 
in the tensor product $(V^{(1)})^{\otimes L}$ 
of the spin-1/2 representations $V^{(1)}$.  
We define the gauge transformation $\chi_{1 2 \cdots L}$ by
\be 
\chi_{1 2 \cdots L} = \Phi_1(w_1) \Phi_2(w_2) \cdots \Phi_{L}(w_L) \, . 
\ee
Here $w_j$ denote the inhomogeneity parameters 
of the spin-1/2 transfer matrix of the XXZ spin chain 
for $j=1, 2, \ldots, L$.  

In the $N_s$th tensor product 
of the spin-$\ell/2$ representations,  
$(V^{(\ell)})^{\otimes N_s}$,  
which we call the quantum space of the spin-$\ell/2$ XXZ spin chain, 
we now introduce the gauge transformation $\chi^{(\ell)}_{1 2 \cdots N_s}$
for the spin-$\ell/2$ monodromy matrix $T^{(1, \, \ell \, +)}(\lambda)$  
with inhomogeneity parameters $w_{\ell(k-1)}^{(\ell)}$ 
for $k=1, 2, \ldots, N_s$. Here we recall 
that they are given by the complete $\ell$-strings 
(\ref{eq:complete-string}) with parameters $\xi_b$ for $b=1, 2, \ldots, N_s$. 
We introduce the $(\ell+1)$-by-($\ell+1$) diagonal matrix $\Phi^{(\ell)}(w)$ by \be 
\Phi^{(\ell)}(w)  \, || \ell, n \rangle 
= \exp({n w}) \, || \ell, n \rangle  
\quad \mbox{for} \, \, n=0, 1, \ldots, \ell.  
\ee
We then define the spin-$\ell/2$ gauge transformation 
$\chi^{(\ell)}_{1 2 \cdots N_s}$ on 
the quantum space $(V^{(\ell)})^{\otimes N_s}$ by 
\be 
\chi^{(\ell)}_{1 2 \cdots N_s} = \Phi^{(\ell)}_1(\Lambda_1) 
\Phi^{(\ell)}_2(\Lambda_2) 
\cdots \Phi^{(\ell)}_{N_s}(\Lambda_{N_s}) \, ,   
\ee 
where $\Lambda_b$ denote the string centers as follows:  
\be
\Lambda_b= \xi_b -(\ell-1)\eta/2   \, , \quad \mbox{for} \, \, 
b=1, 2, \ldots, N_s . 
\ee
Considering the tensor product of the auxiliary space $V^{(1)}$ 
and the quantum space $(V^{(\ell)})^{\otimes N_s}$ we define   
the gauge transformation $\chi^{(1, \, \ell)}_{0, \, 1 2 \cdots N_s}$ 
on $V^{(\ell)} \otimes (V^{(\ell)})^{\otimes N_s}$ by 
\be 
\chi^{(1, \, \ell)}_{0, \, 1 2 \cdots N_s} 
= \Phi_0 \Phi^{(\ell)}_1 \cdots \Phi^{(\ell)}_{N_s} .  
\ee 
Similarly, we define $\chi_{0, \, 1 2 \cdots L}$ by  
$\chi_{0, \, 1 2 \cdots L} = \Phi_0(\lambda) \Phi_1(w_1) \cdots \Phi_{L}(w_L)$. 

%
%
\subsection{Spin-$\ell/2$ monodromy matrices associated with principal grading}

We now construct the higher-spin monodromy matrix 
of type $(1, (\ell)^{\otimes N_s})$ associated with principal grading. 
We denote it by $T^{(1, \, \ell \, p)}_{0, \, 1 2 \cdots N_s}(\lambda)$, which 
 acts on the quantum space $V_{1}^{(\ell)} \otimes \cdots \otimes V_{N_s}^{(\ell)}$. 
In the fusion construction we define it by 
\bea 
T^{(1, \, \ell \, p)}_{0, \, 1 2 \cdots N_s}(\lambda) & = & 
\left( \chi^{(1, \, \ell)}_{0, \, 1 2 \cdots N_s} \right)^{-1} 
T^{(1, \, \ell \, +)}_{0, \, 1 2 \cdots N_s}(\lambda) 
\left( \chi^{(1, \, \ell)}_{0, \, 1 2 \cdots N_s} \right)
\non \\ 
& = & 
\left( \chi^{(1, \, \ell)}_{0, \, 1 2 \cdots N_s} \right)^{-1} 
\, 
\left( P^{(\ell)}_{1 2 \cdots L} 
T^{(1, \, \ell \, +; \, 0)}_{0, \, 1 2 \cdots L}(\lambda)  
P^{(\ell)}_{1 2 \cdots L} \right) \, 
\chi^{(1, \, \ell)}_{0, \, 1 2 \cdots N_s} \, . 
\eea

Let us construct the higher-spin monodromy matrix 
of type $(\ell, (2s)^{\otimes N_s})$
associated with principal grading, which acts 
on the quantum space $V_{1}^{(2s)} \otimes \cdots \otimes V_{N_s}^{(2s)}$. 
From the higher-spin monodromy matrices associated with homogeneous grading 
we derive them through the inverse 
of the gauge transformation as follows 
\be 
T^{(\ell, \, 2s \, p)} 
= \left(\chi_{a_1 \cdots a_{\ell}, \, 
1 2 \ldots N_s}^{(\ell, \, 2s)} \right)^{-1} 
T^{(\ell, \, 2s \, +)}(\lambda) 
\left(\chi_{a_1 \cdots a_{\ell}, \, 1 2 \ldots N_s}^{(\ell, \, 2s)} \right) \, .  \ee
Here $\chi^{(\ell, \, 2s)}_{a_1 \cdots a_{\ell},  \, 1 2 \ldots N_s}$ 
denote the following: 
\be  
\chi^{(\ell, \, 2s)}_{a_1 \cdots a_{\ell}, \, 1 2 \ldots N_s}
= \Phi^{(\ell)}_{a_1 \cdots a_{\ell}}(\Lambda_0) 
\Phi^{(2s)}_1(\Lambda_1) \cdots 
\Phi^{(2s)}_{N_s}(\Lambda_{N_s}) \, ,  
\ee 
where $\Lambda_0$ denotes the string center, 
 $\Lambda_0=\lambda_{a_1} - (\ell-1) \eta/2$. 

Hereafter we shall denote $T^{(1, \, \ell, \, w)}(\lambda)$ 
by $T^{(\ell, \, w)}(\lambda)$, briefly.

%
%
\setcounter{equation}{0} 
\renewcommand{\theequation}{3.\arabic{equation}}
\section{Reduction of higher-spin elementary operators}
%
%
\subsection{Spin-$\ell/2$ elementary operators associated 
with homogeneous and principal gradings}

Let us consider the spin-$\ell/2$ representation $V^{(\ell)}$ constructed 
in the $\ell$th tensor product space $(V^{(1)})^{\otimes \ell}$. 
We define the spin-$\ell/2$ elementary operators associated 
with homogeneous grading, 
$E^{i, \, j \, (\ell \, +)}$, by 
\be 
E^{i, \, j \, (\ell \, +)} =  || \ell, i \rangle \langle \ell, j || 
\quad {\rm for} \, \, i,j = 0, 1, \ldots, \ell. 
\ee 
We define the spin-$\ell/2$ elementary matrices associated 
with principal grading, $E^{i, \, j \, (\ell \, p)}$, also by 
\be 
E^{i, \, j \, (\ell \, p)} =  || \ell, i \rangle \langle \ell, j || 
\quad {\rm for} \, \, i,j = 0, 1, \ldots, \ell. 
\ee 
In the paper we define it by the same operator 
as that of homogeneous grading. We have 
\be 
E^{i, \, j \, (\ell \, +)} = 
E^{i, \, j \, (\ell \, p)} =  || \ell, i \rangle \langle \ell, j || \, . 
\ee

Through (\ref{eq:proj}), which expresses the projection operator 
in terms of the basis vectors and conjugate vectors,  
we have the following:  
\begin{lemma} In the $(\ell+1)$-dimensional representation $V^{(\ell)}$ 
for the spin-$\ell/2$ elementary operators with grading of $w=\pm, p$ 
and the spin-$\ell/2$ projection operator we have 
\be 
 P^{(\ell)} E^{i, \, j \, (\ell \, w)} 
= E^{i, \, j \, (\ell \, w)} P^{(\ell)} 
= E^{i, j \, (\ell \, w)} \, . 
\ee
\label{lem:PE=EP=E}
\end{lemma}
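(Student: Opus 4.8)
The plan is to reduce the whole statement to one elementary fact: the biorthogonality of the basis $\{\,||\ell,n\rangle\,\}_{n=0}^{\ell}$ of $V^{(\ell)}$ and its conjugate basis $\{\,\langle\ell,n||\,\}_{n=0}^{\ell}$, namely that $\langle \ell, m || \cdot || \ell, n \rangle = \delta_{m,n}$ for $0 \le m, n \le \ell$. Granting this, the lemma follows by a one-line computation: using the definition (\ref{eq:proj}) of the projection operator and associativity of the bilinear pairing,
\be
P^{(\ell)}\, E^{i, j\,(\ell\, w)} = \sum_{n=0}^{\ell} || \ell, n \rangle \big( \langle \ell, n || \cdot || \ell, i \rangle \big) \langle \ell, j || = \sum_{n=0}^{\ell} \delta_{n,i}\, || \ell, n \rangle \langle \ell, j || = E^{i, j\,(\ell\, w)} ,
\ee
and symmetrically $E^{i, j\,(\ell\, w)}\, P^{(\ell)} = \sum_{n} || \ell, i \rangle \big( \langle \ell, j || \cdot || \ell, n \rangle \big) \langle \ell, n || = E^{i, j\,(\ell\, w)}$, now applying biorthogonality to the $j,n$ pairing. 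Since $E^{i,j\,(\ell\,+)}$ and $E^{i,j\,(\ell\,p)}$ are defined as the same operator $||\ell,i\rangle\langle\ell,j||$, this treats all the gradings $w=\pm,p$ simultaneously.

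So the real content is the biorthogonality relation, and that is what I would prove first. Its diagonal entries are immediate from the definition of the conjugate vectors: since $\langle\ell,n|| = (||\ell,n\rangle)^{T}/F(\ell,n)$, we get $\langle\ell,n||\cdot||\ell,n\rangle = F(\ell,n)/F(\ell,n) = 1$. For the off-diagonal entries I would argue by weight. By the explicit expansion (\ref{eq:|ell,i>}), the vector $||\ell,n\rangle$ is a linear combination of the standard product-basis vectors $\sigma_{a(1)}^{-}\cdots\sigma_{a(n)}^{-}||\ell,0\rangle$ of $(V^{(1)})^{\otimes\ell}$, each carrying exactly $n$ down-spins; these product-basis vectors are orthonormal with respect to the matrix-transpose pairing $(\cdot)^{T}\cdot(\cdot)$, because $e_{a}^{T}e_{b}=\delta_{a,b}$ in each spin-$1/2$ factor. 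Hence $(||\ell,m\rangle)^{T}\cdot||\ell,n\rangle$ is a sum of transpose pairings between product-basis vectors with $m$ down-spins and product-basis vectors with $n$ down-spins, and every such pairing vanishes when $m\neq n$. Therefore $(||\ell,m\rangle)^{T}\cdot||\ell,n\rangle = 0$ and, dividing by $F(\ell,m)$, also $\langle\ell,m||\cdot||\ell,n\rangle = 0$. (Equivalently, $||\ell,m\rangle$ and $||\ell,n\rangle$ lie in distinct eigenspaces of $K^{\otimes\ell}$, with eigenvalues $q^{\ell-2m}\neq q^{\ell-2n}$, and the transpose pairing does not couple them.)

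The whole argument is essentially bookkeeping; the only point requiring a word of care is the off-diagonal vanishing --- that matrix transposition pairs vectors only within the same down-spin sector --- and that is what I would regard as the (very mild) crux. It is settled purely by the explicit formulas (\ref{eq:|ell,i>}) and (\ref{eq:<ell,n|}) together with the orthonormality of the spin-$1/2$ product basis under transposition; in particular nothing about the $R$-matrix, the fusion construction, or whether $q$ is real enters.
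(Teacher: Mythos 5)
Your proof is correct and follows essentially the same route as the paper, which simply notes that the lemma follows from the expression (\ref{eq:proj}) of $P^{(\ell)}$ in terms of the basis vectors and their conjugates, i.e.\ from the biorthogonality $\langle \ell, m || \cdot || \ell, n \rangle = \delta_{m,n}$ that you spell out. Your explicit verification of the off-diagonal vanishing via the down-spin sectors is a detail the paper leaves implicit, but it is the same underlying argument.
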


Let us recall that we have set $L=\ell N_s$, and 
the quantum space $(V^{(\ell)})^{\otimes N_s}$ is constructed in the 
$L$th tensor product space $(V^{(1)})^{\otimes L}$. 
We now introduce the spin-$\ell/2$ elementary operators 
associated with grading of $w$,  $E_k^{i, \, j \, (\ell \, w)}$,  
acting on the $k$th component of the quantum space 
$(V^{(\ell)})^{\otimes N_s}$ as follows.  
\be 
E_k^{i, \, j \, (\ell \,  w)} = (I^{(\ell)})^{\otimes (k-1)} \otimes  
E^{i, \, j \, (\ell \,  w)} \otimes (I^{(\ell)})^{\otimes (N_s-k)} \, 
\quad \mbox{for} \, \, k= 1, 2, \ldots, N_s. 
\ee

%
%
\subsection{Two expressions of a product of spin-1/2 elementary operators} 

For a given product of the spin-1/2 elementary operators 
we shall express it in  another form.   
Let us first consider the simplest example. 
In terms of the highest weight vector  
$|| \ell, 0 \rangle = | 0 \rangle_1 \otimes \cdots \otimes | 0 \rangle_{\ell}$  we have the following: 
\bea  
|| \ell, 0 \rangle \langle \ell, 0 || & = &  
| 0 \rangle_1 \otimes \cdots \otimes | 0 \rangle_{\ell} \, \, 
\langle 0|_1 \otimes \cdots \otimes  \langle 0 |_{\ell} \non \\ 
& = &  
| 0 \rangle_1 \, \langle 0|_1 
\otimes \cdots \otimes | 0 \rangle_{\ell} \, \langle 0 |_{\ell}  
\non \\ 
& = & e_1^{0, \, 0} \cdots e_{\ell}^{0, \, 0} \, .  \label{eq:two-exp} 
\eea
Thus, the product of the spin-1/2 elementary operators 
$e_1^{0, \, 0} \cdots e_{\ell}^{0, \, 0}$ is also expressed as 
$|| \ell, 0 \rangle \langle \ell, 0 ||$. Here we remark that 
$\langle 0 |_1 = (1, 0)$ and $|0 \rangle_1=(1, 0)^{T}$, where 
the superscript ${T}$ denotes the 
matrix transposition.  We thus have 
\be 
| 0 \rangle_1 \langle 0 |_1 = (1, 0)^{T} (1, 0) = 
\left( 
\begin{array}{cc} 
1 & 0 \\ 
0 & 0 
\end{array}
\right) 
= e^{0, \, 0} \, . 
\ee  
We shall generalize relation (\ref{eq:two-exp}) in the following. 

Let us introduce symbols for expressing sequences.  
If a sequence of numbers, $a_1, a_2, \ldots, a_N$, are given,    
we denote it by $(a_j)_{N}$, briefly; i.e., we have 
\be 
(a_j)_{N} = (a_1, a_2, \ldots, a_N) \, .  
\ee
Here we recall that we denote by $\{ \mu_k \}_N$ 
a set of $N$ parameters $\mu_k$; i.e., $\mu_1, \mu_2, \ldots, \mu_N$.  

We now consider two sequences consisting of only two values 0 or 1,  
$(\varepsilon_{\alpha}^{'})_{\ell}$ and 
$(\varepsilon_{\beta})_{\ell}$.  Here, 
the values of $\varepsilon_{\alpha}^{'}$ 
and $\varepsilon_{\beta}$ are given by 0 or 1 
for $\alpha, \beta = 1, 2, \ldots, \ell$. 
For given such sequences 
$(\varepsilon_{\alpha}^{'})_{\ell}$ and $(\varepsilon_{\beta})_{\ell}$ 
we consider the following product of the spin-1/2 elementary operators: 
\be 
\prod_{k=1}^{\ell} e_k^{\varepsilon_k^{'}, \, \varepsilon_k} = 
e_1^{\varepsilon_1^{'}, \, \varepsilon_1} \cdots 
e_{\ell}^{\varepsilon_{\ell}^{'}, \, \varepsilon_{\ell}} \, . 
\label{eq:prod-em}
\ee  
Here we recall that $e_k^{\varepsilon^{'}, \, \varepsilon}$ 
for $\varepsilon^{'}, \varepsilon=0, 1$ 
denote the two-by-two matrices defined on the $k$th sites 
 with unique nonzero element 1 at the entry 
$(\varepsilon^{'}, \varepsilon)$ 
for integers $k$ satisfying $1 \le k \le \ell$.

Let us give another expression of product (\ref{eq:prod-em}). 
We define a set $\mbox{\boldmath$\alpha$}^{-}$ by the set of integers $k$ 
satisfying $\varepsilon_k^{'}=1$ for $ 1 \le k \le \ell$ and a set 
$\mbox{\boldmath$\alpha$}^{+}$ by 
the set of integers $k$ satisfying $\varepsilon_k=0$ for 
$ 1 \le k \le \ell$, respectively:  
\begin{equation} 
\mbox{\boldmath$\alpha$}^{-}(\{ \varepsilon_{\alpha}^{'} \}) 
= \{ \alpha ; \, \varepsilon_{\alpha}^{'}=1 \, 
(1 \le \alpha \le \ell) \}  
\, , \quad 
\mbox{\boldmath$\alpha$}^{+}(\{ \varepsilon_{\beta} \}) 
= \{ \beta ; \, \varepsilon_{\beta} = 0  \, (1 \le \beta \le \ell) \}  \, . 
\label{eq:def-sets-aa'}
\end{equation} 
Let us denote by $\Sigma_{\ell}$ 
the set of integers $1, 2, \ldots, \ell$; 
i.e., $\Sigma_{\ell}=\{1, 2, \ldots, \ell\}$.
In terms of sets ${\bm \alpha}^{\pm}$ we express the product of 
elementary operators given by (\ref{eq:prod-em}) as 
\be 
\prod_{a \in {\bm \alpha}^{-}} \sigma_a^{-}
||\ell,  0 \rangle \, 
\langle \ell, 0 || \prod_{b \in \Sigma_{\ell} \setminus {\bm \alpha}^{+}} 
\sigma_b^{+} \, . \label{eq:2nd-exp}
\ee

We now derive the expression of (\ref{eq:prod-em}) 
from that of (\ref{eq:2nd-exp}), in detail. 
Let us denote by $r$ and $r^{'}$ the number 
of elements of the set $\mbox{\boldmath$\alpha$}^{-}$ 
and $\mbox{\boldmath$\alpha$}^{+}$, respectively. 
%
We express the elements of $\mbox{\boldmath$\alpha$}^{-}$ as 
 $a(k)$ for $k=1, 2, \ldots, r$,  and those of 
$\Sigma_{\ell} \setminus {\bm \alpha}^{+}$ as 
 $b(k)$ for $k=1, 2, \ldots, r^{'}$, respectively. 
Expressing $r$ and $\ell-r^{'}$ by $i$ and $j$, respectively,  
we have 
\be 
{\bm \alpha}^{-} = \{ a(1), a(2), \ldots, a(i) \} \, , \quad 
\Sigma_{\ell} \setminus 
{\bm \alpha}^{+} = \{ b(1), b(2), \ldots, b(j) \} \, . 
\label{eq:elements-aa'} 
\ee 
Hereafter if not specified, we shall put them in increasing order: 
$1 \le a(1) < \cdots < a(i) \le \ell$ and 
$1 \le b(1) < \cdots < b(j) \le \ell$, respectively. Here 
we recall $i=r$ and $j=\ell-r^{'}$. 
We thus express the product of the elementary operators 
in terms of $a(k)$ and $b(k)$ as follows:  
\bea  
\prod_{a \in {\bm \alpha}^{-}} \sigma_a^{-}
||\ell,  0 \rangle \, 
\langle \ell, 0 || \prod_{b \in \Sigma_{\ell} \setminus {\bm \alpha}^{+}} 
\sigma_b^{+}  
& = & 
\sigma_{a(1)}^{-} \cdots \sigma_{a(i)}^{-} 
||\ell,  0 \rangle \, 
\langle \ell, 0 || \sigma_{b(1)}^{+} \cdots \sigma_{b(j)}^{+} \,  
\non \\ 
& = & e_{a(1)}^{1, \, 0} \cdots e_{a(i)}^{1, \, 0} 
 \, e^{0, \, 0}_1 \cdots e^{0, \, 0}_{\ell} 
\,  e_{b(1)}^{0, \, 1} \cdots e_{b(j)}^{0, \, 1} 
\, .  
\label{eq:prod-e-sigma_pm}  
\eea  
Calculating products of two-by-two matrices, 
from expression (\ref{eq:2nd-exp}) 
we derive the expression in terms of products of 
the spin-1/2 elementary operators, 
$e_{1}^{\varepsilon_1^{'}, \, \varepsilon_1} 
\cdots e_{\ell}^{\varepsilon_{\ell}^{'}, \, \varepsilon_{\ell}}$,  
such as given in (\ref{eq:prod-em}).  
Here, we derive the sequence 
$( \varepsilon_{\alpha}^{'} )_{\ell}$   by setting 
$\varepsilon_{a(k)}^{'}=1$ for $k=1, 2, \ldots, i$ while 
$\varepsilon_{\alpha}^{'}=0$ for $\alpha \ne a(k)$ 
with $k$ of $1 \le  k \le i$:
\be
\varepsilon_{\alpha}^{'} = 
\left\{ 
\begin{array}{cc} 
1 & \mbox{if} \, \alpha=a(k) \, (1 \le k \le \ell) \, , \\  
0 & {\rm otherwise}   \, . 
\end{array}
\right.
\ee
Similarly, we derive sequence  
$(\varepsilon_{\beta} )_{\ell}$
by setting 
$\varepsilon_{b(k)}=1$ for $k = 1, 2, \ldots, j$
while $\varepsilon_{\beta}=0$ for $\beta \ne b(k)$ 
with $k$ of $1 \le k \le j$. 

Let us introduce  useful notation. Suppose that we have a sequence 
$(\varepsilon_{\alpha}^{'})_{\ell}$ 
such that $\varepsilon_{\alpha}^{'}= 0$ or 1 
for all integers $\alpha$ with $1 \le \alpha \le \ell$ 
and the number of integers $\alpha$ satisfying 
$\varepsilon_{\alpha}^{'}=1$ ($1 \le \alpha \le \ell$) is given by $i$.  
Then, we denote $\varepsilon_{\alpha}^{'}$ by $\varepsilon_{\alpha}^{'}(i)$ 
for each integer $\alpha$ and 
the sequence $(\varepsilon_{\alpha}^{'})_{\ell}$ 
 by $(\varepsilon_{\alpha}^{'}(i))_{\ell}$. 
%
%
In the same way, we denote by $(\varepsilon_{\beta}(j))_{\ell}$ 
a sequence of 0 or 1 
such that the number of integers $\beta$ satisfying 
$\varepsilon_{\beta}(j)=1$ for $1 \le \beta \le \ell$ 
is given by $j$. 
The two expressions of a product of the spin-1/2 elementary operators 
are summarized as follows. 
 
\begin{lemma}
Sequences $(\varepsilon_{\alpha}^{'}(i))_{\ell}$ 
and $( \varepsilon_{\beta}(j) )_{\ell}$ 
are related to integers $a(1) < a(2) < \cdots < a(i)$ and 
 $b(1) < b(2) < \cdots < b(j)$, respectively,  by 
\bea 
e_1^{\varepsilon_1^{'}(i), \, \varepsilon_1(j)} \cdots 
e_{\ell}^{\varepsilon_{\ell}^{'}(i), \, \varepsilon_{\ell}(j)}
& = & 
 e_{a(1)}^{1, \, 0} \cdots e_{a(i)}^{1, \, 0} 
 \, e^{0, \, 0}_1 \cdots e^{0, \, 0}_{\ell} 
\,  e_{b(1)}^{0, \, 1} \cdots e_{b(j)}^{0, \, 1} \, ,  \label{eq:eps-ab} 
\\
\prod_{k=1}^{\ell} e^{\varepsilon_k^{'}(i), \, \varepsilon_k(j)}_k  
& = & 
\prod_{a \in {\bm \alpha}^{-}} \sigma_a^{-} \, 
||\ell,  0 \rangle \, 
\langle \ell, 0 || \, \prod_{b \in \Sigma_{\ell} \setminus {\bm \alpha}^{+}} 
\sigma_b^{+} \, . \label{eq:eps-ab-2nd}
\eea
\end{lemma}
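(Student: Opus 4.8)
The plan is to reduce both identities to the single fact that the elementary matrices on one site multiply by $e^{a,\,b}e^{c,\,d}=\delta_{b,c}\,e^{a,\,d}$, while elementary operators attached to distinct sites commute. There is no analytic content: (\ref{eq:eps-ab}) and (\ref{eq:eps-ab-2nd}) are obtained by regrouping a product site by site.

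First I would establish (\ref{eq:eps-ab}). Fix a site $k$ with $1\le k\le\ell$ and collect, among the factors on the right-hand side, those acting on the $k$th space: there is at most one factor $e_k^{1,\,0}$, present exactly when $k\in\{a(1),\ldots,a(i)\}$; there is always exactly one factor $e_k^{0,\,0}$, coming from the middle block $e_1^{0,\,0}\cdots e_\ell^{0,\,0}$; and there is at most one factor $e_k^{0,\,1}$, present exactly when $k\in\{b(1),\ldots,b(j)\}$. Since operators on different sites commute, the whole product may be rearranged so that on each site these factors are adjacent and appear in the induced left-to-right order $e_k^{1,\,0}e_k^{0,\,0}e_k^{0,\,1}$ (the outer two possibly absent). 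In each of the four configurations the product on site $k$ collapses — the inner indices being always $0$ — to $e_k^{\varepsilon_k',\,\varepsilon_k}$ with $\varepsilon_k'=1$ if and only if $k$ is one of the $a(m)$ and $\varepsilon_k=1$ if and only if $k$ is one of the $b(m)$; that is, $(\varepsilon_k')_\ell=(\varepsilon_\alpha'(i))_\ell$ and $(\varepsilon_k)_\ell=(\varepsilon_\beta(j))_\ell$ as defined in the text, the numbers of ones being $i$ and $j$ respectively. Multiplying these site-local equalities over $k=1,\ldots,\ell$ gives (\ref{eq:eps-ab}); I would also remark that $(a(1)<\cdots<a(i);\,b(1)<\cdots<b(j))\mapsto\big((\varepsilon_\alpha'(i))_\ell,(\varepsilon_\beta(j))_\ell\big)$ is a bijection, so the correspondence reads equally well in either direction.

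For (\ref{eq:eps-ab-2nd}) I would invoke the dictionary already in place: $||\ell,0\rangle\langle\ell,0||=e_1^{0,\,0}\cdots e_\ell^{0,\,0}$ by (\ref{eq:two-exp}), together with $\sigma_a^{-}=e_a^{1,\,0}$ and $\sigma_b^{+}=e_b^{0,\,1}$ in the spin-$1/2$ representation (immediate from the action of $X^{\pm}$ on $|0\rangle,|1\rangle$). Substituting these into $\prod_{a\in{\bm \alpha}^{-}}\sigma_a^{-}\,||\ell,0\rangle\langle\ell,0||\,\prod_{b\in\Sigma_\ell\setminus{\bm \alpha}^{+}}\sigma_b^{+}$ reproduces verbatim the right-hand side of (\ref{eq:eps-ab}) — this is exactly (\ref{eq:prod-e-sigma_pm}) — and (\ref{eq:eps-ab}) then identifies it with $\prod_{k=1}^\ell e_k^{\varepsilon_k'(i),\,\varepsilon_k(j)}$, which is (\ref{eq:eps-ab-2nd}). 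The only obstacle is bookkeeping: one must run through all four site-configurations of $(\varepsilon_k',\varepsilon_k)$ and check that the ever-present middle factor $e_k^{0,\,0}$ never disturbs the product (it does not, precisely because its column index matches the row index of $e_k^{0,\,1}$ and its row index matches the column index of $e_k^{1,\,0}$). No deeper difficulty arises.
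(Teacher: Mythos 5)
Your proposal is correct and follows essentially the same route as the paper: the text of \S 3.2 derives the lemma by exactly this site-by-site matrix multiplication, starting from the identity $||\ell,0\rangle\langle\ell,0||=e_1^{0,\,0}\cdots e_\ell^{0,\,0}$ of (\ref{eq:two-exp}) and the identifications $\sigma_a^{-}=e_a^{1,\,0}$, $\sigma_b^{+}=e_b^{0,\,1}$, leading to (\ref{eq:prod-e-sigma_pm}). Your explicit check of the four site-configurations via $e^{a,\,b}e^{c,\,d}=\delta_{b,c}\,e^{a,\,d}$ is the same bookkeeping the paper summarizes as ``calculating products of two-by-two matrices.''
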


%
%
\subsection{Reduction into the spin-1/2 elementary operators}

We shall express the spin-$\ell/2$ elementary operators 
$E^{i, \, j \, (\ell \, +)}$ 
for integers $i$ and $j$ satisfying $1 \le i , j \le \ell$ 
in terms of sums of products of the spin-1/2 elementary matrices. 
It follows from (\ref{eq:|ell,i>}) and (\ref{eq:<ell,n|}) that 
we have 
\be 
|| \ell, i \rangle  \langle \ell, j ||   
= \sum_{(\varepsilon_{\alpha}^{'}(i))_{\ell}} 
\sum_{(\varepsilon_{\beta}(j))_{\ell}} 
g_{ij}(\varepsilon_{\alpha}^{'}(i), \, \varepsilon_{\beta}(j)) 
e_1^{\varepsilon_1^{'}(i), \, \varepsilon_1(j)} \cdots 
e_{\ell}^{\varepsilon_{\ell}^{'}(i), \, \varepsilon_{\ell}(j)} \, . 
\ee
Here the sum is taken over all sequences 
$( \varepsilon_{\alpha}^{'}(i) )_{\ell}$ and 
$( \varepsilon_{\beta}(j) )_{\ell}$. 
%
The coefficients $g_{ij}(\varepsilon_{\alpha}^{'}(i), \, 
\varepsilon_{\beta}(j)) $ 
are  given explicitly as follows.   
\be 
g_{ij}(\varepsilon_{\alpha}^{'}(i) , \,  \varepsilon_{\beta}(j)) 
= 
 \left[ 
\begin{array}{c}
\ell \\
j 
\end{array} \right]_{q}^{-1} \, 
q^{(a(1)+ \cdots a(i)) + (b(1) + \cdots + b(j)) - (i+j) \ell 
+i(i-1)/2 + j(j-1)/2} \, . 
\ee

The ket vectors $\langle \ell, i ||$ satisfy the following symmetry, 
which plays a central role in the fusion method for evaluating the 
spin-$\ell/2$ form factors.   
\begin{lemma}  
Let ${\bm \alpha}^{-}$ be a set of distinct integers 
$\{a(1), \ldots, a(i) \}$ satisfying  
$1 \le a(1) <  \ldots < a(i) \le \ell$, 
we have the following: 
\be
\langle \ell, i || \sigma_{a(1)}^{-} \cdots   \sigma_{a(i)}^{-} 
||\ell, 0 \rangle \, 
q^{-(a(1) + \cdots + a(i)) + i } 
= \left[ 
\begin{array}{c}
\ell \\
i 
\end{array} \right]_{q}^{-1} \, 
q^{-i(i-1)/2} \, , \label{eq:key}
\ee
which is independent of the set ${\bm \alpha}^{-} =
\{a(1), a(2), \ldots, a(i)\}$. 
\label{lem:independance}
\end{lemma}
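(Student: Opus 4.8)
The plan is to compute the pairing $\langle \ell, i \| \, \sigma_{a(1)}^{-} \cdots \sigma_{a(i)}^{-} \, \| \ell, 0 \rangle$ directly from the explicit formula for the conjugate vector, and to observe that the sum over configurations collapses to a single term. First I would expand $\langle \ell, i\|$ using (\ref{eq:<ell,n|}), which writes it as $\left[\begin{smallmatrix}\ell\\i\end{smallmatrix}\right]_q^{-1} q^{i(\ell-i)}$ times a sum over increasing sequences $1\le b(1)<\cdots<b(i)\le\ell$ of $\langle\ell,0\| \sigma_{b(1)}^{+}\cdots\sigma_{b(i)}^{+}$ weighted by $q^{b(1)+\cdots+b(i)-i\ell+i(i-1)/2}$. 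Then I would apply each such bra-monomial to $\sigma_{a(1)}^{-}\cdots\sigma_{a(i)}^{-}\|\ell,0\rangle$. Since $\|\ell,0\rangle = |0\rangle_1\otimes\cdots\otimes|0\rangle_\ell$ has all spins up, $\sigma_{a(1)}^{-}\cdots\sigma_{a(i)}^{-}\|\ell,0\rangle$ is the basis vector with down-spins exactly at sites $a(1),\dots,a(i)$; the bra $\langle\ell,0\|\sigma_{b(1)}^{+}\cdots\sigma_{b(i)}^{+}$ is the corresponding basis covector with down-spins exactly at sites $b(1),\dots,b(i)$. By orthonormality of the spin-1/2 basis under the transpose pairing, $\langle 0|_k(e^{0,0}_k \text{ or } e^{1,1}_k)|0\rangle_k$ etc., this pairing is $1$ when $\{b(1),\dots,b(i)\}=\{a(1),\dots,a(i)\}$ and $0$ otherwise.

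Consequently only the term with $b(k)=a(k)$ for all $k$ survives in the sum, and the left-hand side of (\ref{eq:key}) becomes
\begin{equation}
\left[\begin{array}{c}\ell\\i\end{array}\right]_q^{-1} q^{i(\ell-i)} \, q^{a(1)+\cdots+a(i)-i\ell+i(i-1)/2} \, q^{-(a(1)+\cdots+a(i))+i} \, .
\end{equation}
The factors $q^{\pm(a(1)+\cdots+a(i))}$ cancel, so the $a(k)$-dependence disappears entirely, proving the claimed independence. Collecting the remaining powers of $q$: $i(\ell-i) - i\ell + i(i-1)/2 + i = -i^2 + i(i-1)/2 + i = i(i-1)/2 - i^2/2 \cdot 0$; more carefully, $i(\ell-i)-i\ell = -i^2$, and $-i^2 + i(i-1)/2 + i = -i^2 + i^2/2 - i/2 + i = -i^2/2 + i/2 = -i(i-1)/2$. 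Hence the right-hand side is $\left[\begin{smallmatrix}\ell\\i\end{smallmatrix}\right]_q^{-1} q^{-i(i-1)/2}$, as stated.

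I expect the only real point requiring care — rather than a genuine obstacle — to be the bookkeeping of the $q$-exponents and the verification that the spin-1/2 basis is indeed orthonormal under the transpose (not Hermitian) pairing, which follows because the basis vectors $|0\rangle$, $|1\rangle$ are real and standard. One should also note the edge case $i=0$ (and implicitly that $i\le\ell$), where the statement reduces to $\langle\ell,0\|\ell,0\rangle=1$, consistent with the normalization $\left(\|\ell,0\rangle\right)^T\cdot\|\ell,0\rangle=1$ fixed earlier; no separate argument is needed. Thus the proof is essentially a one-line orthogonality collapse followed by an exponent count, with Lemma~\ref{lem:independance} serving exactly the role announced in the text: making the form-factor weights uniform across the string configurations.
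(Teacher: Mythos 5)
Your proof is correct and follows exactly the route the paper intends: the paper's own proof is the one-line remark ``From the explicit expression (\ref{eq:<ell,n|}) of the conjugate vector $\langle \ell, i ||$ we have (\ref{eq:key})'', and you have simply supplied the implicit details (orthogonality collapse under the transpose pairing and the exponent count, whose final tally $q^{i(\ell-i)-i\ell+i(i-1)/2+i}=q^{-i(i-1)/2}$ is right despite the garbled first pass). Nothing further is needed.
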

\begin{proof}
 From the explicit expression (\ref{eq:<ell,n|}) of 
the conjugate vector $ \langle \ell, i ||$ we have 
(\ref{eq:key}).  
\end{proof}

Expressing the matrix elements of the matrix $\Phi(w)$ as 
$\left( \Phi(w)\right)_{a, \, b} = \delta(a, b) \exp( a w)$ for 
$a, b=0, 1$, 
we show the gauge transformation $\chi_{1 2 \ldots \ell}$ on 
the spin-1/2 elementary operators as follows.   
\begin{lemma} 
Recall that 
$\varepsilon_{\alpha}^{'}(i)$ and $\varepsilon_{\beta}(j)$ 
are related to $a(k)$ and $b(k)$ via (\ref{eq:eps-ab}).  
Every product of the spin-$1/2$ elementary operators 
is transformed with the gauge transformation as    
\be 
\chi_{1 2 \cdots \ell} \, \, 
e_1^{\varepsilon_1^{'}(i), \, \varepsilon_1(j)} \cdots 
e_{\ell}^{\varepsilon_{\ell}^{'}(i), \, \varepsilon_{\ell}(j)} \, \, 
\chi^{-1}_{1 2 \cdots \ell} 
=
e_1^{\varepsilon_1^{'}, \, \varepsilon_1} \cdots 
e_{\ell}^{\varepsilon_{\ell}^{'}, \, \varepsilon_{\ell} } \, 
q^{-(a(1) + \cdots + a(i)-i) + (b(1) + \cdots + b(j)-j)} e^{(i-j)\xi_1} .  
\ee
\label{lem:inv-gauge} 
\end{lemma}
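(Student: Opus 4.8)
The plan is to reduce the statement to two elementary facts: that conjugation by the diagonal operator $\chi_{1 2 \cdots \ell}$ acts on a tensor product of matrix units factor by factor, and that conjugation of a single matrix unit $e^{a,b}$ by a two-by-two diagonal matrix merely rescales it. First I would observe that $\chi_{1 2 \cdots \ell} = \Phi_1(w_1) \cdots \Phi_{\ell}(w_{\ell})$ is a product of mutually commuting operators, the $k$th of which acts nontrivially only on the $k$th tensor slot. Hence conjugating $\prod_{k=1}^{\ell} e_k^{\varepsilon_k^{'}, \varepsilon_k}$ by $\chi_{1 2 \cdots \ell}$ is the same as conjugating each factor separately, i.e. it equals $\prod_{k=1}^{\ell} \Phi_k(w_k) \, e_k^{\varepsilon_k^{'}, \varepsilon_k} \, \Phi_k(w_k)^{-1}$.

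Second, from $(\Phi(w))_{a,b} = \delta(a,b) \exp(a w)$ one checks directly that $\Phi(w) \, e^{a,b} \, \Phi(w)^{-1} = \exp\bigl((a-b) w\bigr) \, e^{a,b}$, since left multiplication by $\Phi(w)$ scales the $a$th row by $\exp(aw)$ and right multiplication by $\Phi(w)^{-1}$ scales the $b$th column by $\exp(-bw)$. Applying this to every factor extracts the scalar $\prod_{k=1}^{\ell} \exp\bigl((\varepsilon_k^{'} - \varepsilon_k) w_k\bigr)$ in front of the unchanged product $\prod_{k=1}^{\ell} e_k^{\varepsilon_k^{'}, \varepsilon_k}$, which already yields the right-hand side up to identifying that scalar explicitly.

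It then remains to evaluate the exponent $\sum_{k=1}^{\ell} (\varepsilon_k^{'} - \varepsilon_k) w_k$ with the inhomogeneity parameters $w_k$ taken to be the complete $\ell$-string attached to $\xi_1$, namely $w_k = \xi_1 - (k-1)\eta$ for $k = 1, \ldots, \ell$. Separating the $\xi_1$-part gives $\xi_1 \sum_{k}(\varepsilon_k^{'} - \varepsilon_k) = (i-j)\xi_1$, because there are exactly $i$ indices $k$ with $\varepsilon_k^{'} = 1$ and exactly $j$ indices with $\varepsilon_k = 1$, producing the factor $e^{(i-j)\xi_1}$. For the $\eta$-part I would invoke the dictionary of the previous Lemma, equation (\ref{eq:eps-ab}): the positions of the $1$'s in $(\varepsilon_{\alpha}^{'}(i))_{\ell}$ are precisely $a(1) < \cdots < a(i)$, so that $\sum_{k} \varepsilon_k^{'}(k-1) = \sum_{m=1}^{i}(a(m)-1) = (a(1) + \cdots + a(i)) - i$, and likewise $\sum_{k} \varepsilon_k (k-1) = (b(1) + \cdots + b(j)) - j$. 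Therefore $-\eta \sum_{k} (\varepsilon_k^{'} - \varepsilon_k)(k-1) = -\eta\bigl[(a(1)+\cdots+a(i)-i) - (b(1)+\cdots+b(j)-j)\bigr]$, and using $q = \exp\eta$ this exponential is exactly $q^{-(a(1)+\cdots+a(i)-i) + (b(1)+\cdots+b(j)-j)}$. Multiplying the two contributions gives the claimed formula.

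The argument is essentially routine, so there is no real obstacle; the only point that demands care is the bookkeeping — keeping the two sequences $(\varepsilon_{\alpha}^{'}(i))_{\ell}$ and $(\varepsilon_{\beta}(j))_{\ell}$ aligned with their index sets $\{a(k)\}$ and $\{b(k)\}$ through the earlier Lemma, and substituting the correct complete $\ell$-string $w_k = \xi_1 - (k-1)\eta$ so that the sign of the $\eta$-exponent comes out right.
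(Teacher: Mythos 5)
Your proof is correct: the paper states this lemma without proof, and your computation — conjugating each matrix unit by the diagonal factor $\Phi_k(w_k)$ to extract $e^{(\varepsilon_k'-\varepsilon_k)w_k}$, then substituting the complete $\ell$-string $w_k=\xi_1-(k-1)\eta$ and using the dictionary of (\ref{eq:eps-ab}) to rewrite $\sum_k\varepsilon_k'(k-1)=a(1)+\cdots+a(i)-i$ and $\sum_k\varepsilon_k(k-1)=b(1)+\cdots+b(j)-j$ — is exactly the argument the paper implicitly relies on. The bookkeeping of signs and the identification of the $b(k)$ as the positions where $\varepsilon_\beta=1$ (i.e. $\Sigma_\ell\setminus{\bm\alpha}^{+}$) are both handled correctly.
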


It is useful to express Lemma \ref{lem:independance} in the following form. 
\begin{corollary} 
For a pair of integers $i$ and $j$ with $1 \le i, j \le \ell$, 
let us consider sequences $(\varepsilon_{\alpha}^{'}(i))_{\ell}$ and 
$(\varepsilon_{\beta}(j))_{\ell}$, which correspond to sets 
${\bm \alpha}^{-}$ and ${\bm \alpha}^{+}$, respectively, 
through (\ref{eq:eps-ab}) and (\ref{eq:eps-ab-2nd}). 
The product of the spin-1/2 elementary operators multiplied by the projection 
operator from the left and  multiplied also by $q^{-(a(1)+ \cdots + a(i))+ i}$ 
does not depend on the set ${\bm \alpha}^{-}$   
\be 
P^{(\ell)} \, e_1^{\varepsilon_1^{'}(i), \, \varepsilon_1(j)} \cdots 
e_{\ell}^{\varepsilon_{\ell}^{'}(i), \, \varepsilon_{\ell}(j)}
\, q^{-(a(1)+ \cdots + a(i))+ i} 
=
\left[ 
\begin{array}{c}
\ell \\
i 
\end{array} \right]_{q}^{-1} \, 
q^{-i(i-1)/2} || \ell, i \rangle \langle \ell, 0 || 
 \prod_{\beta \in \Sigma_{\ell} \setminus {\bm \alpha}^{+}} 
\sigma_{\beta}^{+} \, . \label{eq:independ} 
\ee
In terms of the gauge transformation  we express relation 
(\ref{eq:independ}) as 
\be 
P^{(\ell)} \, \chi_{ 1 2 \cdots \ell} \, 
e_1^{\varepsilon_1^{'}(i), \, \varepsilon_1(j)} \cdots 
e_{\ell}^{\varepsilon_{\ell}^{'}(i), \, \varepsilon_{\ell}(j)}
\chi_{ 1 2 \cdots \ell}^{-1} = 
\left[ 
\begin{array}{c}
\ell \\
i 
\end{array} \right]_{q}^{-1} \, 
q^{-i(i-1)/2} || \ell, i \rangle \langle \ell, 0 || 
 \prod_{b \in \Sigma_{\ell} \setminus {\bm \alpha}^{+}} \sigma_b^{+}
\, q^{b(1) + \cdots + b(j)-j} \, . 
\ee
\end{corollary}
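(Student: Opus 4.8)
The plan is to obtain (\ref{eq:independ}) directly from the second expression (\ref{eq:eps-ab-2nd}) of a product of spin-$1/2$ elementary operators, combined with the projector identity (\ref{eq:proj}) and the key symmetry of Lemma \ref{lem:independance}; the gauge-transformed form then follows by feeding (\ref{eq:independ}) into Lemma \ref{lem:inv-gauge}. No new input is needed: the corollary is essentially a repackaging of Lemma \ref{lem:independance}.

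First I would rewrite the product $e_1^{\varepsilon_1^{'}(i),\,\varepsilon_1(j)}\cdots e_{\ell}^{\varepsilon_{\ell}^{'}(i),\,\varepsilon_{\ell}(j)}$, using (\ref{eq:eps-ab-2nd}), as $\prod_{a\in{\bm \alpha}^{-}}\sigma_a^{-}\,||\ell,0\rangle\langle\ell,0||\,\prod_{b\in\Sigma_{\ell}\setminus{\bm \alpha}^{+}}\sigma_b^{+}$, and multiply it on the left by $P^{(\ell)}=\sum_{n=0}^{\ell}||\ell,n\rangle\langle\ell,n||$. Since $\prod_{a\in{\bm \alpha}^{-}}\sigma_a^{-}||\ell,0\rangle$ carries exactly $i$ down-spins, whereas each $\langle\ell,n||$ is, by (\ref{eq:<ell,n|}), a linear combination of covectors $\langle\ell,0||\sigma_{b(1)}^{+}\cdots\sigma_{b(n)}^{+}$ which pair nontrivially only with vectors of $n$ down-spins, only the $n=i$ term of $P^{(\ell)}$ survives and one is left with $||\ell,i\rangle\,\langle\ell,i||\,\prod_{a\in{\bm \alpha}^{-}}\sigma_a^{-}||\ell,0\rangle$. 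The remaining scalar $\langle\ell,i||\prod_{a\in{\bm \alpha}^{-}}\sigma_a^{-}||\ell,0\rangle$ is precisely the quantity evaluated in (\ref{eq:key}): after multiplying the whole identity by $q^{-(a(1)+\cdots+a(i))+i}$ it collapses to $\left[\begin{smallmatrix}\ell\\i\end{smallmatrix}\right]_q^{-1}q^{-i(i-1)/2}$, which no longer depends on ${\bm \alpha}^{-}$. Collecting the factors gives (\ref{eq:independ}); the one delicate point is the weight-counting step that discards all $n\neq i$, and keeping the $q$-power conventions of (\ref{eq:|ell,i>})--(\ref{eq:<ell,n|}) consistent throughout.

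For the gauge-transformed statement I would start from $P^{(\ell)}\,\chi_{12\cdots\ell}\,e_1^{\varepsilon_1^{'}(i),\,\varepsilon_1(j)}\cdots e_{\ell}^{\varepsilon_{\ell}^{'}(i),\,\varepsilon_{\ell}(j)}\,\chi_{12\cdots\ell}^{-1}$, slide $P^{(\ell)}$ past the overall scalar produced by Lemma \ref{lem:inv-gauge}, and substitute (\ref{eq:independ}). Lemma \ref{lem:inv-gauge} converts the conjugated product into the same product times an explicit scalar whose $q$-part is $q^{-(a(1)+\cdots+a(i)-i)+(b(1)+\cdots+b(j)-j)}$; the power $q^{-(a(1)+\cdots+a(i)-i)}$ then cancels exactly against the factor $q^{(a(1)+\cdots+a(i))-i}$ sitting in front of $||\ell,i\rangle\langle\ell,0||\prod_{b\in\Sigma_{\ell}\setminus{\bm \alpha}^{+}}\sigma_b^{+}$ once (\ref{eq:independ}) is solved for $P^{(\ell)}$ applied to the product, leaving exactly the advertised factor $q^{b(1)+\cdots+b(j)-j}$ on the $\sigma^{+}$ side. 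I expect the main obstacle to be nothing conceptual but purely the bookkeeping of the various $q$-powers indexed by ${\bm \alpha}^{-}$ and ${\bm \alpha}^{+}$ as they travel through Lemmas \ref{lem:independance} and \ref{lem:inv-gauge} and the projector, and verifying that every trace of ${\bm \alpha}^{-}$-dependence does cancel, as the statement claims.
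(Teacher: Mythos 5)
Your argument is correct and is exactly the route the paper intends: the paper states this corollary without proof, as an immediate repackaging of Lemma \ref{lem:independance} via the projector expansion (\ref{eq:proj}) and Lemma \ref{lem:inv-gauge}, and your weight-counting step (only the $n=i$ term of $P^{(\ell)}=\sum_n ||\ell,n\rangle\langle\ell,n||$ pairs nontrivially with $\prod_{a\in{\bm\alpha}^{-}}\sigma_a^{-}||\ell,0\rangle$) together with (\ref{eq:key}) is precisely how (\ref{eq:independ}) is meant to be obtained. One bookkeeping point you gloss over in the second half: Lemma \ref{lem:inv-gauge} produces not only the $q$-power $q^{-(a(1)+\cdots+a(i)-i)+(b(1)+\cdots+b(j)-j)}$ but also the scalar $e^{(i-j)\xi_1}$, so a faithful substitution of (\ref{eq:independ}) yields the right-hand side of the second display multiplied by an additional $e^{(i-j)\xi_1}$; this factor is absent from the corollary as printed but reappears as the compensating prefactor $e^{-(i-j)\xi_1}$ in Proposition \ref{prop:E-w}, so you should either carry it explicitly or note that the printed formula suppresses it.
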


\begin{lemma}  
The sum of coefficients 
$g_{ij}(\varepsilon_{\alpha}^{'}(i), \, \varepsilon_{\beta}(j))$ 
over all sequences $( \varepsilon_{\alpha}^{'}(i))_{\ell}$ 
multiplied by $q^{a(1) + \cdots + a(i)-i}$ 
is given by the following:   
\be 
\sum_{ (\varepsilon_{\alpha}^{'}(i) )_{\ell} } 
g_{ij}(\varepsilon_{\alpha}^{'}(i), \, \varepsilon_{\beta}(j)) \, 
q^{a(1) + \cdots + a(i)-i} 
= 
\left[ \begin{array}{c}
\ell \\
i 
\end{array} \right]_{q}  
\, 
\left[ \begin{array}{c}
\ell \\
j 
\end{array} \right]_{q}^{-1} 
\, 
q^{b(1) + \cdots + b(j)-j} 
q^{i(i-1)/2-j(j-1)/2} \, . \label{eq:g-sum}
\ee 
\label{lem:g-sum} 
\end{lemma}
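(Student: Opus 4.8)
The plan is to insert the closed form of the coefficient $g_{ij}(\varepsilon_{\alpha}^{'}(i),\varepsilon_{\beta}(j))$ given earlier into the left-hand side of~(\ref{eq:g-sum}), and then to perform the sum over sequences by regarding it as a sum over subsets. Specifying a sequence $(\varepsilon_{\alpha}^{'}(i))_{\ell}$ is the same as specifying the set ${\bm \alpha}^{-}=\{a(1)<\cdots<a(i)\}$ of positions where $\varepsilon_{\alpha}^{'}=1$, while throughout the sum the integers $b(1)<\cdots<b(j)$ remain fixed, being determined by the fixed sequence $(\varepsilon_{\beta}(j))_{\ell}$. Consequently, once $g_{ij}$ is written out, every factor other than $q^{a(1)+\cdots+a(i)}$ and the external multiplier $q^{a(1)+\cdots+a(i)-i}$ is independent of the summation index and can be taken outside the sum; in particular the factor $\left[\begin{array}{c}\ell\\ j\end{array}\right]_q^{-1}$ and the power $q^{b(1)+\cdots+b(j)}$ pass through unchanged.

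First I would merge the two occurrences of $q^{a(1)+\cdots+a(i)}$ into a single factor $q^{2(a(1)+\cdots+a(i))}$, so that the summand becomes a prefactor constant in ${\bm \alpha}^{-}$ times $q^{2a(1)+\cdots+2a(i)}$. The only nontrivial step is then to evaluate
\[
\sum_{1\le a(1)<\cdots<a(i)\le\ell} q^{2a(1)+\cdots+2a(i)}
= q^{\,i(\ell+1)}\left[\begin{array}{c}\ell\\ i\end{array}\right]_q ,
\]
which is precisely Lemma~2.1, i.e.\ formula~(\ref{eq:sum-q-factors}); everything after this is elementary.

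It then remains to multiply this $q$-binomial and the power $q^{i(\ell+1)}$ by the prefactor pulled out in the first step, which is $\left[\begin{array}{c}\ell\\ j\end{array}\right]_q^{-1}$ times a single power of $q$ whose exponent gathers the $b$-sum, the triangular numbers $i(i-1)/2$ and $j(j-1)/2$, the $-i$ coming from the external multiplier, and the $\ell$-linear terms contributed by $g_{ij}$. Adding $i(\ell+1)$ from the summation, the contributions proportional to $\ell$ must cancel --- as they have to, since the right-hand side of~(\ref{eq:g-sum}) carries no bare $\ell$ in its exponent --- and the surviving terms should reorganize into $q^{b(1)+\cdots+b(j)-j}\,q^{i(i-1)/2-j(j-1)/2}$, reproducing exactly the asserted right-hand side.

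The step I expect to be delicate is this final exponent arithmetic: one has to track the $\ell$-dependent part and the half-integer triangular parts simultaneously, and check both that the $\ell$-linear cancellation is exact and that the residual constant collapses to $-j$ (with the residual triangular part equal to $i(i-1)/2-j(j-1)/2$, with no extra shift). A convenient internal consistency check is to specialize to small cases --- for instance $\ell=1,2$ with the corresponding small $i,j$, or the diagonal case $i=j$ --- and to verify the identity directly there before committing to the general manipulation.
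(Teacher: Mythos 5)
Your approach is exactly the paper's: the proof given there consists precisely of setting $n=i$ in Lemma 2.1 (eq.\ (\ref{eq:sum-q-factors})) and identifying the sum over sequences $(\varepsilon_{\alpha}^{'}(i))_{\ell}$ with the sum over subsets $1\le a(1)<\cdots<a(i)\le\ell$, so your outline reproduces the intended argument. One caution about the ``delicate exponent arithmetic'' you rightly flagged but did not carry out: if you insert the closed form of $g_{ij}$ exactly as printed in \S 3.3, the $\ell$-linear terms do \emph{not} cancel --- after applying (\ref{eq:sum-q-factors}) you are left with $q^{-j\ell}$ where the right-hand side of (\ref{eq:g-sum}) requires $q^{-j}$. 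The printed coefficient is missing the overall factor $q^{j(\ell-j)}$ carried by the conjugate vector $\langle \ell, j ||$ in (\ref{eq:<ell,n|}); once that factor is restored (equivalently, once $g_{ij}$ is recomputed directly from (\ref{eq:|ell,i>}) and (\ref{eq:<ell,n|}), which replaces $-(i+j)\ell+j(j-1)/2$ in the exponent by $-i\ell-j(j+1)/2$), the cancellation is exact and the residual exponent collapses to $b(1)+\cdots+b(j)-j+i(i-1)/2-j(j-1)/2$ as you anticipated. Your suggested small-case check ($\ell=1,2$) would have exposed this discrepancy, so it is worth actually performing.
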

Here we remark that we take the sum over all sequences 
of the form of $(\varepsilon_{\alpha}^{'}(i))_{\ell}$.
\begin{proof}
Putting $n=i$ in (\ref{eq:sum-q-factors}) 
and observing that the sum over sequences 
$(\varepsilon_{\alpha}^{'}(i) )_{\ell}$ corresponds to the sum over 
integers $a(1), \ldots, a(i)$ satisfying 
$1 \le a(1)< \cdots < a(i) \le \ell$, we have (\ref{eq:g-sum}) 
from (\ref{eq:sum-q-factors}). 
\end{proof}

We thus show the main formula for reducing the spin-$\ell/2$ operators into 
the spin-1/2 ones. 
\begin{proposition}
For every pair of integers $i$ and $j$ with $1 \le i, j \le \ell$  
the spin-$\ell/2$ elementary operator    
associated with grading $w$, 
$E_1^{i, \, j \, (\ell \, w)}$,  
is decomposed into a sum of products of the spin-1/2 elementary operators 
as follows. 
\be
E_1^{i, \, j \, (\ell \, w)}   
= 
\left[ 
\begin{array}{c} 
\ell \\ 
i 
\end{array}  
\right]_{q}
\left[ 
\begin{array}{c} 
\ell \\ 
 j 
\end{array}  
\right]_{q}^{-1}
q^{i(i-1)/2 - j(j-1)/2} e^{-(i- j) \xi_1}   
P^{(\ell)}_{1 2 \ldots \ell}  \, \sum_{ ( \varepsilon_{\beta}(j) )_{\ell} } 
\chi_{1 2 \cdots \ell} \, 
e_1^{\varepsilon_1^{'}(i), \, \varepsilon_1(j)} \cdots 
e_{\ell}^{\varepsilon_{\ell}^{'}(i), \, \varepsilon_{\ell}(j) } \, 
\chi^{-1}_{1 2 \cdots \ell} \, . \label{eq:E-w}
\ee
Here, we fix a sequence $( \varepsilon_{\alpha}^{'}(i) )_{\ell}$ . 
Furthermore, the expression (\ref{eq:E-w}) does not depend on 
the order of $\varepsilon_{\alpha}^{'}(i)$ s with respect to $\alpha$s.  
\label{prop:E-w}
\end{proposition}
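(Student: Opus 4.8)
The plan is to assemble Proposition \ref{prop:E-w} from the lemmas of this subsection by a chain of substitutions, tracking the $q$-powers carefully. First I would start from the defining expansion $E_1^{i,\,j\,(\ell\,w)} = || \ell, i \rangle \langle \ell, j ||$, which was written above as a double sum $\sum_{(\varepsilon_{\alpha}^{'}(i))_{\ell}} \sum_{(\varepsilon_{\beta}(j))_{\ell}} g_{ij}(\varepsilon_{\alpha}^{'}(i), \varepsilon_{\beta}(j)) \, e_1^{\varepsilon_1^{'}(i),\varepsilon_1(j)} \cdots e_{\ell}^{\varepsilon_{\ell}^{'}(i),\varepsilon_{\ell}(j)}$ over the spin-1/2 elementary operators. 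The goal is to see this as a sum over $(\varepsilon_{\beta}(j))_{\ell}$ alone (equivalently over sets ${\bm \alpha}^{+}$), with the inner sum over $(\varepsilon_{\alpha}^{'}(i))_{\ell}$ collapsing thanks to the independence statement. The natural bookkeeping device is to insert the factors $q^{a(1)+\cdots+a(i)-i}$ and $q^{-(a(1)+\cdots+a(i))+i}$ (which cancel) so that one factor gets absorbed into the coefficient sum via Lemma \ref{lem:g-sum} and the other into the operator sum via Corollary, after hitting the expression with $P^{(\ell)}$ on the left (legitimate by Lemma \ref{lem:PE=EP=E}, since $P^{(\ell)} E_1^{i,\,j\,(\ell\,w)} = E_1^{i,\,j\,(\ell\,w)}$).

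The key steps in order: (1) Apply $P^{(\ell)}$ on the left of the double-sum expression for $E_1^{i,\,j\,(\ell\,w)}$, using Lemma \ref{lem:PE=EP=E} to justify that this does not change the operator. (2) For each fixed $(\varepsilon_{\beta}(j))_{\ell}$, split $g_{ij}$ into the part depending on $(\varepsilon_{\alpha}^{'}(i))_{\ell}$, namely $q^{a(1)+\cdots+a(i)+i(i-1)/2}$ up to $(\varepsilon_{\alpha}^{'}(i))$-independent factors, and use Lemma \ref{lem:g-sum} to carry out the sum over $(\varepsilon_{\alpha}^{'}(i))_{\ell}$ — but since we simultaneously need the operator to be summed, instead insert a resolution $1 = q^{a(1)+\cdots+a(i)-i}\cdot q^{-(a(1)+\cdots+a(i))+i}$: route $q^{-(a(1)+\cdots+a(i))+i}$ into $P^{(\ell)}\,e_1^{\varepsilon_1^{'}(i),\varepsilon_1(j)}\cdots$, which by Corollary equals a fixed multiple of $||\ell,i\rangle\langle\ell,0||\prod_{b\notin{\bm\alpha}^+}\sigma_b^+$ independent of ${\bm\alpha}^-$; then route $q^{a(1)+\cdots+a(i)-i}$ into the coefficient sum $\sum_{(\varepsilon_{\alpha}^{'}(i))_{\ell}} g_{ij}(\cdot)\,q^{a(1)+\cdots+a(i)-i}$, which by Lemma \ref{lem:g-sum} equals $\binom{\ell}{i}_q\binom{\ell}{j}_q^{-1}q^{b(1)+\cdots+b(j)-j}q^{i(i-1)/2-j(j-1)/2}$. (3) Multiply out the resulting constants from Corollary and from Lemma \ref{lem:g-sum}, check that the $\binom{\ell}{i}_q^{-1}$ from the Corollary cancels against the $\binom{\ell}{i}_q$ produced by Lemma \ref{lem:g-sum}, and reassemble, re-expressing $||\ell,i\rangle\langle\ell,0||\prod_b\sigma_b^+$ back into $P^{(\ell)}\,\chi_{12\cdots\ell}\,e_1^{\varepsilon_1^{'}(i),\varepsilon_1(j)}\cdots e_{\ell}^{\varepsilon_{\ell}^{'}(i),\varepsilon_{\ell}(j)}\,\chi^{-1}_{12\cdots\ell}$ summed over $(\varepsilon_{\beta}(j))_{\ell}$, using the gauge form of the Corollary and Lemma \ref{lem:inv-gauge} to recognize the $e^{(i-j)\xi_1}$-type prefactor and the $\chi$-conjugation. (4) Finally observe grading-independence of the right side: by the display $E^{i,\,j\,(\ell\,+)} = E^{i,\,j\,(\ell\,p)}$, the left side does not depend on $w$, so neither does the right, giving the claim for $w=\pm,p$; and the independence of the chosen sequence $(\varepsilon_{\alpha}^{'}(i))_{\ell}$ is precisely what the Corollary/Lemma \ref{lem:independance} guarantees, since every trace of ${\bm\alpha}^-$ has been eliminated.

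The main obstacle I expect is the exponent bookkeeping: there are several competing $q$-powers — $q^{i(i-1)/2}$, $q^{-j(j-1)/2}$, the $q^{a(\cdot)}$ and $q^{b(\cdot)}$ sums, and the factors $q^{-(i+j)\ell}$ hidden in $g_{ij}$ — plus the $e^{(i-j)\xi_1}$ coming from the gauge conjugation in Lemma \ref{lem:inv-gauge}, and it is easy to misplace a sign or an $i$ versus $j$. The clean way to control this is to do the computation in the gauge-transformed picture throughout (i.e. work with $\chi_{12\cdots\ell}(\cdots)\chi^{-1}_{12\cdots\ell}$ from the outset, using Lemma \ref{lem:inv-gauge}), so that the $q^{a(\cdot)-i}$ and $q^{b(\cdot)-j}$ combinations appear already packaged and only the scalar prefactor $\binom{\ell}{i}_q\binom{\ell}{j}_q^{-1}q^{i(i-1)/2-j(j-1)/2}e^{-(i-j)\xi_1}$ needs to be pinned down. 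Everything else is mechanical substitution of the four lemmas and the Corollary.
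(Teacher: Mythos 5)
Your proposal is correct and follows essentially the same route as the paper's own derivation in Appendix A: insert the projector via Lemma \ref{lem:PE=EP=E}, split off the factor $q^{\pm(a(1)+\cdots+a(i)-i)}$ so that Lemma \ref{lem:independance} (equivalently its Corollary) makes the operator part independent of ${\bm \alpha}^{-}$ while Lemma \ref{lem:g-sum} performs the sum over $(\varepsilon_{\alpha}^{'}(i))_{\ell}$, and then reassemble with the gauge conjugation of Lemma \ref{lem:inv-gauge} to produce the $e^{-(i-j)\xi_1}$ prefactor. The only cosmetic difference is that you phrase the independence step through the operator-valued Corollary rather than the scalar matrix element $\langle \ell, i||\sigma_{a(1)}^{-}\cdots\sigma_{a(i)}^{-}||\ell,0\rangle$ directly, and your remark about the $\left[\begin{smallmatrix}\ell\\ i\end{smallmatrix}\right]_q$ "cancelling" is only temporary (it reappears when you convert back to the fixed-sequence operator form), which your reassembly step correctly accounts for.
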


We shall show the derivation of Proposition \ref{prop:E-w} 
explicitly in Appendix A.

In terms of the string center: $\Lambda_1=\xi_1 - (\ell-1)\eta/2$,
the $q$ factors in eq. 
(\ref{eq:E-w}) are expressed as follows.    
\bea 
q^{i(i-1)/2 - j(j-1)/2} e^{-(i- j) \xi_1}   
& = & q^{-i(\ell-i)/2 + j(\ell-j)/2} e^{-(i- j) (\xi_1 - (\ell-1)\eta/2)}
\non \\ 
& = & q^{-i(\ell-i)/2 + j(\ell-j)/2} e^{-(i- j) \Lambda_1} \, . 
\eea
Thus, introducing the symbol 
\be 
N_{i, \, j}^{(\ell)} 
= 
\left[ 
\begin{array}{c} 
\ell \\ 
i 
\end{array}  
\right]_{q}
\left[ 
\begin{array}{c} 
\ell \\ 
 j 
\end{array}  
\right]_{q}^{-1}
q^{-i(\ell-i)/2 + j(\ell-j)/2} \, , 
\ee
we express  (\ref{eq:E-w}) compactly as follows 
\be 
E_1^{i, \, j \, (\ell \, w)}   
 =  N_{i, \, j}^{(\ell)} 
 e^{-(i- j) \Lambda_1}   
 P^{(\ell)}_{1 2 \ldots \ell}  \, \sum_{ ( \varepsilon_{\beta}(j) )_{\ell} } 
\chi_{1 2 \cdots \ell} \, 
e_1^{\varepsilon_1^{'}(i), \, \varepsilon_1(j)} \cdots 
e_{\ell}^{\varepsilon_{\ell}^{'}(i), \, \varepsilon_{\ell}(j) } \, 
\chi^{-1}_{1 2 \cdots \ell} \, . \label{eq:E-w-N}
\ee

In Ref. \cite{DM2} the Hermitian elementary operators 
${\widetilde E}^{i, \, j \, (\ell, \, +)}$ are introduced. 
The expectation values of the Hermitian elementary operators are the same 
as those of the standard elementary operators, 
${E}^{i, \, j \, (\ell, \, +)}$.  
We shall show the reduction formula for the Hermitian 
elementary operators in Appendix B.

%
%
\subsection{General spin-$\ell/2$ elementary operators}

Let us consider a similarity transformation 
of the basis vectors as follows 
\be 
|| \ell, m \rangle \rightarrow || \ell, m \rangle / g(m) \, , \quad 
\langle \ell, n || \rightarrow  g(n) \, \langle \ell, n ||  \, , 
\quad \mbox{for} \, \, 
m, n= 0, 1, \ldots, \ell. 
\ee
In the spin-$\ell/2$ representation constructed in the 
$\ell$th tensor product space  $(V^{(1)})^{\otimes \ell}$, 
we define the general spin-$s$ elementary operators associated with 
principal grading, ${\hat E}^{i, \, j \, (\ell \, p)}$,  
by 
\be 
{\hat E}^{i, \, j \, (\ell \, p)} = || \ell, i \rangle \, \langle \ell, j || \,
{\frac {g(j)} {g(i)}}  \, , \quad \mbox{for} \, \, 
i, j = 0, 1, \ldots, \ell. 
\ee
Then, through the spin-$\ell/2$ gauge transformation 
we define the general spin-$s$ elementary operators 
associated with homogeneous grading by 
\be 
{\hat E}^{i, \, j \, (\ell \, +)} = 
\chi_{1 2 \ldots N_s}^{(\ell)} \, 
{\hat E}^{i, \, j \, (\ell \, p)} \, 
\left( \chi_{1 2 \ldots N_s}^{(\ell)} \right)^{-1} \, .  
\ee
We explicitly have 
\be 
{\hat E}^{i, \, j \, (\ell \, +)} 
= || \ell, i \rangle \, \langle \ell, j || \,
{\frac {g(j)} {g(i)}}  \, e^{(i-j)(\xi-(\ell-1)\eta/2)}   
, \quad \mbox{for} \, \, 
i, j = 0, 1, \ldots, \ell. \label{eq:generalE(+)}
\ee
Here we recall that the quantity $\xi-(\ell-1)\eta/2$ corresponds to 
the string center of the $\ell$-string: 
$\xi, \xi-\eta, \ldots, \xi-(\ell-1)\eta$.  
They are originally the evaluation parameters 
of the $\ell$th tensor product  of the spin-1/2 representations, 
$(V^{(1)})^{\otimes \ell}$. 

We remark that the definition of the 
general elementary operators $\widehat{E}^{i, \, j \, (\ell \, w)}$ 
associated with grading $w$ are covariant under the gauge transformations. 
We also remark that if we put $g(j)= \exp( j(\xi-(\ell-1)\eta/2) )$, 
then expression (\ref{eq:generalE(+)}) reduces to that of 
$E^{i, \, j \, (\ell \, +)}$.

We define the general spin-$\ell/2$ elementary operators 
associated with principal grading acting in the tenor product space 
$V_1^{(\ell)} \otimes \cdots V_{N_s}^{(\ell)}$ by 
\be 
\widehat{ E}_k^{i, \, j \, (\ell \, p)} = (I^{(\ell)})^{\otimes (k-1)} 
\otimes \widehat{ E}^{i, \, j \, (\ell \, p)} \otimes 
(I^{(\ell)})^{\otimes (N_s -k)} 
\, , \quad \mbox{for} \, \, 
i, j = 0, 1, \ldots, \ell. 
\ee
Similarly we define that of homogeneous grading, 
 $\widehat{ E}_{k}^{i, \, j \, (\ell, \, +)}$ 
for $i, j = 0, 1, \ldots, \ell$.

Let us introduce the normalization factor $\widehat{ N}_{i, \, j}^{(\ell)}$ 
by ${\widehat N}_{i, \, j}^{(\ell)} = {N}_{i, \, j}^{(\ell)} g(i)/g(j)$. 
We have 
\bea 
{\widehat N}_{i, \, j}^{(\ell)} = 
{\frac {g(j)} {g(i)}} \, 
{\frac {F(\ell, i)  } {F(\ell, j)}} 
\, q^{i(\ell-i)/2-j(\ell-j)/2} .   
\eea 
We define $\delta(w, p)$ 
for gradings $\pm$ and $p$ by 
\be 
\delta(w, p) = \left\{ 
\begin{array}{cc}
1 &  \mbox{if} \, \, w=p , \\  
0 &  \mbox{otherwise} . 
\end{array} 
\right.
\ee
With factor $\widehat{ N}_{i, \, j}^{(\ell)}$ 
and the string center: $\Lambda_1= \xi_1 -(\ell-1)\eta/2$, 
from Proposition \ref{prop:E-w}, we have  
\be
\widehat{ E}_1^{i, \, j \, (\ell \, w)} =
\widehat{ N}_{i, \, j}^{(\ell)}  \,   
e^{-(i- j) \Lambda_1 \, \delta(w, p)} \, 
P^{(\ell)}_{1 2 \ldots \ell}  \, \sum_{ ( \varepsilon_{\beta}(j) )_{\ell} } 
\chi_{1 2 \cdots \ell} \, 
e_1^{\varepsilon_1^{'}(i), \, \varepsilon_1(j)} \cdots 
e_{\ell}^{\varepsilon_{\ell}^{'}(i), \, \varepsilon_{\ell}(j) } \, 
\chi^{-1}_{1 2 \cdots \ell} \, . \label{eq:E-w2}
\ee
Here, we recall that sequence 
$(\varepsilon_{\alpha}^{'}(i))_{\ell}$ is fixed.  
We also recall that $\Lambda_1$ denotes . 

Let us recall $\Lambda_k= \xi_k - (\ell-1) \eta/2$ for $k=1, 2, \ldots, N_s$. 
We have the following. 

\begin{proposition} 
In the $k$th component of the quantum space $(V^{(\ell)})^{\otimes N_s}$ 
for $1 \le k \le N_s$  
we take integers $i_k$ and $j_k$ 
satisfying $0 \le i_k, j_k \le \ell$.  
The general spin-$\ell/2$ elementary operator,  
$E_k^{i_k \, j_k \, (\ell \, w)}$,  
is decomposed into a sum of products of the spin-1/2 elementary operators  
as follows 
\bea 
\widehat{ E}^{i_k \, j_k \, (\ell \, w)}_k 
%
= \widehat{ N}_{i_k, \, j_k}^{(\ell)} 
\,   
e^{-(i- j) \Lambda_k \, \delta(w, p)} 
\,  
P_{\ell(k-1)+1}^{(\ell)} \, 
\sum_{(\epsilon_{\beta}(j_k))_{\ell} } \chi_{1 2 \cdots L} \, 
e_{\ell(k-1)+1}^{\epsilon_1^{'}(i_k), \epsilon_1(j_k)} \cdots 
e_{\ell(k-1)+\ell}^{\epsilon_{\ell}^{'}(i_k), \epsilon_{\ell}(j_k)} 
\, 
\chi_{1 2 \cdots L}^{-1} \, . \label{eq:gen-E(w)-e}
\eea 
Here we fix a sequence $(\varepsilon_{\alpha}^{'}(i))_{\ell}$. 
Furthermore, expression (\ref{eq:gen-E(w)-e}) 
does not depend on the order of $\varepsilon_{\alpha}^{'}(i)$ 
with respect to $\alpha$. 
\label{prop:gen-E(w)-e}
\end{proposition}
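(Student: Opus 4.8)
The plan is to reduce Proposition \ref{prop:gen-E(w)-e} to the single-site statement already obtained in (\ref{eq:E-w2}) by a straightforward ``tensoring with identities'' argument, so the work is mostly bookkeeping about how the various gauge transformations, projection operators, and elementary operators localize on the $k$th block of sites $\{\ell(k-1)+1,\dots,\ell(k-1)+\ell\}$ inside $(V^{(1)})^{\otimes L}$. First I would write out the definition $\widehat{E}_k^{i_k\,j_k\,(\ell\,w)} = (I^{(\ell)})^{\otimes(k-1)}\otimes \widehat{E}^{i_k\,j_k\,(\ell\,w)}\otimes(I^{(\ell)})^{\otimes(N_s-k)}$ and observe that, up to relabelling the $\ell$ sites $1,\dots,\ell$ of the single-site formula (\ref{eq:E-w2}) as $\ell(k-1)+1,\dots,\ell(k-1)+\ell$, the right-hand side of (\ref{eq:E-w2}) acts as the identity on all the other blocks. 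The point is that $P^{(\ell)}_{\ell(k-1)+1}$ is precisely the projector onto the $k$th spin-$\ell/2$ factor, and the product $e_{\ell(k-1)+1}^{\epsilon_1^{'}(i_k),\epsilon_1(j_k)}\cdots e_{\ell(k-1)+\ell}^{\epsilon_\ell^{'}(i_k),\epsilon_\ell(j_k)}$, sandwiched with that projector, reconstructs $||\ell,i_k\rangle\langle\ell,j_k||$ on that block and leaves everything else untouched.

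The only genuine subtlety is the replacement of the \emph{local} gauge transformation $\chi_{1 2\cdots\ell}$ appearing in (\ref{eq:E-w2}) by the \emph{global} one $\chi_{1 2\cdots L}=\Phi_1(w_1)\cdots\Phi_L(w_L)$. Here I would split $\chi_{1 2\cdots L}$ as a product of the block piece $\chi^{(k)} := \prod_{\beta=1}^{\ell}\Phi_{\ell(k-1)+\beta}(w_{\ell(k-1)+\beta})$ and the complementary piece supported on the other sites. The complementary piece commutes with $e_{\ell(k-1)+1}^{\,\cdot\,,\cdot}\cdots e_{\ell(k-1)+\ell}^{\,\cdot\,,\cdot}$ and with $P^{(\ell)}_{\ell(k-1)+1}$ (disjoint sites), so it cancels against its own inverse, and we are left with exactly $\chi^{(k)}(\cdots)(\chi^{(k)})^{-1}$ acting on the $k$th block. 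Since the inhomogeneities on that block are the complete (or almost-complete, but here one uses the complete) $\ell$-string centered at $\Lambda_k$, $w_{\ell(k-1)+\beta}^{(\ell)}=\xi_k-(\beta-1)\eta$, this block-local gauge transformation is the translate to sites $\ell(k-1)+1,\dots$ of the $\chi_{12\cdots\ell}$ built from $\xi_1$ in (\ref{eq:E-w2}); applying that single-site identity verbatim (with $\xi_1\to\xi_k$, hence $\Lambda_1\to\Lambda_k$) yields the claimed formula, including the factor $\widehat{N}^{(\ell)}_{i_k,j_k}\,e^{-(i_k-j_k)\Lambda_k\,\delta(w,p)}$.

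Finally, the two ``furthermore'' clauses — that the result is independent of the fixed sequence $(\varepsilon_\alpha^{'}(i_k))_\ell$ and of the order of its entries — are inherited directly from the corresponding statements in Proposition \ref{prop:E-w} / Corollary (the corollary to Lemma \ref{lem:independance}), since the sum over $(\varepsilon_\beta(j_k))_\ell$ together with the projector $P^{(\ell)}_{\ell(k-1)+1}$ already collapses the $\varepsilon^{'}$-dependence via (\ref{eq:independ}). So I would simply invoke Lemma \ref{lem:independance} and its corollary on the $k$th block and note that independence is a purely local statement unaffected by the spectator sites.

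I expect no serious obstacle; the main thing to get right is the commutation/cancellation of the spectator part of $\chi_{12\cdots L}$ and the careful index shift $1,\dots,\ell \mapsto \ell(k-1)+1,\dots,\ell(k-1)+\ell$ so that the $q$-powers and the string center come out as $\Lambda_k$ rather than $\Lambda_1$. A minor point worth stating explicitly is that $\Phi_0(\lambda)$, if one uses $\chi_{0,12\cdots L}$ instead of $\chi_{12\cdots L}$, acts only on the auxiliary space and is irrelevant here since $\widehat{E}_k^{i_k\,j_k\,(\ell\,w)}$ has no auxiliary-space content; this is why only $\chi_{12\cdots L}$ appears in (\ref{eq:gen-E(w)-e}).
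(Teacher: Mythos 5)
Your proposal is correct and follows essentially the same route the paper takes: the paper states Proposition \ref{prop:gen-E(w)-e} as the direct translate of the single-site formula (\ref{eq:E-w2}) (itself derived from Proposition \ref{prop:E-w}, proved in Appendix A), and the only content beyond $k=1$ is exactly the localization you describe — the spectator factors of $\chi_{1 2\cdots L}$ cancel, the block factors reproduce the single-site gauge transformation with $w_{\ell(k-1)+\beta}^{(\ell)}=\xi_k-(\beta-1)\eta$, hence $\xi_1\to\xi_k$ and $\Lambda_1\to\Lambda_k$, and the independence clauses are inherited from Lemma \ref{lem:independance} acting on the $k$th block.
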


Let us consider a product of the general 
spin-$\ell/2$ elementary operators, 
 ${\hat E}_1^{i_1 , \, j_1 \, (\ell \, w)} 
\cdots {\hat E}_m^{i_m, \, j_m \, (\ell \, w)}$, for which we shall  
evaluate the zero-temperature spin-$s$ XXZ correlation functions.    
We introduce variables $\varepsilon_{\alpha}^{[k] \, '}(i_k)$ 
and $\varepsilon_{\beta}^{[k]}(j_k)$ which take only two values 0 or 1 
for $k=1, 2, \ldots, m$ and $\alpha, \beta=0, 1, \ldots, \ell$. 
\begin{corollary}
Let us take integers $i_k$ and $j_k$ satisfying 
$1 \le i_k, j_k \le \ell$ for $k=1, 2, \ldots, m$.  
The product of the general spin-$\ell/2$ elementary operators, 
$\widehat{ E}_1^{i_1, \, j_1 \, (\ell \, w)}   
\cdots \widehat{ E}_m^{i_m, \, j_m (\ell \, w)}$, is expressed in terms of 
a sum of products of the spin-1/2 elementary operators as  
\bea
& & 
\prod_{k=1}^{m} \widehat{ E}_k^{i_k, \, j_k \, (\ell \, w)} 
= \prod_{k=1}^{m} 
\left( \widehat{ N}_{i_k, \, j_k}^{(\ell)} 
\,  e^{- (i_k-j_k) \Lambda_k \delta(w, p)} \, 
\right) 
\nonumber \\ 
& & \qquad \times \quad 
P^{(\ell)}_{1 \cdots L}  \, 
\sum_{( \varepsilon_{\beta}^{[1]}(j_1) )_{\ell}} \cdots 
\sum_{( \varepsilon_{\beta}^{[m]}(j_m) )_{\ell} }
\chi_{1 2 \cdots (\ell m)} \,  \prod_{k=1}^{m} \left( 
e_{\ell(k-1)+1}^{\varepsilon_1^{[k] \, '}(i_k), \, 
\varepsilon_1^{[k]}(j_k) } \cdots 
e_{\ell(k-1)+\ell}^{\varepsilon_{\ell}^{[k]\, '}(i_k), \, 
\varepsilon_{\ell}^{[k]}(j_k) } \right) \, 
\chi^{-1}_{1 2 \cdots (\ell m)} \, . \non \\ \label{eq:productE}
\eea
Here we fix $(\varepsilon_{\alpha}^{[k] \, '}(i_k))_{\ell}$ 
for each integer $k$ of $1 \le k \le m$ . 
\label{cor:productE}
\end{corollary}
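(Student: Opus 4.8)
The plan is to obtain Corollary \ref{cor:productE} by straightforwardly iterating Proposition \ref{prop:gen-E(w)-e} over the $m$ factors and then collecting the gauge transformations and projection operators. First I would write, for each $k$ with $1 \le k \le m$, the single-site reduction
\be
\widehat{ E}^{i_k \, j_k \, (\ell \, w)}_k
= \widehat{ N}_{i_k, \, j_k}^{(\ell)} \, e^{-(i_k- j_k) \Lambda_k \, \delta(w, p)} \,
P_{\ell(k-1)+1}^{(\ell)} \,
\sum_{(\varepsilon_{\beta}^{[k]}(j_k))_{\ell} } \chi_{1 2 \cdots L} \,
e_{\ell(k-1)+1}^{\varepsilon_1^{[k] \, '}(i_k), \varepsilon_1^{[k]}(j_k)} \cdots
e_{\ell(k-1)+\ell}^{\varepsilon_{\ell}^{[k] \, '}(i_k), \varepsilon_{\ell}^{[k]}(j_k)}
\, \chi_{1 2 \cdots L}^{-1} \, ,
\ee
using a fresh fixed sequence $(\varepsilon_{\alpha}^{[k] \, '}(i_k))_{\ell}$ for each $k$, and then multiply the $m$ expressions together in order. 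The scalar factors $\widehat{ N}_{i_k, \, j_k}^{(\ell)} e^{-(i_k-j_k)\Lambda_k \delta(w,p)}$ pull out in front as the product $\prod_{k=1}^{m}(\cdots)$ appearing on the right-hand side of (\ref{eq:productE}).

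The key simplification is that in the product of the $m$ bracketed operator factors, the pattern $\chi_{1 2 \cdots L}^{-1}\,\chi_{1 2 \cdots L}$ cancels between consecutive factors (the gauge transformation $\chi_{1 2 \cdots L}$ is a product of commuting diagonal matrices $\Phi_j(w_j)$), leaving a single outermost $\chi_{1 2 \cdots L}$ on the left and a single $\chi_{1 2 \cdots L}^{-1}$ on the right; restricting to the first $\ell m$ sites (all operators $E^{\cdot,\cdot}_{\ell(k-1)+\alpha}$ for $1\le k\le m$, $1\le \alpha\le\ell$ act there) this is exactly $\chi_{1 2 \cdots (\ell m)}$ and its inverse, since $\Phi_j$ for $j>\ell m$ acts as the identity on these sites. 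Next I would handle the projection operators: the $m$ factors $P_{\ell(k-1)+1}^{(\ell)}$, $k=1,\dots,m$, appear interleaved with the spin-1/2 elementary operators, but since $P^{(\ell)}_{\ell(k-1)+1}$ commutes with every $e^{\cdot,\cdot}_{\ell(k'-1)+\alpha}$ for $k'\ne k$ (disjoint sites) and, by Lemma \ref{lem:PE=EP=E} together with Lemma \ref{lem:inv-gauge}, absorbs the gauge-conjugated elementary operators on its own block, all of them can be collected at the left. Using $P^{(\ell)}_{1\cdots L}=\prod_{b=1}^{N_s}P^{(\ell)}_{\ell(b-1)+1}$ and the idempotency $P^{(\ell)}_{\ell(k-1)+1}P^{(\ell)}_{\ell(k-1)+1}=P^{(\ell)}_{\ell(k-1)+1}$, together with the fact that the factors $P^{(\ell)}_{\ell(k-1)+1}$ for $k>m$ act trivially on the first $\ell m$ sites once the operator is already in the range of the $k\le m$ projectors, the product $\prod_{k=1}^{m}P^{(\ell)}_{\ell(k-1)+1}$ may be replaced by $P^{(\ell)}_{1\cdots L}$ as written in (\ref{eq:productE}). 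Finally the $m$ independent sums $\sum_{(\varepsilon_{\beta}^{[k]}(j_k))_{\ell}}$ remain as the nested sums on the right-hand side, and the product $\prod_{k=1}^{m}\big(e_{\ell(k-1)+1}^{\varepsilon_1^{[k]'}(i_k),\varepsilon_1^{[k]}(j_k)}\cdots e_{\ell(k-1)+\ell}^{\varepsilon_\ell^{[k]'}(i_k),\varepsilon_\ell^{[k]}(j_k)}\big)$ is precisely the collected string of spin-1/2 elementary operators.

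I expect the main obstacle to be bookkeeping the placement of the projection operators: one must argue carefully that the interleaved $P^{(\ell)}_{\ell(k-1)+1}$'s can all be moved to the far left and consolidated into $P^{(\ell)}_{1\cdots L}$ without altering the sum, which relies on Lemma \ref{lem:PE=EP=E} applied block by block and on the mutual commutativity of projectors and operators living on disjoint site blocks. The independence of (\ref{eq:productE}) from the chosen orderings of the fixed sequences $(\varepsilon_{\alpha}^{[k]'}(i_k))_{\ell}$ follows immediately from the corresponding statement in Proposition \ref{prop:gen-E(w)-e}, applied once per factor $k$, since the scalar normalization and the left-multiplied projector together render the expression insensitive to which representative sequence is fixed. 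The cancellation of gauge factors and the reduction $\chi_{1\cdots L}\to\chi_{1\cdots(\ell m)}$ are routine, so I would state them briefly and devote the bulk of the argument to the projector consolidation.
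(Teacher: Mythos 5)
Your proposal is correct and takes essentially the same route as the paper, which states this corollary without proof as an immediate consequence of Proposition \ref{prop:gen-E(w)-e}: apply the single-block reduction to each factor, cancel the adjacent $\chi^{-1}_{12\cdots L}\chi_{12\cdots L}$ pairs, commute the block projectors (which act on disjoint site blocks from the conjugated elementary operators of the other factors) to the far left, and restrict the gauge transformation to the first $\ell m$ sites. The only caveat is your justification for promoting $\prod_{k=1}^{m}P^{(\ell)}_{\ell(k-1)+1}$ to $P^{(\ell)}_{1\cdots L}$ — the extra projectors for $k>m$ act nontrivially on sites beyond $\ell m$, and the replacement is really licensed by the convention that the operator lives on the quantum space $(V^{(\ell)})^{\otimes N_s}\subset (V^{(1)})^{\otimes L}$ (equivalently, that it is always evaluated between states in the image of $P^{(\ell)}_{1\cdots L}$), which is the same implicit convention the paper uses.
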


The expression (\ref{eq:productE}) is 
useful for deriving the multiple-integral 
representations of correlation functions for the integrable 
higher-spin XXZ spin chain, as we shall see in \S 6.

%
%
\subsection{Quantum inverse-scattering problem for the spin-$\ell/2$ operators}

Let us recall the formula of the quantum inverse-scattering problem 
(QISP) for the spin-1/2 XXZ spin chain \cite{KMT1999,KMT2000}. 
\be 
x_n = \prod_{k=1}^{n-1}\left(A^{(1 \, w)}+D^{(1 \, w)} \right)(w_k) \cdot 
{\rm tr}_0 \left(x_0 T_{0, \, 1 2 \cdots L}^{(1 \, w)}(w_n) \right) 
\cdot \prod_{k=1}^{n} \left(A^{(1 \, w)}+D^{(1 \, w)} \right)^{-1}(w_k) \, . 
\label{eq:spin-1/2-QISP}
\ee
Here we assume that inhomogeneity parameters $w_j$ are given by 
generic values so that the transfer matrices 
$\left( A^{(1 \, w)}+D^{(1 \, w)} \right)(w_k)$ are regular 
for $k=1, 2, \ldots, n$. 
   
Making use of the QISP formula (\ref{eq:spin-1/2-QISP}) 
we have the following expressions for $b=1, 2, \ldots, N_s$: 
\bea
& & 
e_{\ell(b-1) +1}^{\varepsilon_1^{'}, \varepsilon_1} \cdots 
e_{\ell(b-1)+\ell}^{\varepsilon_{\ell}^{'}, \varepsilon_{\ell}}
 =   
\prod_{k=1}^{\ell (b-1)} 
\left(A^{(1 \, w)}(w_{k}) + D^{(1 \, w)}(w_{k}) \right) 
\non \\ 
& & \quad \times 
T^{(1 \, w)}_{\varepsilon_1, \, \varepsilon_1^{'}}
(w_{\ell(b-1)+1}) \cdots 
T^{(1 \, w)}_{\varepsilon_{\ell}, \, \varepsilon_{\ell}^{'}}
(w_{\ell(b-1)+\ell}) 
\prod_{k=1}^{\ell b } 
\left(A^{(1 \, w)}(w_{k}) 
+ D^{(1 \, w)}(w_{k}) \right)^{-1} \, . \label{eq:QISP-prod-e}
\eea
Here we have denoted by $T_{\alpha, \beta}(\lambda)$ the $(\alpha, \beta)$ 
element of the monodromy matrix $T(\lambda)$.

Applying (\ref{eq:QISP-prod-e}) to reduction formula (\ref{eq:E-w}) 
(or (\ref{eq:gen-E(w)-e})) 
we obtain the QISP formula for the spin-$\ell/2$ local 
operators. For an illustration, we show the case of $b=1$ as follows   
\bea 
& & \widehat{ E}^{i \, j \, (\ell \, w)}_1 = 
\widehat{ N}_{i, \, j}^{(\ell)} \, e^{- (i-j) \Lambda_1 \delta(w, p)} 
\, \times \non \\ 
&  \times &  P_{1 \cdots \ell}^{(\ell)} \,  \cdot \,  \chi_{1 2 \cdots \ell}  
\,  
\sum_{(\varepsilon_{\beta}(j))_{\ell}} 
T^{(1 \, w)}_{\varepsilon_1(j), \, \varepsilon_1^{'}(i)}
(w_1) \cdots 
T^{(1 \, w)}_{\varepsilon_{\ell}(j), \, \varepsilon_{\ell}^{'}(i)}
(w_{\ell}) 
\prod_{k=1}^{\ell} 
\left(A^{(1 \, w)}(w_{k}) 
+ D^{(1 \, w)}(w_{k}) 
\right)^{-1} 
\chi_{1 2 \cdots \ell}^{-1} 
\, . \non \\ 
\eea
Here, we fix a sequence $(\varepsilon_{\alpha}^{'}(i))_{\ell}$.  

%
%
\subsection{Non-regularity of the transfer matrix at special points}

Let us consider the sector of $M$ down-spins 
on the spin-1/2 chain with $L$ sites.  
\begin{proposition} 
In the sector of $M$ down-spins with $1 \le M \le L-1$, 
the spin-1/2 transfer matrix $A^{(\ell \, w; \ 0)}(\lambda) + 
D^{(\ell \, w; \ 0)}(\lambda)$ is non-regular 
at $\lambda=w_{\ell (k-1) + 1}^{(\ell)}+ n \pi \sqrt{-1}$ 
for $k=1, 2, \ldots, N_s$  and $n \in {\bf Z}$.  
Here, $w_{\ell (k-1) + 1}^{(\ell)}=\xi_k$ is the first rapidity 
of the $k$th complete $\ell$-string.  
\end{proposition}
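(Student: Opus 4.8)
The plan is to exhibit a factorization of the transfer matrix $A^{(\ell\,w;\,0)}(\lambda)+D^{(\ell\,w;\,0)}(\lambda)={\rm tr}_{0}\,T^{(1,\,\ell\,w;\,0)}_{0,\,12\cdots L}(\lambda)$, evaluated at $\lambda=\xi_k$, in which one factor is a spin-$1/2$ $R$-matrix that is already non-invertible, and then to verify that the kernel of that $R$-matrix meets the sector of $M$ down-spins for every $1\le M\le L-1$.

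First I would specialize the point. Put $n=\ell(k-1)+1$, so that $w_n^{(\ell)}=\xi_k$ and $w_{n+1}^{(\ell)}=\xi_k-\eta$ are the first two rapidities of the $k$th complete $\ell$-string. In the product $T^{(1,\,\ell\,w;\,0)}_{0,\,12\cdots L}(\lambda)=R^{(1\,w)}_{0L}(\lambda-w_L^{(\ell)})\cdots R^{(1\,w)}_{01}(\lambda-w_1^{(\ell)})$, at $\lambda=\xi_k$ the $n$th factor becomes $R^{(1\,w)}_{0n}(0)=P_{0n}$ (for all three gradings $R^{(1\,w)}(0)$ is the permutation operator), while the factor immediately to its left is $R^{(1\,w)}_{0,\,n+1}(\eta)$; all factors of the product stay finite at $\lambda=\xi_k$ because the $\xi_b$ and $\eta$ are generic. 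Using $R^{(1\,w)}_{0j}(u)\,P_{0n}=P_{0n}\,R^{(1\,w)}_{nj}(u)$ to push $P_{0n}$ to the left past $R^{(1\,w)}_{0,\,n+1}(\eta)$ — which produces the residual factor $R^{(1\,w)}_{n,\,n+1}(\eta)$ — and then using that this residual factor commutes with every $R^{(1\,w)}_{0j}(\cdot)$ with $j<n$ standing to its right (all supported away from sites $n$ and $n+1$) and carries no auxiliary-space index, the trace over $V_0$ collapses to
\be
A^{(\ell\,w;\,0)}(\xi_k)+D^{(\ell\,w;\,0)}(\xi_k)={\cal M}\;R^{(1\,w)}_{n,\,n+1}(\eta),
\ee
where ${\cal M}$ is some operator on $(V^{(1)})^{\otimes L}$ whose explicit form plays no role. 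This is precisely the QISP manipulation behind (\ref{eq:spin-1/2-QISP}), localized to the pair of sites $n,\,n+1$.

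Second, a direct computation gives $\det R^{(1\,w)}(u)=\sinh(u-\eta)/\sinh(u+\eta)$ for each grading, so $R^{(1\,w)}(\eta)$ is not invertible: on the two-site space it acts as the identity on $|00\rangle$ and $|11\rangle$ and as a rank-one operator on the span of $|01\rangle$ and $|10\rangle$. Hence
\be
\ker R^{(1\,w)}_{n,\,n+1}(\eta)=v^{(w)}_{n,n+1}\otimes\left(V^{(1)}\right)^{\otimes(L-2)},
\ee
with $v^{(w)}_{n,n+1}$ the single $1$-down-spin vector spanning the kernel of the rank-one block. A vector of this shape whose remaining $L-2$ sites carry $m$ down-spins lies in the sector of $M=m+1$ down-spins, and such vectors exist exactly when $0\le m\le L-2$, i.e. $1\le M\le L-1$. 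Since the transfer matrix equals ${\cal M}\,R^{(1\,w)}_{n,\,n+1}(\eta)$ it annihilates this kernel, and since it commutes with the total $S^{z}$ it is block diagonal with respect to the down-spin number; therefore its restriction to the sector of $M$ down-spins has a nonzero kernel, i.e. is non-regular, for every $1\le M\le L-1$. (In the sectors $M=0$ and $M=L$ the transfer matrix at $\lambda=\xi_k$ reduces to a nonzero scalar, which is why those two sectors are excluded.) Finally $R^{(1\,w)}(u+\pi\sqrt{-1})=R^{(1\,w)}(u)$, since $\sinh(u+\pi\sqrt{-1})=-\sinh u$ and $e^{\pm(u+\pi\sqrt{-1})}=-e^{\pm u}$ leave $b(u)$ and $c^{\pm}(u)$ unchanged; hence the transfer matrix is $\pi\sqrt{-1}$-periodic in $\lambda$, and the statement at $\lambda=\xi_k$ propagates to $\lambda=\xi_k+n\pi\sqrt{-1}$ for all $n\in{\bf Z}$.

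The step that requires care is the factorization: one must check that, after replacing the $n$th factor of the monodromy matrix by $P_{0n}$ and sliding it left, the residual $R^{(1\,w)}_{n,\,n+1}(\eta)$ genuinely commutes past all factors standing to its right — which holds only because those are exactly the factors $R^{(1\,w)}_{0j}$ with $j<n$, supported on sites disjoint from $\{n,n+1\}$. Granting this, the remaining ingredients — the determinant and kernel of $R^{(1\,w)}$ at $u=\eta$ and the $S^{z}$-content of that kernel — are elementary, and the $\pi\sqrt{-1}$-periodicity is immediate.
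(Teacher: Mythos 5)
Your proof is correct, and it takes a genuinely different route from the paper's. The paper's proof is a direct matrix computation: it evaluates the entries of $A^{(\ell\,w;\,0)}(\lambda)+D^{(\ell\,w;\,0)}(\lambda)$ in the sector of $M$ down-spins and observes that at $\lambda=\xi_k$ two columns become parallel, verifying this explicitly only for $L=2$ and (in Appendix C) $L=3$. You instead give a structural argument: since $w_{\ell(k-1)+1}^{(\ell)}=\xi_k$ and $w_{\ell(k-1)+2}^{(\ell)}=\xi_k-\eta$ sit in the same complete string, the factor $R_{0n}(0)=P_{0n}$ can be traded through its neighbour to produce $T(\xi_k)=\bigl[\cdots P_{0n}\cdots\bigr]R^{(1\,w)}_{n,\,n+1}(\eta)$, and the degeneracy of $R^{(1\,w)}(\eta)$ (whose determinant $\sinh(u-\eta)/\sinh(u+\eta)$ vanishes at $u=\eta$) then forces the transfer matrix to annihilate the whole subspace $v^{(w)}_{n,n+1}\otimes(V^{(1)})^{\otimes(L-2)}$. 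This buys several things the paper's sketch does not: a uniform proof for all $L$, all gradings $w=\pm,p$, and all sectors $1\le M\le L-1$ at once; an explicit description of the kernel (which, having exactly one down-spin on sites $n,n+1$, explains both why every intermediate sector is hit and why $M=0,L$ are exceptional); and a conceptual explanation that the non-regularity is caused precisely by the complete-string condition $w_{n+1}=w_n-\eta$ — indeed your factorization accounts for the paper's "parallel columns," since right-multiplication by the rank-one block of $R_{n,n+1}(\eta)$ is exactly what makes two columns proportional. The one step that needed care — that the residual $R^{(1\,w)}_{n,\,n+1}(\eta)$ commutes past all factors $R^{(1\,w)}_{0j}$ with $j<n$ and past the partial trace ${\rm tr}_0$ — you have identified and justified correctly, using that those factors are supported on spaces disjoint from $\{n,n+1\}$ and that $R_{n,n+1}$ carries no auxiliary index.
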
 
\begin{proof} 
Calculating the matrix elements of the transfer matrix  
$A^{(\ell \, w; \ 0)}(\lambda) + D^{(\ell \, w; \ 0)}(\lambda)$ 
in the sector of $M$ down-spins 
we show that there exists a pair of column vectors that are parallel 
to each other if $\lambda=\xi_k$.  
\end{proof} 

Thus, the inverse matrix of the spin-1/2 transfer matrix 
$A^{(\ell \, w; \ 0)}(\lambda) + D^{(\ell \, w; \ 0)}(\lambda)$ 
does not exist at the special points. 
We remark that in the sector of $M=0$ (and $M=L$), 
it is regular at $\lambda=\xi_k + n \pi \sqrt{-1}$ 
for $k=1, 2, \ldots, N_s$ and $n \in {\bf Z}$.

For an illustration, let us consider the case of $L=2$ 
with $\ell=1$ and $N_s=1$. 
The operators $A$ and $D$ are explicitly given by  
\be  
A_{12}^{(2 \, +; \, 0)}(\lambda)= 
\left(
\begin{array}{cccc} 
1 & 0 & 0 & 0 \\ 
0 & b_{02} & c_{01}^{+} c_{02}^{-}  & 0 \\ 
0 & 0 & b_{01} & 0 \\ 
0 & 0 & 0 & b_{01} b_{02}  
\end{array} 
\right)_{[1, 2]}\, , \quad 
D_{12}^{(2 \, +; \, 0)}(\lambda)= 
\left(
\begin{array}{cccc} 
b_{01} b_{02} & 0 & 0 & 0 \\ 
0 & b_{01} & 0  & 0 \\ 
0 & c_{01}^{-} c_{02}^{+} & b_{02} & 0 \\ 
0 & 0 & 0 & 1  
\end{array} 
\right)_{[1, 2]}\, . 
\ee
Here we have introduced  $b_{0j}$ and $c_{0j}^{\pm}$ for $j=1, 2$ by  
 $b_{0j}=b(\lambda-w_j^{(2)})$ and $c_{0j}^{\pm} = 
\exp(\pm (\lambda - w_j^{(2)})) c(\lambda - w_j^{(2)})$ for $j=1, 2$, 
respectively. Putting $\lambda=w_1^{(2)}= \xi_1$ we have  
\be 
 A_{12}^{(2 \, +; \, 0)}(\xi_1) + D_{12}^{(2 \, +; \, 0)}(\xi_1) = 
\left(
\begin{array}{cccc} 
1 & 0 & 0 & 0 \\ 
0 & {\frac 1 {[2]_q}} & {\frac {q^{-1}} {[2]_q}}   & 0 \\ 
0 & {\frac {q} {[2]_q}}  & {\frac 1 {[2]_q}}  & 0 \\ 
0 & 0 & 0 & 1 
\end{array} 
\right)_{[1, 2]} \, . 
\ee
We thus show that the transfer matrix  is non-regular 
at $\lambda=w_1^{(2)}=\xi_1$: 
\be  
\det \left(
A_{12}^{(2 \, +; \, 0)}(\xi_1) + D_{12}^{(2 \, +; \, 0)}(\xi_1) \right) = 0.  
\ee
In the sector of $M=1$ the determinant is given by 
\be 
\left. \det \left(
A_{12}^{(2 \, +; \, 0)}(\lambda) + D_{12}^{(2 \, +; \, 0)}(\lambda) \right) 
\right|_{M=1} 
= {\frac {4 \sinh(\lambda- \xi_1)} 
{\sinh(\lambda- \xi_1 + 2 \eta)}} 
\ee 

For an illustration, we shall show in Appendix C that there exists 
a pair of column vectors that are parallel to each other 
if we set $\lambda=\xi_k$, in the sector of $M=1$, 
 for the case of $L=3$ with $w_1=w_1^{(2)}$, $w_1=w_2^{(2)}$ and $w_3=\xi_2$. 

Consequently, the QISP formula does not hold in the straightforward form 
for the operator-valued matrix elements of the monodromy matrix 
$T^{(\ell \, w; \, 0)}(\lambda)$ for $w=\pm, p$ at 
$\lambda=w_{\ell(k-1) +1}^{(\ell)}$ for $k=1, 2, \ldots, N_s$. 
Here we recall that the monodromy matrix 
$T^{(\ell \, w; \, 0)}(\lambda)$ is given by the spin-1/2 monodromy 
matrix $T^{(\ell \, w; \, \epsilon)}(\lambda)$ 
by putting $\epsilon=0$.

%
%
%
%
\setcounter{equation}{0} 
\renewcommand{\theequation}{4.\arabic{equation}}
\section{Reduction of the matrix elements of spin-$\ell/2$ operators}

%
%
%
\subsection{Definition of the spin-$\ell/2$ off-shell matrix elements 
and the spin-$\ell/2$ form factors}

Let $|0 \rangle$ be the vacuum vector of the spin-1/2 chain of $L$ sites; 
i.e., $|0 \rangle = | \uparrow \rangle_1 \otimes \cdots 
\otimes | \uparrow \rangle_L$. 
Here we recall that the symbol $\{ \lambda_{\alpha} \}_M$ 
denotes a set of $M$ parameters 
$\lambda_{\alpha}$ for $\alpha=1 , 2, \ldots, M$.

For given sets of parameters $\{ \mu_{\alpha} \}_N$ 
and $\{ \lambda_{\beta} \}_M$ we define the off-shell Bethe 
 covectors and vectors 
$\langle \{ \mu_{\alpha} \}_N^{(\ell \, w)} |$ and 
$| \{ \lambda_{\beta} \}_M^{(\ell \, w)} \rangle$, respectively,  
for $w=\pm, p$ as follows: 
\be 
\langle \{ \mu_{\alpha} \}_N^{(\ell \, w)} | = 
\langle 0 | \prod_{\alpha=1}^{N} C^{(\ell \, w )}(\mu_{\alpha}) \, , 
\quad  
| \{ \lambda_{\beta} \}_M^{(\ell \, w)} \rangle =
\prod_{\beta=1}^{M} B^{(\ell \, w)}(\lambda_{\beta}) | 0 \rangle \, .  
\ee
Here, parameters $\{ \mu_{\alpha} \}_N$ 
and $\{ \lambda_{\beta} \}_M$ do not necessarily satisfy the 
Bethe-ansatz equations. 
We define the spin-$\ell/2$ off-shell matrix elements   
of $E_k^{i_k, \, j_k \, (\ell \, w)}$ for $w=\pm, p$ by   
\be 
M_k^{i_k, \, j_k \, (\ell \, w)} 
(\{ \mu_{\alpha} \}_N, \{ \lambda_{\beta} \}_M)  = 
\langle \{ \mu_{\alpha} \}_N^{(\ell \, w)} |
\, 
E_k^{i_k, \, j_k \, (\ell \, w)} \, 
| \{ \lambda_{\beta} \}_M^{(\ell \, w)} \rangle \, . 
\ee
We also define the spin-$\ell/2$ off-shell matrix elements  
of the general elementary operators 
$\widehat{ E}_k^{i_k, \, j_1 \, (\ell \, w)}$ for $w=\pm, p$ by   
\be 
\widehat{ M}_k^{i_k, \, j_k \, (\ell \, w)} 
(\{ \mu_{\alpha} \}_N, \{ \lambda_{\beta} \}_M)    
= 
\langle \{ \mu_{\alpha} \}_N^{(\ell \, w)} |
\, 
\widehat{ E}_k^{i_k, \, j_k \, (\ell \, w)} \, 
| \{ \lambda_{\beta} \}_M^{(\ell \, w)} \rangle \, . 
\ee

Let us assume that  $\{ \mu_{\alpha} \}_N$ and $\{ \lambda_{\beta} \}_M$ satisfy the Bethe ansatz equations. 
We call $\langle \{ \mu_{\alpha} \}_N^{(\ell \, w)} |$ and 
$| \{ \lambda_{\beta} \}_M^{(\ell \, w)} \rangle$ the on-shell Bethe covectors and vectors, respectively. 
We define the spin-$\ell/2$ form factors   
of $E_k^{i_k, \, j_k \, (\ell \, w)}$ for $w=\pm, p$ by   
\be 
F_k^{i_k, \, j_k \, (\ell \, w)} 
(\{ \mu_{\alpha} \}_N, \{ \lambda_{\beta} \}_M)    
= 
\langle \{ \mu_{\alpha} \}_N^{(\ell \, w)} |
\, 
E_k^{i_k, \, j_k \, (\ell \, w)} \, 
| \{ \lambda_{\beta} \}_M^{(\ell \, w)} \rangle \, . 
\ee
We also define the spin-$\ell/2$ form factors   
of the general elementary operators 
$\widehat{ E}_k^{i_k, \, j_1 \, (\ell \, w)}$ for $w=\pm, p$ by   
\be 
\widehat{ F}_k^{i_k, \, j_k \, (\ell \, w)} 
(\{ \mu_{\alpha} \}_N, \{ \lambda_{\beta} \}_M)    
= 
\langle \{ \mu_{\alpha} \}_N^{(\ell \, w)} |
\, 
\widehat{ E}_k^{i_k, \, j_k \, (\ell \, w)} \, 
| \{ \lambda_{\beta} \}_M^{(\ell \, w)} \rangle \, . 
\ee

We have defined the form factors of a local operator 
by the matrix elements of the operator between all pairs of 
the Bethe eigenvectors, in the paper. However, it is often the case that 
only the matrix elements between the ground state and excited states 
are called form factors.

%
%
\subsection{Commutation relation with the projection operator}

\begin{lemma} 
If spectral parameter $\lambda$ is distinct from  
discrete values such as 
$w_j^{(\ell)} - \eta + n \pi \sqrt{-1}$ 
for $j=1, 2, \ldots, L$ and $n \in {\bf Z}$,  
the projection operator $P^{(\ell)}_{12 \cdots L}$ 
commutes with the matrix elements of the monodromy matrix 
$T^{(\ell \, +; \, 0)}_{0, 1 2 \cdots L}(\lambda)= 
T^{(1 \, +)}_{0, 1 2 \cdots L}(\lambda; \{ w_j^{(\ell)} \}_L)$ 
as follows. 
\bea  
 P_{12 \cdots L}^{(\ell)}  
T^{(1 \, +)}_{0, 12 \cdots L}(\lambda; 
\{ w_j^{(\ell)} \}_L) \, P_{12 \cdots L}^{(\ell)}  
& = & 
P_{12 \cdots L}^{(\ell)}
 \, 
T^{(1 \, +)}_{0, 1 2 \cdots L}(\lambda; \{ w_j^{(\ell)} \}_L) 
\, . \label{eq:commute}  
\eea
\label{lem:commute}  
\end{lemma}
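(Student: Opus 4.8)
The plan is to show that the projection operator $P^{(\ell)}_{12\cdots L}$ can be moved through the spin-1/2 monodromy matrix with the $\ell$-string inhomogeneities from the left to the right, i.e.\ that the left projector absorbs the right one. The natural tool is the intertwining (commutation) property of the $R$-matrix with the fusion projector: on each block of $\ell$ consecutive spin-1/2 sites, $V_{\ell(b-1)+1}^{(1)}\otimes\cdots\otimes V_{\ell(b-1)+\ell}^{(1)}$ carrying the complete $\ell$-string $w^{(\ell)}_j=\xi_b-(\beta-1)\eta$, the product of spin-1/2 $R$-matrices between the auxiliary space and that block, $R^{(1\,+)}_{0,\ell(b-1)+\ell}(\lambda-w^{(\ell)}_{\ell(b-1)+\ell})\cdots R^{(1\,+)}_{0,\ell(b-1)+1}(\lambda-w^{(\ell)}_{\ell(b-1)+1})$, commutes with the fusion projector $P^{(\ell)}_{\ell(b-1)+1}$ up to the excluded set of spectral values. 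Concretely, since $P^{(\ell)}$ is (a scalar multiple of) the image of $\bigl(\prod R(\text{string})\bigr)$ on that block, the Yang–Baxter equation gives $R_{0,\text{block}}(\lambda)\,P^{(\ell)}_{\text{block}} = P^{(\ell)}_{\text{block}}\,R_{0,\text{block}}(\lambda)\,P^{(\ell)}_{\text{block}}$, and a second application gives the one-sided form $R_{0,\text{block}}(\lambda)\,P^{(\ell)}_{\text{block}} = P^{(\ell)}_{\text{block}}\,R_{0,\text{block}}(\lambda)$, valid as long as the relevant $R$-matrices are invertible, which fails precisely at $\lambda = w^{(\ell)}_j - \eta + n\pi\sqrt{-1}$ where $\sinh(\lambda - w^{(\ell)}_j + \eta)=0$.

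The key steps, in order, are: (1) Recall that $P^{(\ell)}_{\ell(b-1)+1}$ equals $\sum_{n=0}^{\ell}\|\ell,n\rangle\langle\ell,n\|$ acting on the $b$-th block, and that by the fusion construction this equals (up to normalization) the value of the product of spin-1/2 $R$-matrices along the block with the complete $\ell$-string arguments, hence intertwines the $U_q(sl_2)$-action; in particular it is the $U_q(sl_2)$-covariant projector onto the spin-$\ell/2$ subrepresentation. (2) Use the Yang–Baxter equation $R_{0a}R_{0b}R_{ab}=R_{ab}R_{0b}R_{0a}$ repeatedly to push the block $R$-matrices through the partial product of $R$'s that builds $P^{(\ell)}_{\ell(b-1)+1}$, obtaining the two-sided relation $P^{(\ell)}_{\text{block}}\,R_{0,\text{block}}(\lambda)\,P^{(\ell)}_{\text{block}} = R_{0,\text{block}}(\lambda)\,P^{(\ell)}_{\text{block}}$ (and similarly with the projector on the other side), away from the excluded $\lambda$. (3) Since the projectors on distinct blocks act on disjoint sets of sites, they commute, so $P^{(\ell)}_{12\cdots L}=\prod_{b=1}^{N_s}P^{(\ell)}_{\ell(b-1)+1}$ can be pushed block by block through $T^{(1\,+)}_{0,12\cdots L}(\lambda;\{w^{(\ell)}_j\}_L) = R^{(1\,+)}_{0L}(\lambda-w^{(\ell)}_L)\cdots R^{(1\,+)}_{01}(\lambda-w^{(\ell)}_1)$, each factor of the monodromy matrix touching exactly one block. (4) Collecting, the sandwiched $P^{(\ell)}_{12\cdots L} T^{(1\,+)}(\lambda) P^{(\ell)}_{12\cdots L}$ collapses to $P^{(\ell)}_{12\cdots L} T^{(1\,+)}(\lambda)$, which is the claimed identity; the reverse move gives that it also equals $T^{(1\,+)}(\lambda)P^{(\ell)}_{12\cdots L}$ under the same restriction on $\lambda$.

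The main obstacle is step (2): establishing cleanly that $P^{(\ell)}$ on a block intertwines the block $R$-matrix, i.e.\ that the one-sided absorption $R\,P = P\,R\,P$ holds, and identifying exactly which spectral values must be excluded. This rests on the fact that the fusion projector is built from $R$-matrices evaluated at the string points where they degenerate into rank-one (Temperley–Lieb / $q$-antisymmetrizer) form, so one must be careful that the Yang–Baxter manipulations do not themselves require inverting an $R$-matrix at one of those degenerate points; tracking this gives the excluded set $\{w^{(\ell)}_j-\eta+n\pi\sqrt{-1}\}$ stated in the lemma, coming from the zeros of $\sinh(\lambda-w^{(\ell)}_j+\eta)$ in the denominators of $b$ and $c^{\pm}$. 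A secondary, purely bookkeeping, point is to make sure the gauge/grading label $+$ is preserved throughout — but since all $R^{(1\,+)}$ share the same Yang–Baxter structure this is immediate. Everything else is routine: disjoint supports commute, and the monodromy matrix is an ordered product of single-site $R$-factors, each lying in a single block.
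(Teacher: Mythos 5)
The paper states this lemma without proof, so there is nothing to compare against directly; judged on its own, your overall architecture is right (the block projectors have disjoint supports, the monodromy matrix factorizes over blocks, so everything reduces to a single-block fusion identity plus bookkeeping), but the single-block identity at the heart of your step (2) is stated with the projector on the wrong side, and the upgrade to full commutation is false. You claim $R_{0,\mathrm{block}}\,P^{(\ell)}_{\mathrm{block}} = P^{(\ell)}_{\mathrm{block}}\,R_{0,\mathrm{block}}\,P^{(\ell)}_{\mathrm{block}}$, i.e.\ that the block monodromy preserves $\mathrm{im}\,P^{(\ell)}$, and then that $R\,P=P\,R$. What the lemma asserts is $P\,T\,P = P\,T$, i.e.\ $P\,T\,(1-P)=0$: the block monodromy preserves $\ker P^{(\ell)}$, the \emph{complement} of the spin-$\ell/2$ subspace. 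These are genuinely different statements, and only the second one is true here. You can check this on the paper's own $\ell=2$, $L=2$ example in \S 3.6: with $w_1=\xi_1$, $w_2=\xi_1-\eta$, the operator $A^{(2\,+;\,0)}_{12}(\lambda)$ maps the null vector of $P^{(2)}$ (proportional to $q\,|{\downarrow\uparrow}\rangle - |{\uparrow\downarrow}\rangle$) to $b(\lambda-\xi_1)$ times itself, so $\ker P^{(2)}$ is invariant; but applying it to $||2,1\rangle \propto q^{-1}|{\downarrow\uparrow}\rangle + |{\uparrow\downarrow}\rangle$ produces a component outside $\mathrm{im}\,P^{(2)}$. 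Hence the fused operator is block-triangular, not block-diagonal, with respect to $\mathrm{im}\,P^{(\ell)}\oplus\ker P^{(\ell)}$; both of your claimed identities ($PRP=RP$ and $RP=PR$) fail, while the one the lemma needs ($PRP=PR$) holds.

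The missing idea is therefore the directional version of fusion triangularity: because the inhomogeneities on a block descend as $\xi_b, \xi_b-\eta,\dots,\xi_b-(\ell-1)\eta$ in the same order in which the $R$-factors appear in $R_{0,\ell(b-1)+\ell}\cdots R_{0,\ell(b-1)+1}$, the Yang--Baxter manipulation with the degenerate $\check R(-\eta)$ (which is rank-deficient, proportional to the complementary idempotent $1-P$ on two neighbouring sites) yields invariance of $\ker P^{(\ell)}$ only, and one must track this ordering rather than invoke a symmetric commutation. This is exactly the "main obstacle" you flag and defer; as written, your argument would also prove the false identity $TP=PTP$, which signals that the deferred YBE step cannot go through in the form you describe (it would implicitly require inverting an $R$-matrix at one of its degeneration points). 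Your identification of the excluded spectral values $\lambda=w_j^{(\ell)}-\eta+n\pi\sqrt{-1}$ as the zeros of $\sinh(\lambda-w_j^{(\ell)}+\eta)$, where the $R$-matrix entries have poles, is correct.
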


Let us assume that 
all the  parameters in  $\{ \mu_{\alpha}\}_N$ and $\{ \lambda_{\beta} \}_M$ 
are different from 
the discrete values given by $w_j^{(\ell)}- \eta + n \pi \sqrt{-1}$ 
for $j=1, 2, \ldots, L$ and $n \in {\bf Z}$. 
Here we recall that they correspond to 
$N_s$ pieces of complete $\ell$-strings minus $\eta$ 
modulo $\pi \sqrt{-1}$, and 
the transfer matrix is singular at these points. 
Applying Lemma \ref{lem:commute} we have 
\be | \{ \lambda_{\beta} \}_M^{(\ell \, +)} \rangle = 
\prod_{\beta=1}^{M} \left(  P^{(\ell)}_{1 2 \ldots L} 
B^{(\ell \, + \, ; \, 0)}(\lambda_{\beta})  
P^{(\ell)}_{1 2 \ldots L} \right) \, | 0 \rangle 
= 
P^{(\ell)}_{1 2 \ldots L} 
\prod_{\beta=1}^{M} B^{(\ell \, + ; \, 0)}(\lambda_{\beta}) 
\, | 0 \rangle \label{eq:|lambda>}    
\ee
and 
\be  
\langle \{ \mu_{\alpha} \}_N^{(\ell \, +)} | = 
\langle 0 | \prod_{\alpha=1}^{N} 
\left( P^{(\ell)}_{1 2 \ldots L}  C^{(\ell \, +; \, 0)}(\mu_{\alpha}) 
 P^{(\ell)}_{1 2 \ldots L} \right)
%
%
= \langle 0 | \prod_{\alpha=1}^{N} 
 C^{(\ell \, +; \, 0)}(\mu_{\alpha}) \, . \label{eq:<mu|}  
\ee
Here we remark that for the off-shell Bethe covectors 
we can absorb the projection operator acting to the left as follows.   
\be 
\langle 0 | \prod_{\alpha=1}^{N} 
 C^{(\ell \, +; \, 0)}(\mu_{\alpha}) \, \cdot \, P_{12 \cdots L}^{(\ell)} 
= \langle 0 | \prod_{\alpha=1}^{N} 
 C^{(\ell \, +; \, 0)}(\mu_{\alpha}) \, .    
\ee
However, in eq. (\ref{eq:|lambda>}) we can not remove 
the projection operator acting to the right. 

It follows from (\ref{eq:|lambda>}) and (\ref{eq:<mu|}) that we can evaluate 
the spin-$\ell/2$ off-shell matrix elements  
by calculating  the spin-1/2 ones. 
For instance, applying (\ref{eq:|lambda>}) and (\ref{eq:<mu|})
we reduce every spin-$\ell/2$ 
off-shell matrix element into a  
spin-$1/2$ off-shell matrix element as follows. 
\bea 
\widehat{ M}^{i, \, j \, (\ell \, +)}_{k}
(\{ \mu_{\alpha} \}_N, \{ \lambda_{\beta} \}_M)   
& = & \langle 0 | \prod_{\alpha=1}^{N} 
 C^{(\ell \, +; 0)}(\mu_{\alpha}) \, 
\widehat{ E}^{i, \, j \, (\ell \, +)}_k \, 
P^{(\ell)}_{1 2 \ldots L} 
\prod_{\beta=1}^{M} B^{(\ell \, + ; \, 0)}(\lambda_{\beta}) | 0 \rangle 
\non \\ 
& = & \langle 0 | \prod_{\alpha=1}^{N} 
 C^{(\ell \, +; 0)}(\mu_{\alpha}) \, 
\widehat{ E}^{i, \, j \, (\ell \, +)}_k \,  
\prod_{\beta=1}^{M} B^{(\ell \, + ; \, 0)}(\lambda_{\beta}) | 0 \rangle \, . 
\label{eq:gen-F+}
\eea
Here we have made use of Lemma \ref{lem:PE=EP=E} in order to 
delete the projection operator.

We remark that in Refs. \cite{DM1,DM2,DM3} 
there was a nontrivial assumption that the projection operator should commute 
with the operator-valued matrix elements of the spin-1/2 monodromy matrix 
$T^{(\ell, \, w; \, 0)}(\lambda)$ 
at an {\it arbitrary} value of the spectral parameter $\lambda$.  
In fact, the spin-1/2 monodromy matrix becomes singular 
if $\lambda$ is equal to some discrete values such as 
$w_j-\eta$. At $\lambda=w_j-\eta$ 
the commutation relation of the monodromy matrix  with the projection 
operator becomes non-trivial.    
If we multiply it with normalization factor $\sinh(\lambda-w_j+\eta)$ and 
define the normalized monodromy matrix, then its commutation relation 
with the projection operator becomes valid at $\lambda=w_j-\eta$.

%
%
\subsection{Reduction of the spin-$\ell/2$ off-shell matrix elements 
into the spin-1/2 ones} 

For homogeneous gradings with $w=\pm$ and principal grading with $w=p$, 
we define $\sigma(w)$ by 
\be 
\sigma(w) = \left\{ 
\begin{array}{ccc} 
\pm 1 & \mbox{for} & w=\pm \, , \\ 
0 & \mbox{for} & w=p \, . 
\end{array} 
\right.  
\ee
We denote by ${\cal S}_n$ the symmetric group of $n$ elements. 
\begin{proposition} 
Let $i_1$ and $j_1$ be integers satisfying $1 \le i_1, j_1 \le \ell$. 
For arbitrary parameters $\{\mu_{\alpha} \}_N$ and   
$\{\lambda_{\beta} \}_M$ with $i_1-j_1=N-M$ 
we have 
\bea 
& & \widehat{ M}^{i_1, \, j_1 \, (\ell \, w)}_1(\{ \mu_k \}_N, 
\{ \lambda_{\beta} \}_M)  =  
\langle 0 | \prod_{k=1}^{N} C^{(\ell \, w)}(\mu_k) \cdot 
\widehat{ E}^{i_1, \, j_1 \, (\ell \, w)}_1 \cdot \prod_{\beta=1}^{M} 
B^{(\ell \, w)}(\lambda_{\beta}) | 0 \rangle 
\non \\ 
& = & 
\widehat{ N}_{i_1, \, j_1}^{(\ell)} \, 
e^{\sigma(w) (\sum_k \mu_k-\sum_{\gamma} \lambda_{\gamma})} 
\, 
\sum_{(\varepsilon_{\beta}(j_1) )_{\ell} }  
\langle 0 | \prod_{\alpha=1}^{N} C^{(\ell \, p; \, 0)}(\mu_a) 
\, \cdot \, 
e_1^{\varepsilon_1^{'}(i_1), \varepsilon_1(j_1)} \cdots 
e_{\ell}^{\varepsilon_{\ell}^{'}(i_1), \varepsilon_{\ell}(j_1)} 
\, \cdot \, \prod_{\beta=1}^{M} 
B^{(\ell \, p; \, 0)}(\lambda_{\beta}) | 0 \rangle \, .  
\non \\ 
\label{eq:<gen-E(w)>-e}
\eea
Each summand is symmetric with respect to 
exchange of $\varepsilon_{\alpha}^{'}(i_1)$; i.e.,   
the following expression is independent of any 
permutation $\pi \in {\cal S}_{\ell}$:  
\be 
\langle 0 | \prod_{\alpha=1}^{N} C^{(\ell \, p; \, 0)}(\mu_{\alpha}) 
\, \cdot \, 
e_1^{\varepsilon_{\pi 1}^{'}(i_1), \, \varepsilon_1(j_1)} \cdots 
e_{\ell}^{\varepsilon_{\pi \ell}^{'}(i_1), \, \varepsilon_{\ell}(j_1)} 
\, \cdot \, \prod_{\beta=1}^{M} 
B^{(\ell \, p; \, 0)}(\lambda_{\beta}) | 0 \rangle \, . 
\label{eq:sym-epsilon_a'}
\ee
\label{prop:<gen-E(w)>-e} 
\end{proposition}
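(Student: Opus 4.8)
The plan is to push the reduction formula of Proposition~\ref{prop:gen-E(w)-e} through the Bethe covector and vector, carrying the projection operator $P^{(\ell)}_{12\cdots\ell}$ and the gauge transformation $\chi_{12\cdots L}$ to the outside where they either act trivially or produce only scalar factors. Throughout I assume, as is already stipulated in the excerpt, that the $\mu_\alpha$ and $\lambda_\beta$ avoid the values $w_j^{(\ell)}-\eta+n\pi\sqrt{-1}$, so that Lemma~\ref{lem:commute} is available. First I treat homogeneous grading $w=+$: by Lemma~\ref{lem:commute}, equations~(\ref{eq:|lambda>})--(\ref{eq:<mu|}) and Lemma~\ref{lem:PE=EP=E} one gets, exactly as in~(\ref{eq:gen-F+}), that $\widehat M_1^{i_1,j_1\,(\ell+)}=\langle 0|\prod_\alpha C^{(\ell+;0)}(\mu_\alpha)\,\widehat E_1^{i_1,j_1\,(\ell+)}\,\prod_\beta B^{(\ell+;0)}(\lambda_\beta)|0\rangle$. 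Substituting Proposition~\ref{prop:gen-E(w)-e} (here $\delta(+,p)=0$, so no string-centre factor appears), the remaining $P^{(\ell)}_{12\cdots\ell}$ is absorbed into the covector, since it is a factor of $P^{(\ell)}_{12\cdots L}$ and $\langle 0|\prod_\alpha C^{(\ell+;0)}(\mu_\alpha)\,P^{(\ell)}_{12\cdots L}=\langle 0|\prod_\alpha C^{(\ell+;0)}(\mu_\alpha)$.

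Next I change the grading. The homogeneous and principal spin-$1/2$ monodromy matrices built from $\{w_j^{(\ell)}\}_L$ are conjugate, $T^{(\ell\,p;0)}(\lambda)=\chi_{0,12\cdots L}^{-1}T^{(\ell+;0)}(\lambda)\chi_{0,12\cdots L}$ with $\chi_{0,12\cdots L}=\Phi_0(\lambda)\chi_{12\cdots L}$, whence on the matrix entries $C^{(\ell+;0)}(\mu)=e^{\mu}\chi_{12\cdots L}\,C^{(\ell\,p;0)}(\mu)\,\chi_{12\cdots L}^{-1}$ and $B^{(\ell+;0)}(\lambda)=e^{-\lambda}\chi_{12\cdots L}\,B^{(\ell\,p;0)}(\lambda)\,\chi_{12\cdots L}^{-1}$. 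Pulling $\chi_{12\cdots L}^{\pm1}$ out of the products of $C$'s and $B$'s, the inner $\chi_{12\cdots L}^{-1}\chi_{12\cdots L}$ factors cancel the gauge transformation carried by Proposition~\ref{prop:gen-E(w)-e}, the outer $\chi_{12\cdots L}^{\pm1}$ act as the identity on $\langle 0|$ and on $|0\rangle$, and the scalars collect into $e^{\sigma(+)(\sum_k\mu_k-\sum_\gamma\lambda_\gamma)}$; this is~(\ref{eq:<gen-E(w)>-e}) for $w=+$. The case $w=-$ is identical after exchanging the $\pm$ signs of the $R$-matrix ($\sigma(-)=-1$), and for $w=p$ I would use instead the spin-$\ell/2$ covariance $\widehat E^{i_1,j_1\,(\ell+)}=\chi^{(\ell)}_{1\cdots N_s}\widehat E^{i_1,j_1\,(\ell p)}(\chi^{(\ell)}_{1\cdots N_s})^{-1}$ together with the corresponding relation between principal- and homogeneous-grading Bethe states; the string-centre factor that Proposition~\ref{prop:gen-E(w)-e} generates for $w=p$ is cancelled by that covariance, and $\sigma(p)=0$ removes the exponential in the $\mu$'s and $\lambda$'s.

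For the permutation invariance of each summand, the point is that $\langle 0|\prod_\alpha C^{(\ell\,p;0)}(\mu_\alpha)$ absorbs the conjugated projector $\chi_{12\cdots L}^{-1}P^{(\ell)}_{12\cdots L}\chi_{12\cdots L}$ from the right (from the $C$-relation above and the absorption property just used). Inserting it in front of $e_1^{\varepsilon'_{\pi1}(i_1),\varepsilon_1(j_1)}\cdots e_\ell^{\varepsilon'_{\pi\ell}(i_1),\varepsilon_\ell(j_1)}$ and rewriting the latter by~(\ref{eq:eps-ab-2nd}) as $\prod_{a\in{\bm \alpha}^{-}}\sigma^-_a\,||\ell,0\rangle\langle\ell,0||\,\prod_{b\in\Sigma_\ell\setminus{\bm \alpha}^{+}}\sigma^+_b$, the block-$1$ part becomes $\chi_{12\cdots\ell}^{-1}P^{(\ell)}_{12\cdots\ell}\chi_{12\cdots\ell}\,\sigma^-_{a(1)}\cdots\sigma^-_{a(i_1)}||\ell,0\rangle$: the power $q^{-(a(1)+\cdots+a(i_1))}$ produced by $\chi_{12\cdots\ell}$ is cancelled by the power $q^{+(a(1)+\cdots+a(i_1))}$ produced when Lemma~\ref{lem:independance} projects onto $||\ell,i_1\rangle$ (this is exactly the mechanism of the corollary following Lemma~\ref{lem:independance}), leaving a $\pi$-independent multiple of $\chi_{12\cdots\ell}^{-1}||\ell,i_1\rangle$, while the right factor $\langle\ell,0||\prod_b\sigma^+_b$ depends only on ${\bm \alpha}^{+}$, not on $\pi$. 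Hence every summand is unchanged by $\pi$. The main obstacle in the whole argument is the careful bookkeeping that makes the two gauge transformations cancel cleanly while every projection operator is absorbed into a Bethe (co)vector; the hypothesis $i_1-j_1=N-M$ is needed only so that both sides lie in the same magnetisation sector.
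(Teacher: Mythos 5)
Your proposal is correct and follows essentially the same route as the paper: for $w=+$ one absorbs the projectors via Lemma \ref{lem:commute} and Lemma \ref{lem:PE=EP=E} to reach (\ref{eq:gen-F+}), substitutes the reduction formula of Proposition \ref{prop:gen-E(w)-e}, and converts gradings through the gauge relations $C^{(\ell\,+;\,0)}(\mu)=e^{\mu}\chi\,C^{(\ell\,p;\,0)}(\mu)\,\chi^{-1}$, $B^{(\ell\,+;\,0)}(\lambda)=e^{-\lambda}\chi\,B^{(\ell\,p;\,0)}(\lambda)\,\chi^{-1}$, while for $w=p$ the paper likewise uses the Appendix D decompositions of the Bethe (co)vectors together with $\chi^{(\ell)}_{1\cdots N_s}E_1^{i_1,j_1\,(\ell\,p)}(\chi^{(\ell)}_{1\cdots N_s})^{-1}=E_1^{i_1,j_1\,(\ell\,p)}e^{(i_1-j_1)\Lambda_1}$ to cancel the string-centre factor. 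Your explicit derivation of the permutation invariance (\ref{eq:sym-epsilon_a'}) via absorbing the conjugated projector and invoking Lemma \ref{lem:independance} is exactly the mechanism the paper inherits from Proposition \ref{prop:E-w}, just spelled out in more detail.
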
 
\begin{proof}
In the case of homogeneous grading with $w=+$, we put 
(\ref{eq:gen-E(w)-e}) into (\ref{eq:gen-F+}), and we have 
(\ref{eq:<gen-E(w)>-e}) through the gauge transformation:
\bea 
\langle 0 | \prod_{\alpha=1}^{N}C^{(\ell \, + ; \, 0)}(\mu_{\alpha})  
& = & e^{\sum_{\alpha} \mu_{\alpha}} \, 
\langle 0 | \prod_{\alpha=1}^{N}C^{(\ell \, p ; \, 0)}(\mu_{\alpha}) 
 \, \cdot \, 
\chi_{1 2 \cdots L}^{-1} \, , \non \\ 
\prod_{\beta=1}^{M} B^{(\ell \, + ; \, 0)}(\lambda_{\beta}) | 0 \rangle  
& = &  \chi_{1 2 \cdots L} \, \cdot \, 
\prod_{\beta=1}^{M} B^{(\ell \, p ; \, 0)}(\lambda_{\beta}) | 0 \rangle 
\, e^{-\sum_{\beta} \lambda_{\beta}}  \, .  
\eea
In the case of principal grading with $w=p$ we shall show 
explicitly in Appendix D the following:  
\bea
 \langle 0 | \prod_{\alpha=1}^{N} C^{(\ell \, p)}(\mu_{\alpha}) 
&  = & \langle 0 | \prod_{k=1}^{N}  C^{(\ell \, p; \, 0)}(\mu_{k}) 
\, \cdot \,  \chi_{ 1 \cdots L}^{-1}  
P^{(\ell)}_{1 \cdots L} \chi_{ 1 \cdots N_s}^{(\ell)}  \, ,  
\non \\ 
\prod_{\alpha=1}^{N} B^{(\ell \, p)}(\lambda_{\alpha}) | 0 \rangle 
& = & 
\left( \chi_{ 1 \cdots N_s}^{(\ell)} \right)^{-1} 
P^{(\ell)}_{1 \cdots L} \,  \chi_{ 1 \cdots L} \, \cdot \, 
 \prod_{\alpha=1}^{M}  
B^{(\ell \, p; \, 0)}(\lambda_{\alpha}) | 0 \rangle \,
\eea
We thus evaluate the form factor as follows.  
\bea 
& & F_1^{i_1, \, j_1 \, (\ell \, p)}
\{ \mu_{\alpha} \}_N, \{ \lambda_{\beta} \}_M)  
= \langle 0 | \prod_{\alpha=1}^{N} 
C^{(\ell \, p)}(\mu_{\alpha}) \, \cdot 
\, E_1^{i_1, \, j_1 \, (\ell \, p)} \, \cdot \, 
\prod_{\beta=1}^{M} B^{(\ell \, p)}(\lambda_{\beta}) | 0 \rangle     
\non \\ 
& = & 
\langle 0 | \prod_{\alpha=1}^{N} 
C^{(\ell \, p; \, 0)}(\mu_{\alpha}) \, \cdot 
\, 
\chi_{ 1 \cdots L}^{-1} \,  
P^{(\ell)}_{1 \cdots L} 
\chi_{ 1 \cdots N_s}^{(\ell)} \, \cdot \,  
E_1^{i_1, \, j_1 \, (\ell \, p)} \, \cdot \, 
\left( \chi_{ 1 \cdots N_s}^{(\ell)} \right)^{-1} 
P^{(\ell)}_{1 \cdots L} \, \chi_{ 1 \cdots L}  
\, \cdot \, 
\prod_{\beta=1}^{M} B^{(\ell \, p; \, 0)}(\lambda_{\beta}) | 0 \rangle   
\non \\ 
& = & 
\langle 0 | \prod_{\alpha=1}^{N} 
C^{(\ell \, p; \, 0)}(\mu_{\alpha}) 
\, \cdot \, 
\chi_{ 1 \cdots L}^{-1}  E_1^{i_1, \, j_1 \, (\ell \, p)} 
\chi_{ 1 \cdots L}  
\, \cdot \, 
\prod_{\beta=1}^{M} B^{(\ell \, p; \, 0)}(\lambda_{\beta}) | 0 \rangle  \, 
e^{(i_1-j_1)\left( \xi_1- (\ell-1) {\eta}/2 \right) }  \, . 
\eea
Here we remark that 
\be 
\chi_{ 1 \cdots N_s}^{(\ell)} \, \cdot \,  
E_1^{i_1, \, j_1 \, (\ell \, p)} \, \cdot \, 
\left( \chi_{ 1 \cdots N_s}^{(\ell)} \right)^{-1} 
= E_1^{i_1, \, j_1 \, (\ell \, p)} \, e^{(i_1-j_1)(\xi_1-(\ell-1)\eta/2)} \, . 
\ee
Applying Proposition \ref{prop:E-w} we obtain eq.  
(\ref{eq:<gen-E(w)>-e}) for the case of $w=p$.  
\end{proof}

\begin{corollary} 
Let us take integers $i_k$ and $j_k$ satisfying 
$1 \le i_k, j_k \le \ell$ for $k=1, 2, \ldots, m$.
For arbitrary sets of parameters $\{\mu_{\alpha} \}_N$ and   
$\{\lambda_{\beta} \}_M$ with $\sum_{k=1}^{m} i_k- 
\sum_{k=1}^{m} j_k = N-M$, we have the matrix element for 
the $m$th product of the spin-$\ell/2$ elementary operators with 
entries $(i_k, j_k)$ 
($1 \le k \le m$) associated with principal grading as follows: 
\bea 
& & \langle 0 | \prod_{\alpha=1}^{N} C^{(\ell \, w)}(\mu_a) \cdot 
\prod_{k=1}^{m} \widehat{ E}^{i_k, \, j_k \, (\ell \, w)}_k 
\cdot \prod_{\beta=1}^{M} 
B^{(\ell \, w)}(\lambda_{\beta}) | 0 \rangle  
= 
\prod_{k=1}^{m} 
%
N^{(\ell)}_{i_k, \, j_k} \, \cdot \, 
e^{\sigma(w) (\sum_k \mu_k-\sum_{\gamma} \lambda_{\gamma})}
\, 
\non \\  
&  \times & 
\sum_{ (\varepsilon_{\beta}^{[1]}(j_1) )_{\ell} } \cdots 
\sum_{ ( \varepsilon_{\beta}^{[m]}(j_m) )_{\ell} } 
\langle 0 | \prod_{\alpha=1}^{N} C^{(\ell \, p; \, 0)}(\mu_a) 
\, \cdot \, \prod_{k=1}^{m} \left( 
e_{\ell(k-1)+1}^{\varepsilon_1^{[k] \, '}(i_k), \varepsilon_1^{[k]}(j_k) } 
\cdots 
e_{\ell(k-1)+\ell}^{\varepsilon_{\ell}^{[k] \, '}(i_k), 
\varepsilon_{\ell}^{[k]}(j_k)} \right)
\, \cdot \, \prod_{\beta=1}^{M} 
B^{(\ell \, p; \, 0)}(\lambda_{\beta}) | 0 \rangle \, .   \non \\ 
\label{eq:E(p)k-e[k]}
\eea
%
The summand is symmetric with respect to 
exchange of $\varepsilon_{\alpha}^{[k] \, '}(i_k)$s for each $k$ 
of $1 \le k \le m$; i.e.,   
the following expression is independent of any 
permutation $\pi^{[k]} \in {\cal S}_{\ell}$ for $k=1, 2, \ldots, m$:  
\be 
\langle 0 | \prod_{\alpha=1}^{N} C^{(\ell \, p; \, 0)}(\mu_{\alpha}) 
\, \cdot \, 
\prod_{k=1}^{m} 
e_{\ell(k-1)+1}^{\varepsilon_{\pi^{[k]} 1}^{[k] \, '}(i_k), \, 
\varepsilon_1^{[k]}(j_k) } \cdots 
e_{\ell(k-1)+\ell}^{\varepsilon_{\pi^{[k]} \ell}^{[k] \, '}(i_k), \, 
\varepsilon_{\ell}^{[k]}(j_k) } 
\, \cdot \, \prod_{\beta=1}^{M} 
B^{(\ell \, p; \, 0)}(\lambda_{\beta}) | 0 \rangle  \, . 
\label{eq:ff-e[k]}
\ee
\label{cor:E(p)k-e[k]}
\end{corollary}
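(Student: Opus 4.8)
The plan is to iterate the single-operator reduction of Proposition~\ref{prop:<gen-E(w)>-e}, using the product form already prepared in Corollary~\ref{cor:productE}. First I would start from the left-hand side of~(\ref{eq:E(p)k-e[k]}) and substitute the decomposition~(\ref{eq:productE}) for $\prod_{k=1}^{m}\widehat{E}_k^{i_k,\,j_k\,(\ell\,w)}$, which writes this product as the prefactor $\prod_k\bigl(\widehat{N}^{(\ell)}_{i_k,j_k}\,e^{-(i_k-j_k)\Lambda_k\,\delta(w,p)}\bigr)$ times $P^{(\ell)}_{1\cdots L}$ times a sum over the block sequences $(\varepsilon_\beta^{[k]}(j_k))_\ell$ of products of spin-1/2 elementary operators on the sites $\ell(k-1)+1,\dots,\ell(k-1)+\ell$, conjugated by the gauge transformation $\chi_{1 2 \cdots (\ell m)}$. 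Sandwiching this between $\langle0|\prod_\alpha C^{(\ell\,w)}(\mu_\alpha)$ and $\prod_\beta B^{(\ell\,w)}(\lambda_\beta)|0\rangle$ is the object we must simplify.

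Next I would move the projection operator onto the Bethe states. Under the genericity hypothesis on $\{\mu_\alpha\}_N$ and $\{\lambda_\beta\}_M$ (none of them equal to the singular values $w_j^{(\ell)}-\eta+n\pi\sqrt{-1}$), Lemma~\ref{lem:commute} gives the forms~(\ref{eq:<mu|}) and~(\ref{eq:|lambda>}): the copy of $P^{(\ell)}_{1\cdots L}$ to the left of $\prod_\alpha C^{(\ell\,+;0)}(\mu_\alpha)$ is absorbed, and the copy to the right of $\prod_\beta B^{(\ell\,+;0)}(\lambda_\beta)|0\rangle$ is carried inward; then Lemma~\ref{lem:PE=EP=E} deletes the two copies of $P^{(\ell)}_{1\cdots L}$ flanking the product of spin-1/2 elementary operators. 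What remains is a sum of spin-1/2 off-shell matrix elements conjugated by $\chi_{1\cdots L}$, exactly as in the $m=1$ computation of Proposition~\ref{prop:<gen-E(w)>-e} but with the inner product of elementary operators replaced by the $m$-fold block product $\prod_{k=1}^m\bigl(e_{\ell(k-1)+1}^{\varepsilon_1^{[k]\,'}(i_k),\varepsilon_1^{[k]}(j_k)}\cdots e_{\ell(k-1)+\ell}^{\varepsilon_\ell^{[k]\,'}(i_k),\varepsilon_\ell^{[k]}(j_k)}\bigr)$.

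Then I would strip the gauge transformation to pass from $w$-graded to principal-graded ($\epsilon=0$) monodromy operators. For $w=+$ this is the scalar manipulation already used in the proof of Proposition~\ref{prop:<gen-E(w)>-e}: $\langle0|\prod C^{(\ell\,+;0)}=e^{\sum_\alpha\mu_\alpha}\langle0|\prod C^{(\ell\,p;0)}\cdot\chi_{1\cdots L}^{-1}$ and $\prod B^{(\ell\,+;0)}|0\rangle=\chi_{1\cdots L}\cdot\prod B^{(\ell\,p;0)}|0\rangle\,e^{-\sum_\beta\lambda_\beta}$, so the $\chi_{1\cdots L}^{\pm1}$'s cancel against the conjugating ones and a factor $e^{\sigma(+)(\sum_k\mu_k-\sum_\gamma\lambda_\gamma)}$ appears; the case $w=-$ is the mirror image giving $\sigma(-)=-1$; and for $w=p$ I would reuse the Appendix~D relations for $\langle0|\prod C^{(\ell\,p)}$ and $\prod B^{(\ell\,p)}|0\rangle$ in terms of $\chi_{1\cdots L}^{-1}P^{(\ell)}_{1\cdots L}\chi^{(\ell)}_{1\cdots N_s}$, which combine with the already-present $e^{-(i_k-j_k)\Lambda_k}$ prefactors to cancel the grading dependence, matching $\sigma(p)=0$. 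Collecting the surviving constants into the product of normalization factors displayed on the right-hand side of~(\ref{eq:E(p)k-e[k]}) completes that formula.

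Finally, the symmetry claim~(\ref{eq:ff-e[k]}) follows blockwise: in~(\ref{eq:productE}) the expression is independent of the order of the $\varepsilon^{[k]\,'}_\alpha(i_k)$'s, and permuting these within the $k$th block of spin-1/2 sites is a permutation of the spin-1/2 elementary operators that becomes invisible after the projector $P^{(\ell)}_{\ell(k-1)+1}$; equivalently one applies the single-operator symmetry~(\ref{eq:sym-epsilon_a'}) to each of the $m$ factors in succession. The main obstacle I anticipate is the bookkeeping in the principal-grading case: one must check that threading $m$ operators and their gauge factors past the single projector $P^{(\ell)}_{1\cdots L}$ never pushes the spectral argument of any $C^{(\ell\,p;0)}$ or $B^{(\ell\,p;0)}$ onto a singular value, which is precisely where the genericity of the Bethe parameters enters; once that is secured, the derivation is a routine iteration of the $m=1$ case.
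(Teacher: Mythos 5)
Your proposal is correct and follows essentially the same route as the paper: the corollary is obtained by substituting the product decomposition of Corollary \ref{cor:productE} in place of the single-operator formula, absorbing the projector via Lemmas \ref{lem:PE=EP=E} and \ref{lem:commute}, and repeating the gauge-transformation manipulations (and, for $w=p$, the Appendix~D relations) from the proof of Proposition \ref{prop:<gen-E(w)>-e}, with the symmetry in the $\varepsilon_{\alpha}^{[k]\,'}(i_k)$ inherited blockwise from Lemma \ref{lem:independance}. The minor points you gloss over (replacing $\chi_{1\cdots(\ell m)}$ by $\chi_{1\cdots L}$ on operators supported on the first $\ell m$ sites, and the ordering in which the projectors are absorbed) are harmless and do not constitute gaps.
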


%
%
\subsection{Consequence of the continuity assumption of the Bethe roots}

We now consider the Bethe-ansatz equations 
for the integrable spin-$\ell/2$ XXZ spin chain with inhomogeneity parameters 
$\xi_b$ for $b=1, 2, \ldots, N_s$:
\be 
{\frac {a^{(\ell)}(\lambda_{\alpha})} {d^{(\ell)}
(\lambda_{\alpha}; \{ \xi_k \}_{N_s}) } }  
= \prod_{\beta=1; \beta \ne \alpha}^{M} 
{\frac {\sinh(\lambda_{\alpha} - \lambda_{\beta} + \eta)}  
{\sinh(\lambda_{\alpha} - \lambda_{\beta} - \eta)} } \quad 
(\alpha=1, 2, \ldots, M).  \label{eq:BAE-spin-ell/2}
\ee
Here we recall $L=\ell N_s$. For intgers $\ell$ we have set 
$a^{(\ell)}(\lambda_{\alpha})=1$ and defined 
$d^{(\ell)}(\mu; \{\xi_k \}_{N_s})$ by 
\be 
 d^{(\ell)}(\mu; \{\xi_k \}_{N_s})= 
\prod_{k=1}^{N_s} 
{\frac {\sinh(\mu - \xi_k)} {\sinh(\mu - \xi_k + \ell \eta)}}.  
\ee%

Let $\{ \lambda_{\gamma} \}_M$ be a solution of the Bethe-ansatz equations 
of the spin-$\ell/2$ chain with inhomogeneity parameters $\{ \xi_b \}_{N_s}$.  
Suppose that  $\{ \lambda_{\beta}(\epsilon) \}_M$ denotes a solution 
of the spin-1/2 Bethe-ansatz equations 
with inhomogeneity parameters $w_j$ being given by the $N_s$ sets 
of the almost complete $\ell$-strings, 
$w_j^{(\ell; \, \epsilon)}$ for $j=1, 2, \ldots, L$. 
They satisfy the Bethe-ansatz equations for the spin-$1/2$ XXZ spin chain:
\be 
{\frac {a(\lambda_{\alpha}(\epsilon))} 
{d(\lambda_{\alpha}(\epsilon) ; \{ w_j^{(\ell; \, \epsilon)} \}_{L}) } }  
= \prod_{\beta=1; \beta \ne \alpha}^{M} 
{\frac {\sinh(\lambda_{\alpha}(\epsilon) 
- \lambda_{\beta}(\epsilon) + \eta)}  
{\sinh(\lambda_{\alpha}(\epsilon) 
- \lambda_{\beta}(\epsilon) - \eta)} } \quad 
(\alpha=1, 2, \ldots, M).  \label{eq:BAE-spin-1/2}
\ee
Here we have defined $d(\mu; \{w_j \}_{L})$ by 
\be 
 d(\mu; \{w_j \}_{L})= \prod_{j=1}^{L} b(\mu-w_j) =  
\prod_{j=1}^{L} 
{\frac {\sinh(\mu - w_j)} {\sinh(\mu - w_j + \eta)}}.  
\ee
Then, the 
Bethe-ansatz equations (\ref{eq:BAE-spin-1/2}) 
for the spin-1/2 XXZ chain with $w_j=w_j^{(\ell; \, \epsilon)}$ 
($1 \le j \le L$) become those of the spin-$\ell/2$ XXZ chain 
by sending $\epsilon$ to zero. Here we remark the followng limit: 
\be 
\lim_{\epsilon \rightarrow 0}  d(\mu; \, \{ w_j^{(\ell; \, \epsilon)} \})
=d^{(\ell)}(\mu; \, \{ \xi_k \}_{N_s}).  \label{eq:limit-d}
\ee

Let us now assume that the Bethe roots 
$\{ \lambda_{\beta}(\epsilon) \}_M $ approach the Bethe roots 
$\{ \lambda_{\beta} \}_M$ continuously 
in the limit of sending $\epsilon$ to $0$.  
It follows that each entry of 
the Bethe-ansatz eigenstate of the Bethe roots 
$\{ \lambda_{\beta}(\epsilon) \}_M$ 
is continuous with respect to $\epsilon$.  
For a set of arbitrary parameters $\{ \mu_k \}_N$ we therefore have  
\bea 
& & \langle 0 | \prod_{\alpha=1}^{N} C^{(\ell \, p; \, 0)}(\mu_a) 
\, \cdot \, 
e_1^{\varepsilon_{1}^{'}, \, \varepsilon_1} \cdots 
e_{\ell}^{\varepsilon_{\ell}^{'}, \, \varepsilon_{\ell}} 
\, \cdot \, \prod_{\beta=1}^{M} 
B^{(\ell \, p; \, 0)}(\lambda_{\beta}) | 0 \rangle 
\non \\ 
& & = \lim_{\epsilon \rightarrow 0} 
 \langle 0 | \prod_{\alpha=1}^{N} C^{(\ell \, p; \, \epsilon)}(\mu_a) 
\, \cdot \, 
e_1^{\varepsilon_{1}^{'}, \, \varepsilon_1} \cdots 
e_{\ell}^{\varepsilon_{\ell}^{'}, \, \varepsilon_{\ell}} 
\, \cdot \, \prod_{\beta=1}^{M} 
B^{(\ell \, p; \, \epsilon)}(\lambda_{\beta}(\epsilon)) 
| 0 \rangle \, . 
\eea

%
%

The inhomogeneity parameters $w_j=w_j^{(\ell; \, \epsilon)}$ 
for $j=1, 2, \ldots, L$  are generic 
since the small number $\epsilon$ takes generic values and  
parameters $r_{b}^{\beta}$ are also generic.    
Putting $w_j=w_j^{(\ell; \, \epsilon)}$ in (\ref{eq:spin-1/2-QISP}) 
we have the QISP formula with suffix $(1 \, w)$ replaced by 
$(\ell \, w; \, \epsilon)$ for local operator $x_n$. 
We thus have the following expressions for $b=1, 2, \ldots, N_s$: 
\bea
& & 
e_{\ell(b-1) +1}^{\varepsilon_1^{'}, \varepsilon_1} \cdots 
e_{\ell(b-1)+\ell}^{\varepsilon_{\ell}^{'}, \varepsilon_{\ell}}
 =   
\prod_{k=1}^{\ell (b-1)} 
\left(A^{(\ell \, w; \, \epsilon)}(w_{k}^{(\ell; \, \epsilon)}) 
+ D^{(\ell \, w; \, \epsilon)}(w_{k}^{(\ell; \, \epsilon)}) 
\right) 
\non \\ 
& & \quad \times 
T^{(\ell \, w; \, \epsilon)}_{\varepsilon_1, \varepsilon_1^{'}}
(w_{\ell(b-1)+1}^{(\ell; \, \epsilon)}) \cdots 
T^{(\ell \, w; \, \epsilon)}_{\varepsilon_{\ell}, \varepsilon_{\ell}^{'}}
(w_{\ell(b-1)+\ell}^{(\ell; \, \epsilon)}) 
\prod_{k=1}^{\ell b } 
\left(A^{(\ell \, w; \, \epsilon)}(w_{k}^{(\ell; \, \epsilon)}) 
+ D^{(\ell \, w; \, \epsilon)}(w_{k}^{(\ell; \, \epsilon)}) 
\right)^{-1} \, . \label{eq:almost-QISP-prod-e}
\eea
For instance in the case of $b=1$, 
applying formula (\ref{eq:almost-QISP-prod-e}) of $w=p$ we have 
\bea 
& & \langle 0 | \prod_{\alpha=1}^{N} C^{(\ell \, p; \, \epsilon)}(\mu_a) 
\, \cdot \, 
e_1^{\varepsilon_{1}^{'}, \, \varepsilon_1} \cdots 
e_{\ell}^{\varepsilon_{\ell}^{'}, \, \varepsilon_{\ell}} 
\, \cdot \, \prod_{\beta=1}^{M} 
B^{(\ell \, p; \, \epsilon)}(\lambda_{\beta}(\epsilon)) 
| 0 \rangle \non \\
& = & \phi_{\ell}(\{ \lambda_{\beta} \}; \{ w_j^{(\ell)} \} ) \,   
\langle 0 | \prod_{\alpha=1}^{N} C^{(\ell \, p; \, \epsilon)}(\mu_a) 
\, \cdot \, 
T^{(\ell \, p ; \, \epsilon)}_{\varepsilon_{1}, \, \varepsilon_1^{'}}
(w_1^{(\ell; \, \epsilon)}) \cdots 
T^{(\ell \, p ; \, \epsilon)}_{\varepsilon_{\ell}, \, \varepsilon_{\ell}^{'}}
(w_{\ell}^{(\ell; \epsilon)}) 
\, \cdot \, \prod_{\beta=1}^{M} 
B^{(\ell \, p; \, \epsilon)}(\lambda_{\beta}(\epsilon)) 
| 0 \rangle \, . \non \\ 
\eea

\begin{proposition} 
Let $\{ \mu_k \}_N$ be a set of arbitrary parameters and   
$\{ \lambda_{\alpha} \}_M$ a solution 
of the spin-$\ell/2$ Bethe-ansatz equations. 
We denote by $\{ \lambda_{\alpha}(\epsilon) \}_M$  a solution of 
the Bethe-ansatz equations for the spin-1/2 XXZ chain whose inhomogeneity 
parameters $w_j$ are given by the $N_s$ pieces 
of the almost complete $\ell$-strings: 
$w_j=w_j^{(\ell; \, \epsilon)}$ for $1 \le j \le L$.  
We assume that the set $\{ \lambda_{\alpha}(\epsilon) \}_M$ approaches 
$\{ \lambda_{\alpha} \}_M$ continuously when we send $\epsilon$ to zero. 
For the Bethe states 
$\langle \{ \mu_k \}_N |$ and $| \{ \lambda_{\alpha} \}_M \rangle$, 
which are off-shell and on-shell, respectively, 
we evaluate the matrix elements of a given product of 
elementary operators 
$e_1^{\varepsilon_1^{'}, \varepsilon_1} \cdots 
e_{\ell}^{\varepsilon_{\ell}^{'}, \varepsilon_{\ell}}$ as follows. 
\bea  
& &  \langle 0 | \prod_{\alpha=1}^{N} C^{(\ell \, p; \, 0)}(\mu_a) \, 
e_1^{\varepsilon_1^{'}, \varepsilon_1} \cdots 
e_{\ell}^{\varepsilon_{\ell}^{'}, \varepsilon_{\ell}} 
\, \prod_{\beta=1}^{M} 
B^{(\ell \, p; \, 0)}(\lambda_{\beta}) | 0 \rangle  \nonumber \\ 
& & = \phi_{\ell}(\{ \lambda_{\beta} \}; \{w_j^{(\ell)} \}) \,   
\lim_{\epsilon \rightarrow 0} 
\langle 0 | \prod_{\alpha=1}^{N} C^{(\ell \, p; \, \epsilon)}(\mu_a) 
\, 
T^{(\ell \, p; \, \epsilon)}_{\varepsilon_1, \, \varepsilon_1^{'}}(w_1^{(\ell; \, \epsilon)}) \cdots 
T^{(\ell  \, p; \, \epsilon)}_{\varepsilon_{\ell}, \, \varepsilon_{\ell}^{'}}
(w_{\ell}^{(\ell; \, \epsilon)}) 
\, \prod_{\beta=1}^{M} 
B^{(\ell \, p; \, \epsilon)}(\lambda_{\beta}(\epsilon)) | 0 \rangle \, , 
\non \\ \label{eq:CeeeB}
\eea
where $\phi_{m}(\{ \lambda_{\beta} \})$ 
has been defined by  
$\phi_{m} (\{ \lambda_{\beta} \}; \{ w_j \} ) 
= \prod_{j=1}^{m} \prod_{\alpha=1}^M 
b(\lambda_{\alpha}-w_j)$ with $b(u)=\sinh(u)/\sinh(u+\eta)$. 
\label{prop:FF-principal} 
\end{proposition}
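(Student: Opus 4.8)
The strategy is to carry out the entire computation at a generic nonzero $\epsilon$, where the inhomogeneity parameters $w_j^{(\ell;\,\epsilon)}$ ($1\le j\le L$) are generic and the spin-$1/2$ QISP formula is legitimate, and to let $\epsilon\to 0$ only at the end. First I would invoke the continuity relation of \S 4.4: under the standing hypothesis $\{\lambda_\beta(\epsilon)\}_M\to\{\lambda_\beta\}_M$, every entry of $\prod_\beta B^{(\ell\,p;\,\epsilon)}(\lambda_\beta(\epsilon))|0\rangle$ depends continuously on $\epsilon$, and so does every entry of the off-shell covector $\langle 0|\prod_\alpha C^{(\ell\,p;\,\epsilon)}(\mu_a)$, whose entries are analytic in the $w_j^{(\ell;\,\epsilon)}$ away from the excluded values $w_j^{(\ell)}-\eta+n\pi\sqrt{-1}$, which the $\mu_a$ avoid. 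Hence the left-hand side of (\ref{eq:CeeeB}) equals $\lim_{\epsilon\to 0}\langle 0|\prod_\alpha C^{(\ell\,p;\,\epsilon)}(\mu_a)\,e_1^{\varepsilon_1',\varepsilon_1}\cdots e_\ell^{\varepsilon_\ell',\varepsilon_\ell}\,\prod_\beta B^{(\ell\,p;\,\epsilon)}(\lambda_\beta(\epsilon))|0\rangle$, so it suffices to transform this $\epsilon$-dependent matrix element.

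For fixed generic $\epsilon$ I would apply the QISP formula (\ref{eq:almost-QISP-prod-e}) with $b=1$ --- available because the $w_j^{(\ell;\,\epsilon)}$ are generic, so each transfer matrix $(A^{(\ell\,p;\,\epsilon)}+D^{(\ell\,p;\,\epsilon)})(w_k^{(\ell;\,\epsilon)})$ is regular --- to replace $e_1^{\varepsilon_1',\varepsilon_1}\cdots e_\ell^{\varepsilon_\ell',\varepsilon_\ell}$ by $T^{(\ell\,p;\,\epsilon)}_{\varepsilon_1,\varepsilon_1'}(w_1^{(\ell;\,\epsilon)})\cdots T^{(\ell\,p;\,\epsilon)}_{\varepsilon_\ell,\varepsilon_\ell'}(w_\ell^{(\ell;\,\epsilon)})$ times, on the right, $\prod_{k=1}^\ell(A^{(\ell\,p;\,\epsilon)}+D^{(\ell\,p;\,\epsilon)})^{-1}(w_k^{(\ell;\,\epsilon)})$. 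Since $\{\lambda_\beta(\epsilon)\}_M$ solves the spin-$1/2$ Bethe equations (\ref{eq:BAE-spin-1/2}) with these inhomogeneities, $\prod_\beta B^{(\ell\,p;\,\epsilon)}(\lambda_\beta(\epsilon))|0\rangle$ is a common eigenvector of all the transfer matrices $(A^{(\ell\,p;\,\epsilon)}+D^{(\ell\,p;\,\epsilon)})(\mu)$ --- the algebraic Bethe ansatz, the Bethe equations and the eigenvalue being insensitive to the grading, which only replaces $c^{\pm}$ by $c$ off the diagonal of the $R$-matrix --- so each inverse acts as the scalar $\tau_\epsilon(w_k^{(\ell;\,\epsilon)})^{-1}$, with $\tau_\epsilon$ the transfer-matrix eigenvalue. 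At $\mu=w_k^{(\ell;\,\epsilon)}$ the factor $d(\mu;\{w_j^{(\ell;\,\epsilon)}\})=\prod_j b(\mu-w_j^{(\ell;\,\epsilon)})$ vanishes through its $j=k$ term, so the $D$-part of the eigenvalue drops out; with $a\equiv 1$ and $b(u)=\sinh u/\sinh(u+\eta)$ one gets $\tau_\epsilon(w_k^{(\ell;\,\epsilon)})=\bigl(\prod_\beta b(\lambda_\beta(\epsilon)-w_k^{(\ell;\,\epsilon)})\bigr)^{-1}$, hence $\prod_{k=1}^\ell\tau_\epsilon(w_k^{(\ell;\,\epsilon)})^{-1}=\phi_\ell(\{\lambda_\beta(\epsilon)\};\{w_j^{(\ell;\,\epsilon)}\})$. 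Pulling this scalar through the covector gives, for each $\epsilon$, the identity that the $e$-matrix element equals $\phi_\ell(\{\lambda_\beta(\epsilon)\};\{w_j^{(\ell;\,\epsilon)}\})$ times the corresponding $C\,T\cdots T\,B$ matrix element.

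Finally I would send $\epsilon\to 0$: by the continuity hypothesis $\lambda_\beta(\epsilon)\to\lambda_\beta$ and by construction $w_j^{(\ell;\,\epsilon)}\to w_j^{(\ell)}$, so the prefactor converges to $\phi_\ell(\{\lambda_\beta\};\{w_j^{(\ell)}\})$, which is finite because the $\lambda_\beta$ avoid $w_j^{(\ell)}-\eta$ modulo $\pi\sqrt{-1}$; the left side converges to the left-hand side of (\ref{eq:CeeeB}) by the first step; and since a convergent prefactor may be taken outside the limit, the right side converges to $\phi_\ell(\{\lambda_\beta\};\{w_j^{(\ell)}\})$ times $\lim_{\epsilon\to 0}$ of the $C\,T\cdots T\,B$ matrix element, which is exactly (\ref{eq:CeeeB}).

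I expect the main obstacle to be keeping the logical order straight rather than any single calculation: the QISP formula is \emph{false} at $\epsilon=0$, since $A^{(\ell\,w;\,0)}+D^{(\ell\,w;\,0)}$ is non-regular at $\lambda=\xi_k$ (the non-regularity of \S 3.6), so the manipulation genuinely has to be performed at $\epsilon\ne 0$ and the limit invoked only afterwards --- this is the whole purpose of introducing the almost complete $\ell$-strings. A subsidiary point to verify is that $\tau_\epsilon(w_k^{(\ell;\,\epsilon)})$ stays bounded away from $0$ and $\infty$ as $\epsilon\to 0$, which follows from the standing assumptions on $\{\lambda_\beta\}_M$; the assumptions on $\{\mu_a\}_N$ are needed only to keep the off-shell covector regular in the limit.
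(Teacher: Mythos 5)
Your proposal is correct and follows essentially the same route as the paper: continuity of the Bethe state in $\epsilon$ to replace the $\epsilon=0$ matrix element by a limit, the QISP formula at generic $\epsilon$ where all transfer matrices are regular, and the action of the inverse transfer matrices on the on-shell Bethe vector producing the eigenvalue product $\prod_k\tau_\epsilon(w_k^{(\ell;\,\epsilon)})^{-1}=\phi_\ell$ since $d$ vanishes at the inhomogeneities. Your spelling-out of where the factor $\phi_\ell$ comes from, and of why the limit may be taken at the end, is merely a more explicit version of what the paper leaves implicit in \S 4.4.
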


%
%
\subsection{Spin-$\ell/2$ form factors reduced into the spin-1/2 ones}

Combining Propositions \ref{prop:<gen-E(w)>-e},  and \ref{prop:FF-principal} 
we have the following: 
\begin{proposition} 
Let  $i_1$ and $j_1$ be integers satisfying 
$1 \le i_1, j_1 \le \ell$. We set $i_1-j_1=N-M$.  
Let $\{ \mu_k \}_N$ be a set of arbitrary $N$ parameters. 
For a set of Bethe roots $\{ \lambda_{\beta}(\epsilon) \}_M$ 
which approaches $\{ \lambda_{\beta} \}_M$ continuously at $\epsilon=0$ 
we have the following:  
\bea 
& & \langle 0 | \prod_{\alpha=1}^{N} C^{(\ell \, w)}(\mu_a) \cdot 
\widehat{ E}^{i_1, \, j_1 \, (\ell \, w)}_1 \cdot \prod_{\beta=1}^{M} 
B^{(\ell \, w)}(\lambda_{\beta}) | 0 \rangle \non \\ 
%
%
& & = \widehat{ N}_{i_1, \, j_1}^{(\ell)} \,  
e^{\sigma(w) (\sum_k \mu_k-\sum_{\gamma} \lambda_{\gamma})} \, 
\phi_{\ell}(\{ \lambda_{\beta} \}; \{w_j^{(\ell)} \}) \,   
\non \\ 
&  & \times \, \sum_{ (\varepsilon_{\beta}(j_1) )_{\ell} } \,   
\lim_{\epsilon \rightarrow 0} 
\langle 0 | \prod_{\alpha=1}^{N} C^{(\ell \, p; \, \epsilon)}(\mu_a) 
\, 
T^{(\ell \, p; \, \epsilon)}_{\varepsilon_1(j_1), \, \varepsilon_1^{'}(i_1)}
(w_1^{(\ell; \, \epsilon)}) \cdots 
T^{(\ell  \, p; \, \epsilon)}_{\varepsilon_{\ell}(j_1), \, 
\varepsilon_{\ell}^{'}(j_{\ell}) }(w_{\ell}^{(\ell; \, \epsilon)}) 
\, \prod_{\beta=1}^{M} 
B^{(\ell \, p; \, \epsilon)}(\lambda_{\beta}(\epsilon)) | 0 \rangle \, . 
\label{eq:reduction-formula-E1} \non \\
\eea
\end{proposition}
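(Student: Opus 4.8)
The plan is to chain the two reduction results already at our disposal: Proposition~\ref{prop:<gen-E(w)>-e}, which expresses a single spin-$\ell/2$ off-shell matrix element of $\widehat{E}^{i_1,\,j_1\,(\ell\,w)}_1$ as a finite sum of spin-$1/2$ matrix elements of products of elementary operators in principal grading with $\epsilon=0$, and Proposition~\ref{prop:FF-principal}, which rewrites each such spin-$1/2$ matrix element as $\phi_\ell(\{\lambda_\beta\};\{w^{(\ell)}_j\})$ times the $\epsilon\to 0$ limit of a matrix element assembled from the monodromy-matrix entries $T^{(\ell\,p;\,\epsilon)}_{\varepsilon_\alpha,\,\varepsilon'_\alpha}(w^{(\ell;\,\epsilon)}_\alpha)$.

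First I would recognize the left-hand side as the off-shell matrix element $\widehat{M}^{i_1,\,j_1\,(\ell\,w)}_1(\{\mu_k\}_N,\{\lambda_\beta\}_M)$ and, using the hypothesis $i_1-j_1=N-M$, apply Proposition~\ref{prop:<gen-E(w)>-e}. This produces
\[
\widehat{N}^{(\ell)}_{i_1,\,j_1}\,e^{\sigma(w)(\sum_k\mu_k-\sum_\gamma\lambda_\gamma)}\sum_{(\varepsilon_\beta(j_1))_\ell}\langle 0|\prod_{\alpha=1}^N C^{(\ell\,p;\,0)}(\mu_a)\,e_1^{\varepsilon'_1(i_1),\,\varepsilon_1(j_1)}\cdots e_\ell^{\varepsilon'_\ell(i_1),\,\varepsilon_\ell(j_1)}\,\prod_{\beta=1}^M B^{(\ell\,p;\,0)}(\lambda_\beta)|0\rangle,
\]
for a fixed choice of the sequence $(\varepsilon'_\alpha(i_1))_\ell$. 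The crucial feature is that, whatever the original grading $w$, the factors on the right already carry principal grading at $\epsilon=0$, so every summand has precisely the form handled by Proposition~\ref{prop:FF-principal}.

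Next, for each of the finitely many sequences $(\varepsilon_\beta(j_1))_\ell$ I would invoke Proposition~\ref{prop:FF-principal} with the substitution $\varepsilon'_\alpha=\varepsilon'_\alpha(i_1)$, $\varepsilon_\alpha=\varepsilon_\alpha(j_1)$; its continuity hypothesis $\{\lambda_\beta(\epsilon)\}_M\to\{\lambda_\beta\}_M$ is exactly the one assumed in the present statement. This turns each summand into $\phi_\ell(\{\lambda_\beta\};\{w^{(\ell)}_j\})$ times
\[
\lim_{\epsilon\to 0}\langle 0|\prod_{\alpha=1}^{N} C^{(\ell\,p;\,\epsilon)}(\mu_a)\,T^{(\ell\,p;\,\epsilon)}_{\varepsilon_1(j_1),\,\varepsilon'_1(i_1)}(w^{(\ell;\,\epsilon)}_1)\cdots T^{(\ell\,p;\,\epsilon)}_{\varepsilon_\ell(j_1),\,\varepsilon'_\ell(i_1)}(w^{(\ell;\,\epsilon)}_\ell)\,\prod_{\beta=1}^{M} B^{(\ell\,p;\,\epsilon)}(\lambda_\beta(\epsilon))|0\rangle.
\]
Since $\phi_\ell(\{\lambda_\beta\};\{w^{(\ell)}_j\})$ is independent of the summation variable, I would factor it out of the sum, arriving at formula (\ref{eq:reduction-formula-E1}).

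Each of these steps is routine once Propositions~\ref{prop:<gen-E(w)>-e} and \ref{prop:FF-principal} are granted; the only care needed is bookkeeping. One must check that after the substitution the entry ordering in each monodromy-matrix factor is indeed $T^{(\ell\,p;\,\epsilon)}_{\varepsilon_\alpha(j_1),\,\varepsilon'_\alpha(i_1)}$ as dictated by the QISP formula, that $\phi_\ell$ genuinely does not depend on $(\varepsilon_\beta(j_1))_\ell$, and that the hypotheses imported into Proposition~\ref{prop:FF-principal}---that $\{\lambda_\beta\}_M$ is an on-shell spin-$\ell/2$ Bethe state and that the almost-complete $\ell$-strings $w^{(\ell;\,\epsilon)}_j$ keep the spin-$1/2$ transfer matrices regular for $\epsilon\ne 0$---hold here. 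The only genuinely delicate point, namely that the scalar matrix elements stay continuous as $\epsilon\to 0$ even though the QISP formula degenerates at $\epsilon=0$ because of the non-regularity established in \S3.7, has already been absorbed into Proposition~\ref{prop:FF-principal}, so no new obstacle arises at this stage.
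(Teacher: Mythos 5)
Your proposal is correct and follows essentially the same route as the paper, which obtains this proposition precisely by combining Proposition \ref{prop:<gen-E(w)>-e} with Proposition \ref{prop:FF-principal} and factoring out $\phi_{\ell}(\{\lambda_{\beta}\};\{w_j^{(\ell)}\})$ from the finite sum over $(\varepsilon_{\beta}(j_1))_{\ell}$. The bookkeeping points you flag (index ordering in the monodromy-matrix entries and the continuity hypothesis being inherited from Proposition \ref{prop:FF-principal}) are exactly the ones the paper relies on implicitly.
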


Let us recall a product of the general 
spin-$\ell/2$ elementary operators, 
 ${\hat E}_1^{i_1 , \, j_1 \, (\ell \, w)} 
\cdots {\hat E}_m^{i_m, \, j_m \, (\ell \, w)}$, 
which we have introduced in Corollary \ref{cor:productE}.    
We also recall variables $\varepsilon_{\alpha}^{[k] \, '}(i_k)$ 
and $\varepsilon_{\beta}^{[k]}(j_k)$ which take only two values 0 or 1 
for $k=1, 2, \ldots, m$ and $\alpha, \beta=0, 1, \ldots, \ell$. 
We have the following:
\begin{corollary}
Let us take integers $i_k$ and $j_k$ satisfying 
$1 \le i_k, j_k \le \ell$ for $k=1, 2, \ldots, m$. 
 We set $\sum_k i_k-\sum_{k} j_k=N-M$.  
Let $\{ \mu_k \}_N$ be a set of arbitrary $N$ parameters. 
If the set of the Bethe roots $\{ \lambda_{\beta}(\epsilon) \}_M$ 
approaches the set of the Bethe roots 
$\{ \lambda_{\beta} \}_M$ continuously at $\epsilon=0$, 
we have the following:  
\bea 
& & \langle 0 | \prod_{\alpha=1}^{N} C^{(\ell \, w)}(\mu_a) \cdot 
\prod_k \widehat{ E}_k^{i_k, \, j_k \, (\ell \, w)} 
\cdot \prod_{\beta=1}^{M} 
B^{(\ell \, w)}(\lambda_{\beta}) | 0 \rangle \non \\ 
%
%
& & = \left( \prod_{k=1}^{m} \widehat{ N}_{i_k, \, j_k}^{(\ell_k)} 
\right) \, \cdot \, 
e^{\sigma(w) (\sum_{k=1}^{N} \mu_k-\sum_{\gamma=1}^{M} \lambda_{\gamma})} \, 
\phi_{\ell}(\{ \lambda_{\beta} \}; \{w_j^{(\ell)} \}) \,   
\non \\ 
&  & \times \, \sum_{ (\varepsilon_{\beta}^{[1]}(j_1) )_{\ell} } 
\cdots \sum_{ (\varepsilon_{\beta}^{[m]}(j_m) )_{\ell} } \,   
\lim_{\epsilon \rightarrow 0} 
\langle 0 | \prod_{\alpha=1}^{N} C^{(\ell \, p; \, \epsilon)}(\mu_a) \, 
\prod_{k=1}^{m} 
\left( T^{(\ell \, p; \, \epsilon)}_{\varepsilon_1^{[1]}(j_k), 
\, \varepsilon_1^{[k]'}(i_k)}(w_1^{(\ell; \, \epsilon)}) \cdots 
T^{(\ell  \, p; \, \epsilon)}_{\varepsilon_{\ell}(j_k), \, 
\varepsilon_{\ell}^{'}(j_{k}) }(w_{\ell}^{(\ell; \, \epsilon)})
 \right) \non \\ 
& & \quad \times 
\, \prod_{\beta=1}^{M} 
B^{(\ell \, p; \, \epsilon)}(\lambda_{\beta}(\epsilon)) | 0 \rangle \, . 
\label{eq:reduction-formula-product} 
\eea
Here we have chosen sequences $\varepsilon_{\alpha}^{[k] \, '}(j_k)$
for each integer $k$ of $1 \le k \le m$. 
\end{corollary}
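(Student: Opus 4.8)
The plan is to obtain this Corollary as the $m$-operator analogue of the derivation that produced (\ref{eq:reduction-formula-E1}), by stacking the operator-reduction Corollary \ref{cor:E(p)k-e[k]} (itself the Bethe-state version of the operator identity in Corollary \ref{cor:productE}) on top of the ``almost complete $\ell$-string'' evaluation of Proposition \ref{prop:FF-principal}, the latter now applied to a block of $\ell m$ consecutive spin-$1/2$ sites rather than to a single block of $\ell$ sites. First I would feed the product $\prod_{k=1}^{m}\widehat E_k^{i_k,\,j_k\,(\ell\,w)}$ into Corollary \ref{cor:E(p)k-e[k]}: this rewrites the spin-$\ell/2$ matrix element $\langle 0|\prod_{\alpha}C^{(\ell\,w)}(\mu_a)\cdot\prod_k\widehat E_k^{i_k,\,j_k\,(\ell\,w)}\cdot\prod_{\beta}B^{(\ell\,w)}(\lambda_{\beta})|0\rangle$ as the prefactor $\prod_k\widehat N_{i_k,\,j_k}^{(\ell)}\,e^{\sigma(w)(\sum_k\mu_k-\sum_{\gamma}\lambda_{\gamma})}$ times the finite sum, over the sequences $(\varepsilon_{\beta}^{[k]}(j_k))_{\ell}$ for $k=1,\dots,m$, of the spin-$1/2$ matrix elements $\langle 0|\prod_{\alpha}C^{(\ell\,p;\,0)}(\mu_a)\cdot\prod_{k}(e_{\ell(k-1)+1}^{\varepsilon_1^{[k]'}(i_k),\,\varepsilon_1^{[k]}(j_k)}\cdots e_{\ell(k-1)+\ell}^{\varepsilon_{\ell}^{[k]'}(i_k),\,\varepsilon_{\ell}^{[k]}(j_k)})\cdot\prod_{\beta}B^{(\ell\,p;\,0)}(\lambda_{\beta})|0\rangle$, the $(\varepsilon_{\alpha}^{[k]'}(i_k))_{\ell}$ being fixed and the summands symmetric under permutations of them. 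This disposes once and for all of the grading bookkeeping and of the projection operators, so all that remains is to evaluate each spin-$1/2$ matrix element in the sum.

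For this I would argue as in \S 4.4. Fix the summation sequences; then $\prod_{k=1}^{m}(e_{\ell(k-1)+1}^{\cdots}\cdots e_{\ell(k-1)+\ell}^{\cdots})$ is a product of $\ell m$ spin-$1/2$ elementary operators on the consecutive sites $1,2,\dots,\ell m$. By the continuity assumption for the Bethe roots — at $\epsilon=0$ the spin-$1/2$ chain with inhomogeneities $w_j^{(\ell;\,\epsilon)}$ degenerates to the fused spin-$\ell/2$ chain, cf. (\ref{eq:limit-d}); the off-shell covector $\langle 0|\prod_{\alpha}C^{(\ell\,p;\,\epsilon)}(\mu_a)$ at fixed $\mu_a$ is a (continuous, indeed polynomial) function of $\epsilon$; and the on-shell vector $\prod_{\beta}B^{(\ell\,p;\,\epsilon)}(\lambda_{\beta}(\epsilon))|0\rangle$ has entries continuous in $\epsilon$ because $\{\lambda_{\beta}(\epsilon)\}_M$ approaches $\{\lambda_{\beta}\}_M$ continuously — the matrix element at $\epsilon=0$ equals the limit $\epsilon\to0$ of the corresponding matrix element built with the almost complete $\ell$-strings $w_j^{(\ell;\,\epsilon)}$. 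For generic $\epsilon$ the $w_j^{(\ell;\,\epsilon)}$ are generic, the transfer matrices $(A^{(\ell\,p;\,\epsilon)}+D^{(\ell\,p;\,\epsilon)})(w_j^{(\ell;\,\epsilon)})$ are regular, and the QISP formula (\ref{eq:spin-1/2-QISP}) in the block form (\ref{eq:almost-QISP-prod-e}) applies.

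I would then substitute (\ref{eq:almost-QISP-prod-e}) block by block and telescope: the factor $\prod_{j=1}^{\ell b}(A^{(\ell\,p;\,\epsilon)}+D^{(\ell\,p;\,\epsilon)})^{-1}(w_j^{(\ell;\,\epsilon)})$ at the right end of block $b$ stands immediately next to $\prod_{j=1}^{\ell b}(A^{(\ell\,p;\,\epsilon)}+D^{(\ell\,p;\,\epsilon)})(w_j^{(\ell;\,\epsilon)})$ at the left end of block $b+1$, and since transfer matrices at distinct spectral parameters commute these two collapse to the identity. What survives is the ordered product $\prod_{k=1}^{m}(T^{(\ell\,p;\,\epsilon)}_{\varepsilon_1^{[k]}(j_k),\,\varepsilon_1^{[k]'}(i_k)}(w_{\ell(k-1)+1}^{(\ell;\,\epsilon)})\cdots T^{(\ell\,p;\,\epsilon)}_{\varepsilon_{\ell}^{[k]}(j_k),\,\varepsilon_{\ell}^{[k]'}(i_k)}(w_{\ell(k-1)+\ell}^{(\ell;\,\epsilon)}))$ multiplied on the right by $\prod_{j=1}^{\ell m}(A^{(\ell\,p;\,\epsilon)}+D^{(\ell\,p;\,\epsilon)})^{-1}(w_j^{(\ell;\,\epsilon)})$; the latter, hitting the on-shell eigenvector $\prod_{\beta}B^{(\ell\,p;\,\epsilon)}(\lambda_{\beta}(\epsilon))|0\rangle$, pulls out the scalar $\prod_{j=1}^{\ell m}\prod_{\alpha}b(\lambda_{\alpha}(\epsilon)-w_j^{(\ell;\,\epsilon)})$, whose $\epsilon\to0$ limit is $\phi_{\ell m}(\{\lambda_{\beta}\};\{w_j^{(\ell)}\})$ — reducing to the $\phi_{\ell}$ of (\ref{eq:reduction-formula-E1}) when $m=1$. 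Reassembling the prefactor supplied by Corollary \ref{cor:E(p)k-e[k]}, restoring the outer sums over the $(\varepsilon_{\beta}^{[k]}(j_k))_{\ell}$, and sending $\epsilon\to0$ then yields the claimed identity.

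The step that needs real care, as opposed to routine bookkeeping, is the passage $\epsilon\to0$: by the Proposition of \S 3.6 the spin-$1/2$ transfer matrix built from the exact complete $\ell$-strings $w_j^{(\ell)}$ is non-regular at $\lambda=\xi_k$, so the QISP formula is illegitimate at $\epsilon=0$ and one genuinely has to work at generic $\epsilon$ and then pass to the limit, whose validity rests entirely on the continuity assumption. One should also keep all of $\{\mu_a\}_N$ and $\{\lambda_{\beta}\}_M$ away from the singular values $w_j^{(\ell)}-\eta$ modulo $\pi\sqrt{-1}$, as in Lemma \ref{lem:commute}, so that the projection operators could be absorbed in the first step. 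Granting those points, the telescoping of the $(A^{(\ell\,p;\,\epsilon)}+D^{(\ell\,p;\,\epsilon)})$ factors across the $m$ blocks, the gauge transformations relating the homogeneous gradings $w=\pm$ to principal grading $w=p$, and the symmetry of the summands under permutations of the $\varepsilon_{\alpha}^{[k]'}(i_k)$ are all immediate consequences of the single-operator results of \S 4.
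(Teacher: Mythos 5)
Your proposal is correct and follows essentially the same route the paper intends: Corollary \ref{cor:E(p)k-e[k]} to strip off the gradings, normalizations and projectors, then the block-by-block QISP substitution (\ref{eq:almost-QISP-prod-e}) at generic $\epsilon$ with telescoping of the $\left(A^{(\ell\,p;\,\epsilon)}+D^{(\ell\,p;\,\epsilon)}\right)$ factors, followed by the $\epsilon\to 0$ limit under the continuity assumption, exactly as in the $m=1$ derivation of (\ref{eq:reduction-formula-E1}). One substantive remark: your bookkeeping correctly produces the factor $\phi_{\ell m}(\{\lambda_{\beta}\};\{w_j^{(\ell)}\})$ from the surviving $\prod_{k=1}^{\ell m}\left(A+D\right)^{-1}(w_k^{(\ell;\,\epsilon)})$ acting on the on-shell vector, whereas the printed statement carries over $\phi_{\ell}$ from the single-operator case (and likewise leaves the spectral parameters of the $k$th block written as $w_1^{(\ell;\,\epsilon)},\ldots,w_{\ell}^{(\ell;\,\epsilon)}$ instead of $w_{\ell(k-1)+1}^{(\ell;\,\epsilon)},\ldots,w_{\ell(k-1)+\ell}^{(\ell;\,\epsilon)}$); these are evidently typographical, and your version is the consistent one.
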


%
%
\setcounter{equation}{0} 
\renewcommand{\theequation}{5.\arabic{equation}}
\section{Spin-$\ell/2$ form factors via the spin-1/2 scalar products}

%
%
\subsection{Fundamental commutation relations}

For given sequences  
$(\varepsilon_{\alpha}^{'} )_m$ and $( \varepsilon_{\beta} )_m$  
we consider sets $\mbox{\boldmath$\alpha$}^{\pm}$ defined by 
eqs. (\ref{eq:def-sets-aa'}). We also denote 
the sets $\mbox{\boldmath$\alpha$}^{\pm}$ by 
${\bm \alpha}^{-}(\{ \varepsilon_{\alpha}^{'}\})$ and 
${\bm \alpha}^{+}(\{ \varepsilon_{\alpha}^{'}\})$, respectively, 
in order to show their dependence on the sequences
$(\varepsilon_{\alpha}^{'} )_m$ and $( \varepsilon_{\beta} )_m$ explicitly.  
We take a set of distinct integers $a_j$ for $j \in {\bm \alpha}^{-}$ 
and $a_k^{'}$ for $k \in {\bm \alpha}^{+}$
such that they satisfy $1 \le a_j \le N$ for $j \in {\bm \alpha}^{-}$ and 
$1 \le a_k^{'} \le N+k$ for $k \in {\bm \alpha}^{+}$.  
For the given set of $a_j, a_j^{'}$, 
we introduce $\mbox{\boldmath$A$}_j$ and $\mbox{\boldmath$A$}_j^{'}$ by 
\begin{eqnarray} 
\mbox{\boldmath$A$}_j & = & 
\{ b; 1 \le b \le N + m, \, b \ne a_k, a_k^{'} \, \, 
\mbox{for} \, k < j \} \, , 
\nonumber \\ 
\mbox{\boldmath$A$}_j^{'} 
& = & \{ b; 1 \le b \le N+m, \, b \ne a_k \, \, \mbox{for} \, k \le j, 
b \ne a_k^{'} \, \mbox{for} \, k < j \}.          
\end{eqnarray}

Setting rapidities $\mu_{N+j}$ by 
\begin{equation} 
\mu_{N+j}= w_j \, , \quad \mbox{for}
 \, \,  j=1, 2, \ldots, m \, ,   
\end{equation} 
we can show the fundamental commutation relations as follows \cite{KMT2000}.   
\begin{eqnarray}  
& & \langle 0 | \left( \prod_{\alpha=1}^{N} C^{(1 \, p)}
(\mu_{\alpha}) \right) \, 
T^{(1 \, p)}_{\varepsilon_1, \varepsilon_1^{'}}(\mu_{N+1}) \cdots  
T^{(1 \, p)}_{\varepsilon_{2sm}, \varepsilon_{m}^{'}}(\mu_{N+m})  
\nonumber \\ 
& = & 
\left( 
\prod_{j \in \mbox{\boldmath$\alpha$}^{-}(\{ \varepsilon_{\alpha}^{'} \})} 
\sum_{a_j=1}^{N}    
\prod_{j \in \mbox{\boldmath$\alpha$}^{+}(\{ \varepsilon_{\beta} \})} 
\sum_{a_j^{'}=1}^{N+j} 
\right) 
\, \, 
G_{\{ a_j, \,  a_j^{'} \}}
^{ ( \varepsilon_{\alpha}^{'} )_m, \, ( \varepsilon_{\beta} )_m }
( ( \mu_k )_{N+m} )      
\langle 0 | 
\prod_{k \in \mbox{\boldmath$A$}_{m+1}(\{a_j, \, a_j^{'} \}) } 
C^{(1 \, p)}(\mu_{k}) \, ,  \nonumber   
\end{eqnarray}
where coefficients $G_{\{ a_j, \, a_j^{'} \}}
^{ ( \varepsilon_{\alpha}^{'} )_m, \, ( \varepsilon_{\beta} )_m }
((\mu_{\alpha})_{N+m})$ are given by   
\begin{eqnarray} 
& & G_{\{ a_j, \, a_j^{'} \}}
^{ ( \varepsilon_{\alpha}^{'} )_m, \, ( \varepsilon_{\beta} )_m }
( (\mu_k)_{N+m} )  
= \prod_{j \in \mbox{\boldmath$\alpha$}^{+}(\{ \varepsilon_{\beta} \}) }
 \left( 
{\frac {\prod_{b=1; b \in \mbox{\boldmath$A$}^{'}_j }^{N+j-1} 
\sinh(\mu_{b}-\mu_{a_j^{'}} + \eta) }  
 {\prod_{b=1, b \in \mbox{\boldmath$A$}_{j+1} }^{N+j} 
 \sinh(\mu_{b} - \mu_{a_j^{'}})}}  
\right) 
\nonumber \\ 
& & \qquad \qquad  
\times  
 \prod_{j \in \mbox{\boldmath$\alpha$}^{-}(\{ \varepsilon_{\alpha}^{'}\}) }
 \left( d(\mu_{a_j}; \{ w_k \}_L) 
{\frac {\prod_{b=1; b \in \mbox{\boldmath$A$}_j}^{N+j-1} 
\sinh(\mu_{a_j}-\mu_b + \eta) }  
 {\prod_{b=1, b \in \mbox{\boldmath$A$}^{'}_j }^{N+j} 
 \sinh(\mu_{a_j} - \mu_b)}}  
     \right) \, . \label{eq:FCR}
\eea
Here we recall 
\be 
  d(\mu; \{ w_k \}_L) =  \prod_{k=1}^{L} b(\mu - w_k) \, . 
\ee

We consider  
the sums over integers $a_j$ and $a_k^{'}$ such that they satisfy  
$1 \le a_j \le N$ and $1 \le a_k^{'} \le N+j$ 
for $j \in {\bm \alpha}^{-}$ and $k \in {\bm \alpha}^{+}$, 
respectively. Hereafter, we express  
the products of the sums over $a_j$ and $a_k^{'}$ by 
the symbol $\sum_{\{ a_j, a_j^{'} \}}$, as follows.  
\begin{equation} 
\sum_{\{ a_j, \, a_k^{'} \}} =
 \prod_{j \in \mbox{\boldmath$\alpha$}^{-}} 
\left( \sum_{a_j=1}^{N} \right)   
\prod_{j \in \mbox{\boldmath$\alpha$}^{+}} 
\left( \sum_{a_j^{'}=1}^{N+j} \right) \, . 
\end{equation}

%
%
\subsection{Form factors as a sum of the spin-1/2 scalar products }

We shall evaluate the spin-$\ell/2$ form factor 
${\widehat F}^{i, \, j \, (\ell \, w)}_k
(\{ \mu_{\alpha} \}_N, \, \{ \lambda_{\beta} \}_M)$. 

We first define the scalar product in the spin-$1/2$ case for two sets of $M$ parameters $\{ \mu_k \}_M$ and $\{ \lambda_{\gamma} \}_M$ by 
\be 
S^{(1)}_M(\{ \mu_1, \ldots, \mu_M\}, \{ \lambda_1, \ldots, \lambda_M \}; \{ w_j \}_L) = 
\langle 0 | \prod_{k=1}^{M} C^{(1 \, p)}(\mu_k) 
\prod_{\gamma=1}^{M} B^{(1 \, p)}(\lambda_{\gamma}) | 0 \rangle \, .  
\ee
Here $\{ \mu_k \}_M$ and $\{ \lambda_{\gamma} \}_M$ are not necessarily 
solutions of the Bethe ansatz equations. 
We define the scalar product for the spin-$\ell/2$ operators  
$B^{(\ell \, p)}(\mu_k)$ and $C^{(\ell \, p)}(\lambda_{k})$ 
for $k=1, 2, \ldots, M$, by    
\be 
S_M^{(\ell)}(\{\mu_{\alpha} \}, \{\lambda_{\beta} \}; \{ \xi_k \}_{N_s} ) 
= \langle 0 | \prod_{\alpha=1}^{M} C^{(\ell \, p)}(\mu_{\alpha})  
\prod_{\beta=1}^{M} B^{(\ell \, p)}(\lambda_{\beta}) | 0 \rangle \, . 
\ee
Here we also recall that 
$\{ \mu_k \}_M$ and $\{ \lambda_{\gamma} \}_M$ are not necessarily Bethe roots. 

Let us first review Slavnov's formula of scalar products 
in the spin-1/2 case \cite{Slavnov}: 
if $\lambda_1, \lambda_2, \ldots, \lambda_M$ 
satisfy the spin-1/2 Bethe-ansatz equations 
for the spin-$1/2$ XXZ spin chain with inhomogeneity parameters $w_j$, 
the scalar product is expressed in terms of the determinant:  
\be 
S_M^{(1)}(\{\mu_{\alpha} \}, \{\lambda_{\beta} \}; \{ w_j \}_{L} ) 
= {\frac 
{\det \widehat{H}(\{ \lambda_{\alpha} \}_M, \{ \mu_k \}_M; \{w_j \}_{L})} 
 {\prod_{1 \le j < k \le M} \sinh(\mu_j-\mu_k)  
\prod_{1 \le \alpha < \beta \le M} \sinh(\lambda_{\beta}-\lambda_{\alpha})}} 
\, . 
\ee
Here, the matrix elements of $\widehat{ H }$ 
with entry $(a, b)$ for $a, b= 1, 2, \ldots, M$ are given by 
\bea 
\widehat{ H}_{a, \, b}(\{ \lambda \}_M, \mu_b; \, \{ w_j \}_{L}) 
& = & {\frac {\sinh \eta} {\sinh(\lambda_a- \mu_b)}}    
\Big( a(\mu_b) \prod_{k=1; k \ne a}^{M} \sinh(\lambda_k - \mu_b + \eta)  
\non \\ 
& & \qquad - d(\mu_b; \{ w_j \}_{L}) 
\prod_{k=1; k \ne a}^{M} \sinh(\lambda_k - \mu_b - \eta)  
\Big) \, . 
\eea
Here we recall that 
$d(\mu; \{ w_j \}_{L}) 
= \prod_{j=1}^{L} \sinh(\mu- w_j)/\sinh(\mu- w_j + \eta)$.

We remark that 
it is sometimes useful to make use of the following relations: 
\begin{lemma}
For two sets of arbitrary parameters $\{ \mu_k \}_{N+m}$ 
and $\{ \lambda_{\gamma} \}_M$ we have 
\begin{eqnarray}  
& & \langle 0 | \prod_{\alpha=1}^{N} 
C^{(1 \, p)}(\mu_{\alpha}) \, \cdot \, 
T_{\varepsilon_1, \varepsilon_1^{'}}^{(1 \, p)}(\mu_{N+1}) \cdots  
T_{\varepsilon_{m}, \varepsilon_{m}^{'}}^{(1 \, p)}(\mu_{N+m}) 
\, \cdot \, \prod_{\gamma=1}^{M} 
B^{(1 \, p)}(\lambda_{\gamma}) | 0 \rangle  
\non \\ 
& & = 
\langle 0 | \prod_{\beta=1}^{M} 
C^{(1 \, p)}(\lambda_{\beta}) \, \cdot \, 
T_{\varepsilon_{m}^{'}, \varepsilon_{m}}^{(1 \, p)}(\mu_{N+m}) 
 \cdots 
T_{\varepsilon_1^{'}, \varepsilon_1}^{(1 \, p)}(\mu_{N+1})
\, \cdot \, \prod_{\gamma=1}^{M} 
B^{(1 \, p)}(\mu_{\gamma}) | 0 \rangle  \, . 
\eea
\end{lemma}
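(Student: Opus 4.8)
The plan is to obtain the identity from two structural properties of the principal-grading monodromy matrix: a transposition symmetry and invariance under permutations of the inhomogeneity parameters. Both rest on the fact that the symmetric $R$-matrix $R^{(1\,p)}_{12}(u)$ is a symmetric $4\times 4$ matrix, i.e. $\bigl(R^{(1\,p)}_{12}(u)\bigr)^{t_1 t_2}=R^{(1\,p)}_{12}(u)$, together with the Yang--Baxter relation.

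First I would record the transposition property. Taking the full transpose, over the auxiliary space and all $L$ quantum spaces, of
\[
T^{(1\,p)}_{0,\,1 2 \cdots L}(\lambda; \{ w_j \}_L)=
R^{(1\,p)}_{0 L}(\lambda-w_L)\cdots R^{(1\,p)}_{0 1}(\lambda-w_1)
\]
reverses the order of the factors and transposes each one, so by the symmetry of $R^{(1\,p)}$ it equals $R^{(1\,p)}_{0 1}(\lambda-w_1)\cdots R^{(1\,p)}_{0 L}(\lambda-w_L)$, which is $P^{\mathrm{rev}}\,T^{(1\,p)}_{0,\,12\cdots L}(\lambda;\{ w_{L+1-j} \}_L)\,P^{\mathrm{rev}}$, where $P^{\mathrm{rev}}$ is the operator on the quantum space that reverses the order of the $L$ sites. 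Transposing only in the auxiliary space exchanges the entries $(\varepsilon,\varepsilon')\leftrightarrow(\varepsilon',\varepsilon)$, so transposing only in the quantum space gives
\[
\bigl(T^{(1\,p)}_{\varepsilon,\,\varepsilon'}(\lambda;\{ w_j \}_L)\bigr)^{t_{1\cdots L}}
= P^{\mathrm{rev}}\,T^{(1\,p)}_{\varepsilon',\,\varepsilon}(\lambda;\{ w_{L+1-j} \}_L)\,P^{\mathrm{rev}};
\]
in particular $B^{(1\,p)}(\lambda)^{t_{1\cdots L}}$ becomes $C^{(1\,p)}(\lambda)$ and vice versa, up to conjugation by $P^{\mathrm{rev}}$ and reversal of $\{ w_j \}$.

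Next I would transpose the whole matrix element. Since the left-hand side is a scalar it equals its own transpose; using $(XY)^{t}=Y^{t}X^{t}$ on the quantum space, the commutativity of the $B^{(1\,p)}(\lambda_\gamma)$ among themselves and of the $C^{(1\,p)}(\mu_\alpha)$ among themselves, the relations above, and $P^{\mathrm{rev}}|0\rangle=|0\rangle$, $\langle 0|P^{\mathrm{rev}}=\langle 0|$, one arrives at exactly the right-hand side of the Lemma except that every monodromy matrix carries the reversed inhomogeneities $w_L,\dots,w_1$ instead of $w_1,\dots,w_L$. To remove this reversal I would use that a vacuum-to-vacuum matrix element of a product of entries of the monodromy matrix is a symmetric function of $w_1,\dots,w_L$: by the Yang--Baxter relation the operator $\check R_{j,j+1}(w_{j+1}-w_j)=P_{j,j+1}R^{(1\,p)}_{j,j+1}(w_{j+1}-w_j)$ conjugates $T^{(1\,p)}_{0,\cdots}(\cdot;\dots,w_{j+1},w_j,\dots)$ into $T^{(1\,p)}_{0,\cdots}(\cdot;\dots,w_j,w_{j+1},\dots)$, this conjugation passes through the auxiliary-index projections that define $B^{(1\,p)}$, $C^{(1\,p)}$ and $T^{(1\,p)}_{\varepsilon,\varepsilon'}$, and $\check R_{j,j+1}(u)$ fixes both $|0\rangle$ and $\langle 0|$ because $\check R(u)|{\uparrow}{\uparrow}\rangle=|{\uparrow}{\uparrow}\rangle$ and $\langle{\uparrow}{\uparrow}|\check R(u)=\langle{\uparrow}{\uparrow}|$. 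Inserting $\check R_{j,j+1}^{-1}\check R_{j,j+1}$ between successive factors and telescoping then shows the matrix element is invariant under the transposition $w_j\leftrightarrow w_{j+1}$, hence under every permutation of $\{ w_j \}$, in particular under the reversal. Combining the two steps gives the assertion.

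The main difficulty I expect is careful bookkeeping rather than a conceptual gap: one must track which spaces $P^{\mathrm{rev}}$ and the $\check R_{j,j+1}$ act on, and keep in mind that the bare transposition in the third step genuinely produces the reversed-$\{ w_j \}$ statement, so the argument really does need the symmetry in the inhomogeneities to close. An alternative that sidesteps $P^{\mathrm{rev}}$ is to apply the fundamental commutation relations (and their mirror image) to both sides and reduce each to the same sum of spin-$1/2$ scalar products, using the $\{\mu\}\leftrightarrow\{\lambda\}$ symmetry of the scalar product; but for off-shell parameters Slavnov's determinant formula is not available, so I would prefer the transposition argument above.
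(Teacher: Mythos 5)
Your proposal is correct. Note that the paper states this lemma without any proof (it is introduced only as a useful remark), so there is no argument of the author's to compare against; your route via (i) the symmetry $\bigl(R^{(1\,p)}\bigr)^{t_1 t_2}=R^{(1\,p)}$ of the principal-grading $R$-matrix, giving $\bigl(T^{(1\,p)}_{\varepsilon,\varepsilon'}\bigr)^{t_{1\cdots L}}=P^{\mathrm{rev}}\,T^{(1\,p)}_{\varepsilon',\varepsilon}(\cdot\,;\{w_{L+1-j}\})\,P^{\mathrm{rev}}$, and (ii) the $\check R$-conjugation argument showing that vacuum-to-vacuum matrix elements are symmetric in the inhomogeneities, is the standard and correct way to establish it, and it correctly isolates why the statement is tied to principal grading (for $w=\pm$ the $R$-matrix is not symmetric and the naive transposition fails). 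Two minor points are worth recording. First, $\check R_{j,j+1}(w_j-w_{j+1})$ is non-invertible precisely when $w_j-w_{j+1}=\pm\eta$, which is exactly the configuration realized by the complete $\ell$-strings used throughout the paper; your exchange argument therefore holds only for generic inhomogeneities, and one should add that both sides are rational in the $w_j$ so the identity extends by continuity to the string configurations wherever both sides are finite. Second, your derivation produces $\prod_{\alpha=1}^{N}B^{(1\,p)}(\mu_\alpha)$ on the right-hand side, which is what charge conservation requires; the upper limit $M$ printed in the paper's statement of the lemma appears to be a typo, and your argument in effect corrects it.
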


We now consider the matrix element of an $m$th product 
of the spin-1/2 operators with respect to given bra and ket vectors,  
 $\langle \{ \mu_{\alpha} \}_N^{(\ell \, p; \, 0)} |$ and     
$| \{ \lambda_{\alpha} \}_N^{(\ell \, p; \, 0)} \rangle$, respectively.  
We define $P$ by $P=N-M$, where $P$ can be negative. 
Setting $\mu_{N+j}=w_j$ for $j=1, 2, \ldots, m$, we have 
\begin{eqnarray}  
& & \langle 0 | \left( \prod_{\alpha=1}^{N} 
C^{(1 \, p)}(\mu_{\alpha}) \right) \, 
T_{\varepsilon_1, \varepsilon_1^{'}}^{(1 \, p)}(\mu_{N+1}) \cdots  
T_{\varepsilon_{m}, \varepsilon_{m}^{'}}^{(1 \, p)}(\mu_{N+m}) 
\, \cdot \, \prod_{\gamma=1}^{M}
B^{(1 \, p)}(\lambda_{\gamma}) | 0 \rangle  
\nonumber \\ 
& = & 
 \sum_{\{ a_j, \, a_k^{'} \} }  
G_{\{ a_j, \,  a_k^{'} \}}^{(\varepsilon_{\alpha}^{'})_m, \, 
(\varepsilon_{\beta})_m}( (\mu_k)_{N+m} ) \quad      
\langle 0 | 
\prod_{k \in \mbox{\boldmath$A$}_{m+1}(\{a_j, a_j^{'} \}) } 
C(\mu_{k}) \, \cdot \,  \prod_{\gamma=1}^{M}
B(\lambda_{\gamma}) | 0 \rangle  
  \nonumber  \\  
& = & 
 \sum_{\{ a_j, \, a_k^{'} \} } 
G_{\{ a_j, \,  a_k^{'} \}}
^{(\varepsilon_{\alpha}^{'})_m, \, (\varepsilon_{\beta})_m}
( (\mu_k)_{N+m} )    \non \\ 
& & \times \, 
S_M^{(1)}(\{ \mu_1, \ldots, \mu_{N}, w_1, \ldots, 
w_{m} \} \setminus \{\mu_{a_j}, \mu_{a_k^{'}} \}_{m+P}, 
\{ \lambda_{\gamma} \}_M; \{ w_j \}_L)  \, .  
\label{eq:scalar-product}  
\end{eqnarray} 
Here we remark that the number of elements in the set $\{ a_j, \, a_k^{'} \}$ 
is given by $m+P$.

Let us now consider the case of $m=\ell$ for 
the form factor of the spin-$\ell/2$ operators. 
For a given sequence of parameters $\mu_k$ for $1 \le k \le N$ 
we extend it into a sequence of length $N+\ell$ by setting 
$\mu_k = w_{k-N}^{(\ell)}$ for $k=N+1, \ldots, N+\ell$.  
We define another sequence $\mu_k(\epsilon)$ for $k=1, 2, \ldots, N+\ell$ by 
\be  
\mu_k(\epsilon) = 
\left\{ 
\begin{array}{cc} 
\mu_k & \mbox{for} \, \, 1 \le k \le N , \\
w_{k-N}^{(\ell; \, \epsilon)} & \mbox{for} \, \, N < k \le N+\ell . 
\end{array}
\right.
\ee
Substituting (\ref{eq:scalar-product}) into (\ref{eq:CeeeB}) 
we have 
\bea  
& &  \langle 0 | \prod_{\alpha=1}^{N} C^{(\ell \, p; \, 0)}(\mu_a) \, 
e_1^{\varepsilon_1^{'}, \varepsilon_1} \cdots 
e_{\ell}^{\varepsilon_{\ell}^{'}, \varepsilon_{\ell}} 
\, \prod_{\beta=1}^{M} 
B^{(\ell \, p; \, 0)}(\lambda_{\beta}) | 0 \rangle  \nonumber \\ 
%
& & = \phi_{\ell}(\{ \lambda_{\beta} \}; \{w_j^{(\ell)} \}) \,    
 \sum_{\{ a_j, \, a_k^{'} \} } 
G_{\{ a_j, \,  a_k^{'} \}}^{(\varepsilon_{\alpha}^{'})_{\ell}, \, 
(\varepsilon_{\beta})_{\ell}}((\mu_k)_{N+\ell})   \nonumber \\ 
& & \times \, \lim_{\epsilon \rightarrow 0}     
S_M^{(1)}(\{ \mu_1(\epsilon), \ldots, \mu_{N}(\epsilon), 
w_1^{(\ell; \, \epsilon)}, \ldots, 
w_{\ell}^{(\ell; \, \epsilon)} \} \setminus \{\mu_{a_j}(\epsilon), 
\mu_{a_k^{'}}(\epsilon) \}_{\ell+P}, 
\{ \lambda_{\gamma}(\epsilon) \}_M; 
\{ w_j^{(\ell; \, \epsilon)} \}_L)  .  \non \\ 
\label{eq:limit-scalar-product}
\eea

%
%

We shall explicitly express the limiting procedure 
in the last line of (\ref{eq:limit-scalar-product}). 
Here we recall the Bethe-ansatz equations (BAE) for the spin-$\ell/2$ case (\ref{eq:BAE-spin-ell/2}) and the limiting procedure (\ref{eq:limit-d}), where the 
spin-$\ell/2$ BAE is derived from the spin-1/2 BAE with 
almost complete strings $w_j^{(\ell; \, \epsilon)}$ by sending $\epsilon$ to 0.
Let us now consider the case where some of $\mu_k$ 
are given by inhomogeneity parameters $w_j$. 
We first define the following function: 
\be 
  d^{(\ell)'}(\mu; \{ \xi_k \}_{N_s}) = 
\left\{ \begin{array}{cc}
 \prod_{k=1}^{N_s} {\frac 
{\sinh(\mu - \xi_k)} {\sinh(\mu - \xi_k+ \ell \eta)}}
 & \mbox{for} \, \, \mu \ne w_j^{(\ell)} , \\  
  0 & \mbox{for} \,\,  \mu=w_j^{(\ell)} . 
\end{array}
\right. \, . 
\ee
We next introduce the matrix ${\widehat H}^{(\ell)'}$. 
We define the matrix elements of the matrix ${\widehat H}^{(\ell)'}$ 
with entry $(a, b)$ for $a, b= 1, 2, \ldots, M$, by 
\bea 
{\widehat H}_{a, \, b}^{(\ell)'}(\{ \lambda \}_M, \mu_b; \, \{ \xi_k \}_{N_s})
& = & {\frac {\sinh \eta} {\sinh(\lambda_a- \mu_b)}}    
\Big( a(\mu_b) \prod_{k=1; k \ne a}^{M} \sinh(\lambda_k - \mu_b + \eta)  
\non \\ 
& & \qquad - d^{(\ell)'}(\mu_b; \{ \xi_k \}_{N_s}) 
\prod_{k=1; k \ne a}^{M} \sinh(\lambda_k - \mu_b - \eta)  
\Big) \, . 
\eea
If $\{\lambda_{\beta} \}_M$ satisfy the Bethe ansatz equations of 
the spin-$\ell/2$ XXZ spin chain with inhomogeneity parameters $\xi_k$, 
we define the expression  
$S^{(\ell)'}_M(\{ \mu_k\},  \{ \lambda \}; \{ \xi_k \}_{N_s}) $ by 
\be 
S^{(\ell)'}_M( \{ \mu_k\}_M, \, \{ \lambda \}_M; \{ \xi_k \}_{N_s}) 
= {\frac 
{\det {\widehat H}^{(\ell)'}
(\{ \lambda_{\alpha} \}_M, \{ \mu_k \}_M; \{\xi_k \}_{N_s})} 
 {\prod_{1 \le j < k \le M} \sinh(\mu_j-\mu_k)  
\prod_{1 \le \alpha < \beta \le M} \sinh(\lambda_{\beta}-\lambda_{\alpha})}} 
\, .
\ee
In eq.(\ref{eq:limit-scalar-product}), sending $\epsilon$ to zero, 
for a given set of integers $\{ a_j, \, a_k^{'} \}$ we have 
\bea 
& & \lim_{\epsilon \rightarrow 0}  
S^{(1)}_M( \{ \mu_k(\epsilon) \}_{N+\ell} \setminus 
\{\mu_{a_j}(\epsilon), \mu_{a_k^{'}}(\epsilon) \}_{\ell+P}, 
\{ \lambda_{\beta}(\epsilon) \}_M ; \{ w_j^{(\ell; \, \epsilon)} \}_{L}) 
\non \\ 
& & = S^{(\ell)'}_M(\{ \mu_k \}_{N+\ell} \setminus 
\{\mu_{a_j}, \mu_{a_k^{'}} \}_{\ell+P},
\{ \lambda_{\beta} \}_M ; \{ \xi_k \}_{N_s}) \, . 
\eea
We summarize the result as follows: 
\begin{lemma}
Let $\{ \lambda_{\gamma} \}_M$ be a solution 
of the Bethe ansatz equations for the spin-$\ell/2$ chain 
with inhomogeneity parameters $\xi_k$ ($1 \le k \le N_s$), 
and $\{ \mu_k \}_N$ a set of arbitrary parameters.  
We assume that $\{ \lambda_{\beta}(\epsilon) \}$ is a solution of 
the Bethe ansatz equations for the spin-1/2 chain with 
$w_j=w_j^{(\ell; \, \epsilon)}$ ($1 \le j \le L$) and 
it approaches $\{ \lambda_{\gamma} \}_M$ continuously at $\epsilon=0$.  
We express the matrix elements of a product 
of the spin-1/2 operators in the limit of sending 
$\epsilon$ to 0 in terms 
of the modified scalar product $S^{(\ell)'}$ as follows.      
\bea
& & 
\lim_{\epsilon \rightarrow 0} 
\langle 0 | \prod_{k=1}^{N} C^{(\ell \, p; \, \epsilon)}(\mu_k) 
\, 
T^{(\ell \, p; \, \epsilon)}_{\varepsilon_1, \, \varepsilon_1^{'}}(w_1^{(\ell; \, \epsilon)}) \cdots 
T^{(\ell  \, p; \, \epsilon)}_{\varepsilon_{\ell}, \, \varepsilon_{\ell}^{'}}
(w_{\ell}^{(\ell; \, \epsilon)}) 
\, \prod_{\beta=1}^{M} 
B^{(\ell \, p; \, \epsilon)}(\lambda_{\beta}(\epsilon)) | 0 \rangle 
\non \\ 
& = & 
 \sum_{\{ a_j, \, a_k^{'} \} } 
G_{\{ a_j, \,  a_k^{'} \}}^{(\varepsilon_{\alpha}^{'})_{\ell}, \, 
(\varepsilon_{\beta})_{\ell}}((\mu_k)_{N+\ell})   \nonumber \\ 
& & \times \,    
S_M^{(\ell)'}(\{ \mu_1, \ldots, \mu_{N}, 
w_1^{(\ell)}, \ldots, w_{\ell}^{(\ell)} \} 
\setminus \{\mu_{a_j}, \mu_{a_k^{'}} \}_{\ell+P}, 
\{ \lambda_{\gamma} \}_M; 
\{ \xi_k \}_{N_s})  \, . 
\eea
Here, $P=N-M$ and we have set 
$\mu_{N+j}=w_j^{(\ell)}$ for $j=1, 2, \ldots, \ell$. 
\label{lem:mat-mS}
\end{lemma}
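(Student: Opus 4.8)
The plan is to reduce everything to the situation at $\epsilon\neq 0$, where all inhomogeneities are generic, apply the fundamental commutation relations there, and then pass to the limit $\epsilon\to 0$ termwise inside a \emph{finite} sum. First I would fix $\epsilon\neq 0$ generic. Since $T^{(\ell\,p;\,\epsilon)}=T^{(1\,p)}(\,\cdot\,;\{w_j^{(\ell;\,\epsilon)}\}_L)$ and $C^{(\ell\,p;\,\epsilon)},B^{(\ell\,p;\,\epsilon)}$ are the corresponding spin-$1/2$ operators, while $\{\lambda_\gamma(\epsilon)\}_M$ is on-shell for that spin-$1/2$ chain, equation (\ref{eq:scalar-product}) with $m=\ell$ and $\mu_{N+j}=w_j^{(\ell;\,\epsilon)}$ ($j=1,\ldots,\ell$) applies verbatim and rewrites the left-hand side as the finite sum $\sum_{\{a_j,\,a_k'\}}G_{\{a_j,\,a_k'\}}^{(\varepsilon_\alpha')_\ell,\,(\varepsilon_\beta)_\ell}((\mu_k(\epsilon))_{N+\ell})\,S_M^{(1)}(\{\mu_k(\epsilon)\}_{N+\ell}\setminus\{\mu_{a_j}(\epsilon),\mu_{a_k'}(\epsilon)\}_{\ell+P},\{\lambda_\gamma(\epsilon)\}_M;\{w_j^{(\ell;\,\epsilon)}\}_L)$, where $\mu_k(\epsilon)=\mu_k$ for $k\le N$. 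Because the index set $\{a_j,a_k'\}$ does not depend on $\epsilon$, it then suffices to compute the $\epsilon\to 0$ limit of each summand.

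For the coefficient: by (\ref{eq:FCR}), $G$ is a product of ratios $\sinh(\cdots+\eta)/\sinh(\cdots)$ in the $\mu$'s together with factors $d(\mu_{a_j};\{w_k^{(\ell;\,\epsilon)}\}_L)$, and here $a_j\le N$, so each such $\mu_{a_j}$ is a fixed generic parameter. By genericity of $\{\mu_k\}_N$ and $\{\xi_b\}_{N_s}$, and since $q$ is not a root of unity, every denominator stays bounded away from $0$ as $\epsilon\to 0$ (differences of two complete-string rapidities tend to nonzero integer multiples of $\eta$, all other differences to generic nonzero values), and $d(\mu_{a_j};\{w_k^{(\ell;\,\epsilon)}\}_L)\to d^{(\ell)}(\mu_{a_j};\{\xi_k\}_{N_s})$ by (\ref{eq:limit-d}); hence $G_{\{a_j,a_k'\}}((\mu_k(\epsilon))_{N+\ell})\to G_{\{a_j,a_k'\}}((\mu_k)_{N+\ell})$ with $\mu_{N+j}=w_j^{(\ell)}$. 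For the scalar product: since $\{\lambda_\gamma(\epsilon)\}_M$ is on-shell for the spin-$1/2$ chain with inhomogeneities $\{w_j^{(\ell;\,\epsilon)}\}_L$, Slavnov's formula \cite{Slavnov} gives $S_M^{(1)}=\det\widehat{H}/(\prod_{j<k}\sinh(\mu_j-\mu_k)\prod_{\alpha<\beta}\sinh(\lambda_\beta-\lambda_\alpha))$, with the columns of $\widehat{H}$ indexed by the $M$ surviving $\mu$-arguments.

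The crux is then the entrywise limit $\widehat{H}\to\widehat{H}^{(\ell)'}$. Using the continuity assumption carried over from Proposition \ref{prop:FF-principal} (so $\lambda_k(\epsilon)\to\lambda_k$) together with the fact that $\widehat{H}_{a,b}$ is an entire function of $\{\lambda_k\}$, $\mu_b$ and $d(\mu_b;\,\cdot\,)$, it remains to identify $\lim_{\epsilon\to 0}d(\mu_b;\{w_j^{(\ell;\,\epsilon)}\}_L)$ for each surviving column $\mu_b$, and this is precisely where the modified function appears. If $\mu_b$ is one of the generic $\mu_k$, then $d(\mu_b;\{w_j^{(\ell;\,\epsilon)}\}_L)\to d^{(\ell)}(\mu_b;\{\xi_k\}_{N_s})=d^{(\ell)'}(\mu_b;\{\xi_k\}_{N_s})$ because $\mu_b\neq w_j^{(\ell)}$. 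If $\mu_b$ is instead one of the surviving inhomogeneities $w_{j_0}^{(\ell;\,\epsilon)}$, then $d(w_{j_0}^{(\ell;\,\epsilon)};\{w_j^{(\ell;\,\epsilon)}\}_L)=b(0)\prod_{j\neq j_0}b(w_{j_0}^{(\ell;\,\epsilon)}-w_j^{(\ell;\,\epsilon)})=0$ identically in $\epsilon$ (the factor $b(0)$ is exactly $0$), so its limit is $0=d^{(\ell)'}(w_{j_0}^{(\ell)};\{\xi_k\}_{N_s})$. In both cases the limiting column is exactly that of $\widehat{H}^{(\ell)'}_{a,b}$; the Slavnov denominators converge to nonzero limits by genericity; hence $S_M^{(1)}\to S_M^{(\ell)'}(\{\mu_1,\ldots,\mu_N,w_1^{(\ell)},\ldots,w_\ell^{(\ell)}\}\setminus\{\mu_{a_j},\mu_{a_k'}\}_{\ell+P},\{\lambda_\gamma\}_M;\{\xi_k\}_{N_s})$, the surviving set having $(N+\ell)-(\ell+P)=M$ elements. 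Inserting the two limits into the finite sum of the first step gives the asserted identity.

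The main obstacle is exactly this behaviour of $d(\,\cdot\,;\{w_j^{(\ell;\,\epsilon)}\}_L)$ in the limit: it converges to the ``bulk'' function $d^{(\ell)}$ away from the complete-string rapidities but drops to $0$ on them (because of the $b(0)$ factor that is always present when the argument coincides with an inhomogeneity), and reconciling this discontinuity with the limit of the Slavnov determinant is what forces the modified objects $d^{(\ell)'}$ and $S_M^{(\ell)'}$ --- the naive replacement $d^{(\ell)}$ would be wrong on exactly the columns indexed by the surviving string rapidities. Everything else (no spurious pole in any $G$-factor, no degenerating Slavnov denominator) is routine, following from genericity of $\{\mu_k\}_N,\{\xi_b\}_{N_s}$ and from $q$ not being a root of unity.
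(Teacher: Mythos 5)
Your proposal is correct and follows essentially the same route as the paper: apply the fundamental commutation relations / scalar-product decomposition (\ref{eq:scalar-product}) at generic $\epsilon\neq 0$ with $\mu_{N+j}=w_j^{(\ell;\,\epsilon)}$, then pass to the limit termwise in the finite sum over $\{a_j,a_k'\}$, with the limit of each Slavnov determinant producing $S_M^{(\ell)'}$. You in fact supply more detail than the paper does on the crucial point — that $d(\mu_b;\{w_j^{(\ell;\,\epsilon)}\}_L)$ vanishes identically when $\mu_b$ is one of the surviving inhomogeneities (the $b(0)=0$ factor), which is exactly what forces the modified function $d^{(\ell)'}$ and hence $\widehat{H}^{(\ell)'}$.
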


Through Lemma \ref{lem:mat-mS} we show the following: 
\begin{proposition}
Let $\{ \lambda_{\gamma} \}_M$ be a solution 
of the Bethe-ansatz equations for the spin-$\ell/2$ chain 
with inhomogeneity parameters $\xi_k$ ($1 \le k \le N_s$), 
and $\{ \mu_k \}_N$ a set of arbitrary parameters.  
We assume that $\{ \lambda_{\beta}(\epsilon) \}$ is a solution of 
the Bethe ansatz equations for the spin-1/2 chain with 
$w_j=w_j^{(\ell; \, \epsilon)}$ ($1 \le j \le L$).  
Then, 
every spin-$\ell/2$ form factor associated with the Bethe roots 
$\{ \lambda_{\gamma} \}_M$ can be expressed 
as a sum of the spin-1/2 scalar products. 
For instance, we evaluate the form factor of the spin-$\ell/2$ elementary 
operator as follows.   
\bea
& & \widehat{F}^{i_1, \, j_1 \, (\ell \, w)}_{k=1}
(\{ \mu_{\alpha} \}_N, \{ \lambda_{\beta} \}_M)   
= \langle 0 | \prod_{\alpha=1}^{N} C^{(\ell \, w)}(\mu_a) \cdot 
\widehat{ E}^{i_1, \, j_1 \, (\ell \, w)}_1 \cdot \prod_{\beta=1}^{M} 
B^{(\ell \, w)}(\lambda_{\beta}) | 0 \rangle \non \\ 
%
%
& & = \widehat{ N}_{i_1, \, j_1}^{(\ell)} \,  
e^{\sigma(w) (\sum_k \mu_k-\sum_{\gamma} \lambda_{\gamma})} \, 
\phi_{\ell}(\{ \lambda_{\beta} \}; \{w_j^{(\ell)} \}) \,   
\sum_{ (\varepsilon_{\beta}(j_1) )_{\ell} } \,   
 \sum_{\{ a_j, \, a_k^{'} \} } 
G_{\{ a_j, \,  a_k^{'} \}}^{(\varepsilon_{\alpha}^{'}(i_1))_{\ell}, \, 
(\varepsilon_{\beta}(j_1))_{\ell}}((\mu_k)_{N+\ell})   \nonumber \\ 
& & \times \,      
S_M^{(\ell)'}(\{ \mu_1, \ldots, \mu_{N}, 
w_1^{(\ell)}, \ldots, 
w_{\ell}^{(\ell)} \} \setminus \{\mu_{a_j}, 
\mu_{a_k^{'}} \}_{\ell+P}, 
\{ \lambda_{\gamma} \}_M; 
\{ \xi_k \}_{N_s})  . 
 \label{eq:red-FF-explicit}
\eea
Here we have fixed a sequence $\varepsilon^{'}_{\alpha}(i_1)$. 
\end{proposition}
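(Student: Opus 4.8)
The plan is to assemble the statement from the two results established immediately above, so that the proof is essentially a substitution. The reduction formula (\ref{eq:reduction-formula-E1}) --- itself obtained by combining Proposition \ref{prop:<gen-E(w)>-e} with Proposition \ref{prop:FF-principal} --- already expresses the spin-$\ell/2$ form factor $\widehat{F}^{i_1,\,j_1\,(\ell\,w)}_{1}(\{\mu_\alpha\}_N,\{\lambda_\beta\}_M)$ as the prefactor $\widehat{N}^{(\ell)}_{i_1,\,j_1}\,e^{\sigma(w)(\sum_k\mu_k-\sum_\gamma\lambda_\gamma)}\,\phi_\ell(\{\lambda_\beta\};\{w_j^{(\ell)}\})$ times $\sum_{(\varepsilon_\beta(j_1))_\ell}$ of the limit $\lim_{\epsilon\rightarrow 0}\langle 0|\prod_\alpha C^{(\ell\,p;\,\epsilon)}(\mu_a)\,T^{(\ell\,p;\,\epsilon)}_{\varepsilon_1(j_1),\,\varepsilon_1'(i_1)}(w_1^{(\ell;\,\epsilon)})\cdots T^{(\ell\,p;\,\epsilon)}_{\varepsilon_\ell(j_1),\,\varepsilon_\ell'(i_1)}(w_\ell^{(\ell;\,\epsilon)})\prod_\beta B^{(\ell\,p;\,\epsilon)}(\lambda_\beta(\epsilon))|0\rangle$. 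Lemma \ref{lem:mat-mS} evaluates exactly this limit as $\sum_{\{a_j,\,a_k'\}}G^{(\varepsilon'_\alpha)_\ell,\,(\varepsilon_\beta)_\ell}_{\{a_j,\,a_k'\}}((\mu_k)_{N+\ell})\,S^{(\ell)'}_M(\{\mu_1,\ldots,\mu_N,w_1^{(\ell)},\ldots,w_\ell^{(\ell)}\}\setminus\{\mu_{a_j},\mu_{a_k'}\}_{\ell+P},\{\lambda_\gamma\}_M;\{\xi_k\}_{N_s})$ with $\mu_{N+j}=w_j^{(\ell)}$ and $P=N-M$. Substituting this into the reduction formula term by term over the outer sum, with $\varepsilon'_\alpha=\varepsilon'_\alpha(i_1)$ and $\varepsilon_\beta=\varepsilon_\beta(j_1)$, produces (\ref{eq:red-FF-explicit}).

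For completeness I would also spell out the internal mechanics of Lemma \ref{lem:mat-mS} in the order used there. First, for each fixed sequence $(\varepsilon_\beta(j_1))_\ell$ and generic $\epsilon$, apply the fundamental commutation relations of \S 5.1 to push the string $T^{(\ell\,p;\,\epsilon)}_{\varepsilon_1,\varepsilon_1'}(w_1^{(\ell;\,\epsilon)})\cdots T^{(\ell\,p;\,\epsilon)}_{\varepsilon_\ell,\varepsilon_\ell'}(w_\ell^{(\ell;\,\epsilon)})$ through $\prod_\alpha C^{(\ell\,p;\,\epsilon)}(\mu_\alpha)$, yielding $\sum_{\{a_j,\,a_k'\}}G\,\langle 0|\prod_{k\in\mbox{\boldmath$A$}_{\ell+1}}C^{(\ell\,p;\,\epsilon)}(\mu_k)\,\prod_\beta B^{(\ell\,p;\,\epsilon)}(\lambda_\beta(\epsilon))|0\rangle$, i.e. a sum of spin-$1/2$ scalar products with $\ell+P$ bra rapidities drawn from $\{\mu_1,\ldots,\mu_N,w_1^{(\ell;\,\epsilon)},\ldots,w_\ell^{(\ell;\,\epsilon)}\}$. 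Second, since $\{\lambda_\beta(\epsilon)\}$ is on-shell for the spin-$1/2$ chain with generic inhomogeneities $w_j^{(\ell;\,\epsilon)}$, represent each scalar product by Slavnov's determinant formula. Third, send $\epsilon\rightarrow 0$: by the continuity assumption the Bethe roots converge, $\lambda_\beta(\epsilon)\rightarrow\lambda_\gamma$, and by (\ref{eq:limit-d}) one has $d(\mu;\{w_j^{(\ell;\,\epsilon)}\})\rightarrow d^{(\ell)}(\mu;\{\xi_k\})$; the only delicate point is that whenever a surviving bra rapidity $\mu_b$ equals a complete-string rapidity $w_j^{(\ell)}$ the factor $d(\mu_b;\{w_j^{(\ell;\,\epsilon)}\})$ carries a factor $\sinh(\mu_b-w_j^{(\ell;\,\epsilon)})$ that vanishes in the limit, so $d$ is effectively replaced by the primed function $d^{(\ell)'}$ with $d^{(\ell)'}(w_j^{(\ell)})=0$, and the scalar product tends to the modified scalar product $S^{(\ell)'}_M$. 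Reassembling over $\sum_{(\varepsilon_\beta(j_1))_\ell}$ and $\sum_{\{a_j,\,a_k'\}}$ and restoring the prefactor gives (\ref{eq:red-FF-explicit}); the product case $\prod_k\widehat{E}^{i_k,\,j_k\,(\ell\,w)}_k$ follows verbatim from Corollary \ref{cor:productE} together with the multi-block analogue of Lemma \ref{lem:mat-mS}.

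The main obstacle is precisely the $\epsilon\rightarrow 0$ limit. At $\epsilon=0$ the spin-$1/2$ transfer matrix with inhomogeneities equal to the exact complete $\ell$-strings $w_j^{(\ell)}$ is non-regular at $\lambda=\xi_k$ (the Proposition in \S 3.6), so the QISP formula cannot be invoked directly and one is forced to keep $\epsilon\neq 0$ throughout the inversion of the transfer matrices and the operator commutations, taking the limit only at the very end. Justifying that this limit exists and produces the modified scalar product requires the continuity assumption on $\{\lambda_\beta(\epsilon)\}$ \cite{Assumption} together with careful bookkeeping of which determinant entries pick up a vanishing $\sinh(\mu_b-w_j^{(\ell;\,\epsilon)})$ --- exactly the role played by the primed objects $d^{(\ell)'}$ and $S^{(\ell)'}_M$ introduced before Lemma \ref{lem:mat-mS}. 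Once that lemma is in place, the present proposition is an immediate substitution.
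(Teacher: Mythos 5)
Your proposal is correct and follows the paper's own route: the paper proves this proposition precisely by substituting Lemma \ref{lem:mat-mS} into the reduction formula (\ref{eq:reduction-formula-E1}) obtained from Propositions \ref{prop:<gen-E(w)>-e} and \ref{prop:FF-principal}, term by term over the sum $\sum_{(\varepsilon_{\beta}(j_1))_{\ell}}$. Your additional unpacking of the mechanics of Lemma \ref{lem:mat-mS} (fundamental commutation relations, Slavnov determinant, and the $\epsilon\rightarrow 0$ limit replacing $d$ by $d^{(\ell)'}$ and $S^{(1)}_M$ by $S^{(\ell)'}_M$) matches the paper's derivation in \S 5.1--5.2.
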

It is easy to show 
the formula corresponding to (\ref{eq:red-FF-explicit}) 
for a given product of the spin-$\ell/2$ elementary operators such as 
$\widehat{ E}^{i_1, \, j_1 \, (\ell \, w)}_1 
\widehat{ E}^{i_2, \, j_2 \, (\ell \, w)}_2 \cdots 
\widehat{ E}^{i_m, \, j_m \, (\ell \, w)}_m$. 

%
%
\setcounter{equation}{0} 
\renewcommand{\theequation}{6.\arabic{equation}}
\section{Spin-$s$ XXZ correlation functions in a massless region}

Applying the reduction formula we now derive the multiple-integral representations of the correlation functions of the integrable spin-$s$ XXZ spin chain in a region of the massless regime: $0 \le \zeta < \pi/2s$. Here we remark that integer $2s$ corresponds to integer $\ell$ of $V^{(\ell)}$. We show only the main results. In fact, we derive them by following mainly the procedures 
of Ref. \cite{DM2} except for the evaluation of the expectation values of products of the spin-$s$ operators. 

Let us review the main procedures for deriving the multiple-integral representation of the spin-$s$ XXZ correlation functions, briefly. First, we introduce the spin-$s$ elementary operators as the basic blocks for constructing the local operators of the integrable spin-$s$ XXZ spin chain. Secondary, we reduce them into a sum of products of the spin-1/2 elementary operators, which we express in temrms of the matrix elements of the spin-1/2 monodromy matrix through the spin-1/2 QISP formula. We then evaluate their scalar products with Slavnov's formula of the Bethe-ansatz scalar products, we we have shown in \S 5. Here, the expectation value of a physical quantity is expressed as a sum of the ratios of the Bethe-ansatz scalar products to the norm of the ground-state Bethe-ansatz eigenvector, and the ratios are expressed in terms of the determinants of some matrices.    Thirdly, by solving the integral equations for the matrices in the thermodynamic limit, we derive the multiple-integral representation of the correlation functions. Here, the integrals and their solutions for the spin-$s$ case 
are given in Ref. \cite{DM2}.

%
%
\subsection{Conjecture of the spin-$s$ Ground-state solution}

Let us now consider the ground state of the integrable spin-$s$ XXZ spin chain 
in the massless regime.    
Here we remark that integer $2s$ corresponds to integer $\ell$ 
of $V^{(\ell)}$. 
In the massless regime we set $\eta= i \zeta$ with $0 \le \zeta < \pi$. 
For the spin-$s$ case, in the region $0 \le \zeta < \pi/2s$ 
we assume that the spin-$s$ ground state  
$| \psi_g^{(2s)} \rangle$ is 
given by $N_s/2$ sets of the $2s$-strings:   
\begin{equation} 
\lambda_{a}^{(\alpha)} 
= \mu_a - (\alpha- 1/2) \eta + \delta_a^{(\alpha)} \, , \quad  
\mbox{for} \, \, a=1, 2, \ldots, N_s/2 \, \,  
\mbox{and} \, \,  \alpha = 1, 2, \ldots, 2s .  
\end{equation} 
Here we also assume that string deviations $\delta_a^{(\alpha)}$ are  
small enough when $N_s$ is large enough. 
In terms of $\lambda_{a}^{(\alpha)}$, 
the spin-$s$ ground state associated with grading $w$ 
is given by 
\begin{equation} 
 | \psi_g^{(2s \, w)} \rangle = 
\prod_{a=1}^{N_s/2} \prod_{\alpha=1}^{2s} 
{B}^{(2s \, w)}(\lambda_a^{(\alpha)}; \{\xi_b \}_{N_s}) | 0 \rangle . 
\end{equation}
Here we have $M$ Bethe roots with $M= 2s \, N_s/2 = s N_s$.  

%
%
\subsection{Multiple-integral representations for arbitrary matrix elements}

Let us now formulate the multiple-integral representations 
of the spin-$s$ XXZ correlation functions in the general case 
for the massless region: $0 \le \zeta < \pi/2s$.  
We define the zero-temperature correlation function 
for a given product of the general spin-$s$ elementary operators 
with grading $w$ 
 $\widehat{ E}_1^{i_1 , \, j_1 \, (2s \, p)} \cdots 
 \widehat{ E}_m^{i_m, \, j_m \, (2s \, p)}$, which are   
$(2s+1) \times (2s+1)$ matrices, by  
\begin{equation}
\widehat{ F}_m^{(2s \, p)}(\{i_k, j_k \}) =  \langle \psi_g^{(2s \, w)} | 
\prod_{k=1}^{m} \widehat{ E}_k^{i_k , \, j_k \, (2s \, w)} 
|\psi_g^{(2s \, w)}  \rangle / \langle \psi_g^{(2s \, w)} 
| \psi_g^{(2s \, w)} \rangle \, . 
\label{eq:def-CF}
\end{equation}

For the $m$th product of elementary operators, 
we introduce the sets of variables $\varepsilon_{\alpha}^{[k] \, '}$s 
and $\varepsilon_{\beta}^{[k]}$s ($1 \le k \le m$) such that 
the number of $\varepsilon_{\alpha}^{[k] \, '}=1$ with $1 \le a \le 2s$ 
is given by $i_k$ 
and the number of $\varepsilon_{\beta}^{[k]}=1$ with $1 \le b \le 2s$ 
by $j_k$, respectively. 
Here, the variables $\varepsilon_{\alpha}^{[k] \, '}$ 
and $\varepsilon_{\beta}^{[k]}$ take only two values 0 or 1 
(see also Corollary \ref{cor:productE}).  
We then express them by integers $\varepsilon_j^{'}$s and $\varepsilon_j$s for 
$j=1, 2, \ldots, 2sm$ as follows: 
\bea 
\varepsilon_{2s (k-1)+ \alpha}^{'}  & = & \varepsilon_{\alpha}^{[k] \, '} \quad \mbox{for} \quad \alpha=1, 2, \ldots, 2s;  k=1, 2, \ldots, m,  \nonumber \\  
\varepsilon_{2s (k-1)+ \beta} & = & \varepsilon_{\beta}^{[k]} \quad  
\mbox{for} \quad 
\beta=1, 2, \ldots, 2s;  k=1, 2, \ldots, m. 
\eea

For given sets of $\varepsilon_j$ and $\varepsilon_j^{'}$ for 
$j=1, 2, \ldots, 2sm$ we define 
$\mbox{\boldmath$\alpha$}^{-}$ by the set of integers $j$ 
satisfying $\varepsilon_j^{'}=1$ and 
$\mbox{\boldmath$\alpha$}^{+}$ by 
the set of integers $j$ satisfying $\varepsilon_j=0$:  
\begin{equation} 
\mbox{\boldmath$\alpha$}^{-}(\{ \varepsilon_j^{'} \}) 
= \{ j ; \, \varepsilon_j^{'}=1 \}  
\, , \quad 
\mbox{\boldmath$\alpha$}^{+}(\{ \varepsilon_j \}) 
= \{ j ; \, \varepsilon_j=0 \}  \, . 
\label{eq:def-aa'}
\end{equation} 
We denote by $r$ and $r^{'}$ the number 
of elements of the set $\mbox{\boldmath$\alpha$}^{-}$ 
and $\mbox{\boldmath$\alpha$}^{+}$, respectively. Due to charge conservation, 
we have $r + r^{'} = 2sm$. 
Precisely, we have $r= \sum_{k=1}^{m} i_k$ and 
$r^{'}= 2sm - \sum_{k=1}^{m} j_k$.     

For sets ${\bm \alpha}^{-}$ and ${\bm \alpha}^{+}$, which correspond to 
$\{ \varepsilon_a^{'} \}$ and $\{ \varepsilon_b \}$, respectively,   
we define integral variables ${\tilde \lambda}_j$ for $j \in {\bm \alpha}^{-}$ 
and ${\tilde \lambda}^{'}_{j}$ for $j \in {\bm \alpha}^{+}$, 
respectively, by the following:   
\begin{equation} 
({\tilde \lambda}^{'}_{j^{'}_{max}}, \ldots, 
{\tilde \lambda}^{'}_{j^{'}_{min}},  {\tilde \lambda}_{j_{min}}, 
{\tilde \lambda}_{j_{max}})
=(\lambda_1, \ldots, \lambda_{2sm}) \, . 
\end{equation}

We now introduce a matrix 
$S=S\left( (\lambda_j)_{2sm}; (w_j^{(2s)})_{2sm} \right)$. 
For each integer $j$ satisfying $1 \le j \le 2sm$,  
we define $\alpha(\lambda_j)$ by $\alpha(\lambda_j)= \gamma$ 
for an integer $\gamma$ satisfying $1 \le \gamma \le 2s$  
if $\lambda_j$ is related to an integral variable $\mu_j$ through 
$\lambda_j = \mu_j - (\gamma - 1/2) \eta$ 
or if $\lambda_j$ takes a value close to $w_k^{(2s)}$ with $\beta(k)=\gamma$.
Thus, $\mu_j$ corresponds to the ``string center'' of $\lambda_j$. 
Here we have defined $\beta(j)$ by 
\begin{equation}
\beta(j) = j - 2s [[(j-1)/2s]] \quad  (1 \le j \le M). 
\label{df:beta}
\end{equation}   
Here $[[x]]$ denotes the greatest integer less than or equal to $x$. 
We define the $(j,k)$ element of the matrix $S$ by   
\begin{equation} 
S_{j,k} = \rho(\lambda_j - w_k^{(2s)} + \eta/2) \, 
\delta(\alpha(\lambda_j), \beta(k)) \, , \quad {\rm for} \quad 
j, k= 1, 2, \ldots, 2sm \, .  
\end{equation} 
Here $\rho(\lambda)$ denotes the density of string centers \cite{DM2}, 
and $\delta(\alpha, \beta)$ the Kronecker delta.  
We obtain the following multiple-integral representation:     
\begin{eqnarray} 
& & \widehat{ F}^{(2s \, w)}_{m}(\{i_k, j_k\}) =  
\quad \widehat{ C}^{(2s)}(\{i_k, j_k \}) \, \times \nonumber \\ 
& & 
 \times \left( \int_{-\infty+ i \epsilon}^{\infty+ i \epsilon}
+ \cdots 
+ \int_{-\infty - i 
(2s-1) \zeta 
+ i \epsilon}
^{\infty - i 
(2s-1) \zeta  
+ i \epsilon} \right)  d \lambda_1 
\cdots 
\left( \int_{-\infty+ i \epsilon}^{\infty+ i \epsilon}
+ \cdots 
+ \int_{-\infty - i 
(2s-1) \zeta 
+ i \epsilon}^{\infty - i 
(2s-1) \zeta 
 + i \epsilon} 
\right)  d \lambda_{r^{'}} 
\nonumber \\  
& & \times \left( \int_{-\infty - i \epsilon}^{\infty - i \epsilon}
+ \cdots 
+ \int_{-\infty - i 
(2s-1) \zeta 
- i \epsilon}
^{\infty - i 
(2s-1) \zeta 
 - i \epsilon} \right)  
d \lambda_{r^{'} + 1} 
\cdots 
\left( \int_{-\infty - i \epsilon}^{\infty - i \epsilon}
+ \cdots 
+ \int_{-\infty - i 
(2s-1) \zeta 
 - i \epsilon}
^{\infty - i 
(2s-1) \zeta 
 - i \epsilon} 
\right)  d \lambda_{2sm}
\nonumber \\ 
& & \quad \times 
\sum_{{\bm \alpha}^{+}(\{ \epsilon_j \})} 
Q(\{ \varepsilon_j, \varepsilon_j^{'} \}; \lambda_1, \ldots, \lambda_{2sm}) \, 
{\rm det}
S(\lambda_1, \ldots, \lambda_{2sm}) \, . 
\label{eq:MIR}
\end{eqnarray}
Here the sum of ${\bm \alpha}^{+}(\{ \varepsilon_j \})$
is taken over all $\{ \varepsilon_j \}$ corresponding to 
$\{ \varepsilon_b^{[k]} \}$ $(1 \le k \le m)$ 
such that the number of $\varepsilon_b^{[k]}=1$ 
with $1 \le b \le 2s$ is given by $j_k$.  
$Q(\{ \varepsilon_j, \varepsilon_j^{'} \}; \lambda_1, \ldots, \lambda_{2sm})$ 
is given by     
\begin{eqnarray} 
& & 
Q(\{ \varepsilon_j, \varepsilon_j^{'} \}; \lambda_1, \ldots, \lambda_{2sm}) 
\nonumber \\ 
& & 
=(-1)^{r^{'}} 
{\frac { \prod_{j \in {\bm \alpha}^{-}( \{ \varepsilon_j^{'} \} ) }
\left( \prod_{k=1}^{j-1} 
\sinh({\tilde \lambda}_{j} - w_k^{(2s)} + \eta) 
\prod_{k=j+1}^{2sm} \sinh({\tilde \lambda}_{j} - w_k^{(2s)} ) \right)}
{\prod_{1 \le k < \ell \le 2sm} 
\sinh(\lambda_{\ell} - \lambda_{k} + \eta + \epsilon_{\ell, k})} } 
\nonumber \\ 
& \times &  
{\frac { \prod_{j \in {\bm \alpha}^{+}( \{ \varepsilon_j \} ) } 
\left( \prod_{k=1}^{j-1} 
\sinh({\tilde \lambda}^{'}_{j} - w_k^{(2s)} - \eta) 
\prod_{k=j+1}^{2sm} \sinh({\tilde \lambda}^{'}_{j} - w_k^{(2s)} ) \right)}
{\prod_{1 \le k < \ell \le 2sm} 
\sinh(w_{k}^{(2s)} - w_{\ell}^{(2s)})} } \, . \label{eq:Q}
\end{eqnarray}
In the denominator we set $\epsilon_{k, \ell}= i \epsilon$ for 
${Im}(\lambda_k -\lambda_{\ell}) > 0$ and 
$\epsilon_{k, \ell}= - i \epsilon$ for 
${Im}(\lambda_k -\lambda_{\ell}) < 0$,  where
$\epsilon$ is an infinitesimally small positive number.  
The coefficient $\widehat{ C}^{(2s)}(\{i_k, j_k \})$ is given by 
\bea 
\widehat{ C}^{(2s)}(\{i_k, j_k \})  
& = & \prod_{k=1}^{m} \widehat{ N}_{i_k, \, j_k}^{(\ell)} \non \\ 
& = & \prod_{k=1}^{m} 
\left( {\frac {g(j_k)} {g(i_k)}} \, 
{\frac {F(2s, i_k)} {F(2s, j_k)}} q^{i_k(2s- i_k)/2 - j_k(2s - j_k)/2} 
\right) \, . \label{eq:Coef} 
\eea
Here we have made use of (\ref{eq:reduction-formula-product}) and 
(\ref{eq:red-FF-explicit}). 
If we put $g(2s, j)=\sqrt{F(2s, j)}$ for $j=0, 1, \ldots, 2s$ 
into (\ref{eq:Coef}),  we have  
\begin{equation} 
\widehat{ C}^{(2s)}(\{i_k, j_k \}) = 
\prod_{k=1}^{m} \sqrt{  
\left[ \begin{array}{c}
      2s \\
      i_k  
      \end{array} 
 \right]_q  
\left[ \begin{array}{c}
      2s \\
      j_k  
      \end{array} 
 \right]_q^{-1} } 
%
%
%
\, .  
\end{equation}
In (\ref{eq:Q}) we may take any ${\bm \alpha}^{-}(\{ \varepsilon_j^{'} \})$ 
corresponding to $\varepsilon_{\alpha}^{[k] \, '}$s 
for $k=1, 2, \ldots, m$, as far as  
the number of $\varepsilon_{\alpha}^{[k] \, '}=1$ with $1 \le {\alpha} \le 2s$ 
is given by $i_k$ for each $k$.  

We can show the symmetric expression for the multiple-integral 
representation of the spin-$s$ correlation function 
$\widehat{ F}^{(2s \, w)}_{m}(\{i_k, j_k\})$ as follows. 
\begin{eqnarray}
&&  
\widehat{ F}^{(2s \, w)}_{m}(\{i_k, j_k\}) = 
 \frac{\widehat{ C}^{(2s)}(\{i_k, j_k \})}
{\prod_{1 \leq \alpha < \beta \leq 2s}
\sinh^{m}(\beta-\alpha )\eta}     
\prod_{1\leq k < l \leq m}
\frac{\sinh^{2s}(\pi(\xi_k-\xi_l)/\zeta)}
{\prod^{2s}_{j=1}\prod^{2s}_{r=1}\sinh(\xi_k-\xi_l+(r-j)\eta)}   
\non \\
&&  \times \sum_{\sigma \in {\cal S}_{2sm}/({\cal S}_m)^{2s} } 
({\rm sgn} \, \sigma) \, 
 \prod^{r^{'}}_{j=1}  
\int^{\infty+ i \epsilon }_{-\infty + i \epsilon}   
d \mu_{\sigma j}  \prod^{2sm}_{j=r^{'} +1}
\int^{\infty - i \epsilon }_{-\infty - i \epsilon} d \mu_{\sigma j} 
\nonumber \\  
& &
%
\sum_{\{ \varepsilon_b^{[1]} \} } \cdots \sum_{\{ \varepsilon_b^{[k]} \} } \, 
Q^{'}(\{ \varepsilon_j, \varepsilon_j^{'} \}; \lambda_{\sigma 1}, \ldots, 
\lambda_{\sigma(2sm)})) \, 
\left( \prod^{2sm}_{j=1} 
{\frac {\prod^{m}_{b=1} \prod^{2s-1}_{\beta=1}
\sinh(\lambda_{j}-\xi_b+ \beta \eta)}
{\prod_{b=1}^{m} \cosh(\pi(\mu_{j}-\xi_b)/\zeta)}} \right) \non \\ 
& &\times \, {\frac {i^{2sm^2}} { (2 i \zeta)^{2sm} }} \, 
\prod^{2s}_{\gamma=1} 
\prod_{1 \le b < a \le m}
\sinh(\pi(\mu_{2s(a-1)+\gamma}-\mu_{2s(b-1)+\gamma})/\zeta) \, .  
 \label{eq:CFF2}
\end{eqnarray}
Here $\lambda_j$ are given by $\lambda_j= \mu_j - (\beta(j)-1/2) \eta$ for 
$j=1, \ldots, 2sm$, and (${\rm sgn} \, \sigma$) 
denotes the sign of permutation 
$\sigma \in {\cal S}_{2sm}/({\cal S}_m)^{2s}$. The coefficient 
$\widehat{ C}^{(2s)}(\{i_k, j_k \})$ is given by (\ref{eq:Coef}), and   
$Q^{'}(\{ \varepsilon_j, \varepsilon_j^{'} \}; \lambda_{1}, \ldots, 
\lambda_{2sm}))$ is given by 
$Q^{'}(\{ \varepsilon_j, \varepsilon_j^{'} \}; \lambda_{1}, \ldots, 
\lambda_{2sm})) = 
Q(\{ \varepsilon_j, \varepsilon_j^{'} \}; \lambda_{1}, \ldots, 
\lambda_{2sm})) \times 
{\prod_{1 \le k < \ell \le 2sm} 
\sinh(w_{k}^{(2s)} - w_{\ell}^{(2s)})}$.  
We recall that the sums over $\{ \varepsilon_{\beta}^{(k)} \}$ 
are taken over all 
$\varepsilon_{\beta}^{(k)}$s for $1 \le k \le m$ 
such that the number of integers $\beta$ with 
$\varepsilon_{\beta}^{(k)}=1$ and $1 \le \beta \le \ell$ 
is equal to $j_k$ for each $k$. 

In Appendix E we shall explain 
the derivation of the symmetric expression for 
the multiple-integral representation of the correlation functions.

%
%
\subsection{Relations due to the spin inversion symmetry}

We shall derive a consequence of the spin-inversion symmetry. We shall assume 
that it should hold for the ground state of the integrable spin-$s$ XXZ chain associated with even $L$. 
Here we recall $L=2s N_s$ and $N_s$ is the number of the lattice sites of 
the spin-$\ell/2$ XXZ chain and $L$ that of the spin-1/2 XXZ chain 
associated with it.

Let us denote by $| \psi_g^{(2s \, w; \, 0)} \rangle$    
the Bethe-ansatz eigenstate with $M=s N_s$ down-spins 
for the spin-1/2 XXZ transfer matrix under zero magnetic field. 
It is given by $| \psi_g^{(2s \, w; \, 0)} \rangle = 
\prod_{\gamma=1}^{M} B^{(2s \, w; \, 0)} (\lambda_{\gamma}) | 0 \rangle$ 
with $M=L/2$ where inhomogeneity parameters are given by   
the $N_s$ pieces of the complete $2s$-strings, $w_j^{(2s)}$ 
for $j=1, 2, \ldots, L$.  
We now assume that it has the spin inversion symmetry: 
\be 
U | \psi_g^{(2s \, w; \, 0)} \rangle 
= \pm | \psi_g^{(2s \, w; \, 0)} \rangle \qquad  
\mbox{for} \quad U=\prod_{j=1}^{L} \sigma_j^{x} . 
\ee 
It leads to symmetry relations as follows.
\be 
\langle \psi_g^{(2s \, w; \, 0)} | \, 
e_1^{\varepsilon_1^{'}, \varepsilon_1} \cdots 
e_{2s}^{\varepsilon_{2s}^{'}, \varepsilon_{2s}} 
\, 
| \psi_g^{(2s \, w; \, 0)} \rangle  
= \langle \psi_g^{(2s \, w; \, 0)} | \, 
e_1^{1- \varepsilon_1^{'}, \, 1 - \varepsilon_1} \cdots 
e_{2s}^{ 1 - \varepsilon_{2s}^{'}, \, 1 - \varepsilon_{2s}} \, 
| \psi_g^{(2s \, w; \, 0)} \rangle \, . \label{eq:spin-inv}
\ee

%
%

For the XXX case where the parameter $q$ is given by 1, we can show that the 
spin-1/2 XXX transfer matrix with arbitrary inhomogeneity parameters $w_j$ 
has the $SU(2)$ symmetry and hence every Bethe-ansatz eigenvector 
is a highest weight vector of the $SU(2)$. Thus, the Bethe eigenvector 
with $S^Z=0$ has the total spin 0, and hence it is invariant under any 
rotational operation. Therefore, it has the spin inversion symmetry. 
For the XXZ case where $q$ is not equal to 1,  the 
spin-1/2 XXX transfer matrix with arbitrary inhomogeneity parameters $w_j$ 
does not have the SU(2) symmetry, in general.  However, we assume that 
it has the spin inversion symmetry such as the XXX case. 


Applying the spin-inversion symmetry (\ref{eq:spin-inv}) we derive symmetry 
relations among the expectation values of local or global operators. 

For an illustration, let us evaluate the one-point function 
in the spin-1 case with $i_1=j_1=1$, 
$\langle E_1^{1, \, 1 \, (2 \, p)} \rangle$. 
Setting $\varepsilon_1^{'}=0$ and $\varepsilon_2^{'}=1$ we decompose the 
spin-1 elementary operator in terms of a sum of products of the spin-1/2 ones  
\be 
\langle \psi_g^{(2 \, p)} | E_1^{1, \, 1 \, (2 \, p)} 
| \psi_g^{(2 \, p)} \rangle 
= 
\langle \psi_g^{(2 \, p ; \, 0)} | e_1^{0, \, 0} e_2^{1, \, 1} 
| \psi_g^{(2 \, p ; \, 0)} \rangle
+ \langle \psi_g^{(2 \, p ; \, 0)} | e_1^{0, \, 1} e_2^{1, \, 0} 
| \psi_g^{(2 \, p ; \, 0)} \rangle 
\, .  
\ee
Through the symmetry relations (\ref{eq:sym-epsilon_a'}) 
with respect to $\varepsilon_{\alpha}^{'}$  
we have the following equalities:   
\bea 
& & \langle \psi_g^{(2 \, p ; \, 0)} | e^{0, \, 0}_{1} e_2^{1, \, 1} 
| \psi_g^{(2 \, p ; \, 0)} \rangle  
= \langle \psi_g^{(2\, p ; \, 0)} | e^{1, \, 0}_{1}e_2^{0, \, 1} 
| \psi_g^{(2 \, p ; \, 0)} \rangle 
\, , \non \\  
& & \langle \psi_g^{(2 \, p ; \, 0)} | e^{1, \, 1}_{1}e_2^{0, \, 0} 
| \psi_g^{(2 \, p ; \, 0)} \rangle  
= \langle \psi_g^{(2 \, p ; \, 0)} | e^{0, \, 1}_{1} e_2^{1, \, 0} 
| \psi_g^{(2 \, p ; \, 0)} \rangle .  
\eea 
From spin-inversion symmetry (\ref{eq:spin-inv}) we have 
\bea 
& & \langle \psi_g^{(2 \, p ; \, 0))} | e^{0, \, 0}_{1}e_2^{1, \, 1} 
| \psi_g^{(2 \, p ; \, 0))} \rangle  
= \langle \psi_g^{(2 \, p ; \, 0)} | e^{1, \, 1}_{1}e_2^{0, \, 0} | \psi_g^{(2 \, p ; \, 0)} \rangle 
\, , \non \\  
& & \langle \psi_g^{(2 \, p ; \, 0) )} | e^{0, \, 1}_{1}e_2^{1, \, 0} 
| \psi_g^{(2 \, p ; \, 0)} \rangle  
= \langle \psi_g^{(2 \, p ; \, 0)} | e^{1, \, 0}_{1}e_2^{0, \, 1} | \psi_g^{(2 \, p ; \, 0)} \rangle 
\eea
and hence we have the equalities of the four terms. 
We therefore obtain the following: 
\be 
\langle \psi_g^{(2)} | E_1^{1, \, 1 \, (2 \, p)} | \psi_g^{(2)} \rangle 
= 2 \, \langle \psi_g^{(2 \, p ; \, 0)} | e_1^{0, \, 0} e_2^{1, \, 1} | \psi_g^{(2 \, p ; \, 0)} \rangle\, . 
\ee   

We thus derive the double-integral representation 
of the one-point function $\langle E_1^{1, \, 1 \, (2 \, p)} \rangle$ 
given in  Ref. \cite{DM2}. 
For the spin-1 case, each of 
 the one-point functions is given by the double integral 
associated with a single product of the spin-1/2 operators.

\section*{Acknowledgements}
The author would like to thank C. Matsui, K. Motegi and J. Sato 
for useful comments. 

\appendix 

%
%
\setcounter{equation}{0} 
 \renewcommand{\theequation}{A.\arabic{equation}}
\section{Derivation of Proposition \ref{prop:E-w} }

It follows from Lemma \ref{lem:PE=EP=E} that we have 
$E^{i, \, j \, (\ell \, w)} = P^{(\ell)} E^{i, \, j \, (\ell \, w)}$.   
We therefore have 
\bea
E^{i, \, j \, (\ell \, w)} & = & P^{(\ell)} 
\sum_{ (\varepsilon_{\alpha}^{'}(i))_{\ell}}  
\sum_{ (\varepsilon_{\beta}(j))_{\ell} } 
g_{ij}(\varepsilon_{\alpha}^{'}(i), \varepsilon_{\beta}(j) )  \, \, 
\sigma_{a(1)}^{-} \cdots \sigma_{a(i)}^{-} 
|| \ell, 0 \rangle  \langle \ell, 0 || 
\sigma_{b(1)}^{+} \cdots \sigma_{b(j)}^{+}   
\non \\ 
& = & ||\ell, i\rangle \sum_{(\varepsilon_{\alpha}^{'}(i))_{\ell} } 
\sum_{ (\varepsilon_{\beta}(j))_{\ell} } 
g_{ij}( \varepsilon_{\alpha}^{'}(i), \varepsilon_{\beta}(j) ) \, \, 
\langle \ell, i || \sigma_{a(1)}^{-} \cdots \sigma_{a(i)}^{-} 
|| \ell, 0 \rangle   \langle \ell, 0 || 
\sigma_{b(1)}^{+} \cdots \sigma_{b(j)}^{+}   
\non \\ 
& = & ||\ell, i\rangle 
\sum_{ (\varepsilon_{\beta}(i))_{\ell}} 
\sum_{ (\varepsilon_{\alpha}^{'}(j))_{\ell}} 
g_{ij}(\varepsilon_{\alpha}^{'}(i), \varepsilon_{\beta}(j) ) \, 
q^{a(1)+ \cdots + a(i)-i}  \non \\  
& & \times \, \langle \ell, i || \sigma_{a(1)}^{-} \cdots \sigma_{a(i)}^{-} 
|| \ell, 0 \rangle  q^{-(a(1)+ \cdots + a(i)-i)} \quad 
 \langle \ell, 0 || \sigma_{b(1)}^{+} \cdots \sigma_{b(j)}^{+}
\eea
Applying Lemma \ref{lem:independance} we have 
\bea
E^{i, \, j \, (\ell \, w)} & = & ||\ell, i \rangle 
\sum_{(\varepsilon_{\beta}(i))_{\ell} } 
\left( \sum_{ ( \varepsilon_{\alpha}^{'}(j))_{\ell} } 
g_{ij}(\varepsilon_{\alpha}^{'}(i), \varepsilon_{\beta}(j) ) \, 
q^{a(1)+ \cdots + a(i)-i} \right) \non \\ 
& & \times \, \langle \ell, i || \sigma_{a(1)}^{-} \cdots \sigma_{a(i)}^{-} 
|| \ell, 0 \rangle  q^{-(a(1)+ \cdots + a(i)-i)} \quad 
 \langle \ell, 0 || \sigma_{b(1)}^{+} \cdots \sigma_{b(j)}^{+}
\eea
Taking the sum over sequences $(\varepsilon_{\alpha}^{'}(i))_{\ell}$ 
with Lemma \ref{lem:g-sum} we have 
\bea
E^{i, \, j \, (\ell \, w)} & = & ||\ell, i \rangle 
\left[ \begin{array}{c}
\ell \\
i 
\end{array} \right] 
\, 
\left[ \begin{array}{c}
\ell \\
j 
\end{array} \right]_{q}^{-1} 
\, 
q^{i(i-1)/2-j(j-1)/2} 
\langle \ell, i || \sigma_{a(1)}^{-} \cdots \sigma_{a(i)}^{-} 
|| \ell, 0 \rangle  q^{-(a(1)+ \cdots + a(i)-i)} 
\non \\
& & \quad \times \sum_{ (\varepsilon_{\beta}(j))_{\ell} } 
 \langle \ell, 0 || \sigma_{b(1)}^{+} \cdots \sigma_{b(j)}^{+}
q^{b(1) + \cdots + b(j)-j} \label{eq:middle}
\eea 
We move the conjugate vector $\langle \ell, i ||$ to the left 
in (\ref{eq:middle}), and through (\ref{eq:eps-ab}) 
we express $\sigma_{a(1)}^{-} \cdots \sigma_{a(i)}^{-} 
|| \ell, 0 \rangle 
\langle \ell, 0 || \sigma_{b(1)}^{+} \cdots \sigma_{b(j)}^{+}$ 
in terms of the spin-1/2 elementary operators,  
we have 
\bea
E^{i, \, j \, (\ell \, w)} & = & ||\ell, i \rangle \langle \ell, i || \, 
\left[ \begin{array}{c}
\ell \\
i 
\end{array} \right] 
\, 
\left[ \begin{array}{c}
\ell \\
j 
\end{array} \right]_{q}^{-1} 
\, 
q^{i(i-1)/2-j(j-1)/2} 
\non \\ 
& & \times  \sum_{ (\varepsilon_{\beta}(j))_{\ell} } 
e_1^{\varepsilon_1^{'}(i), \, \varepsilon_1(j)} \cdots 
e_{\ell}^{\varepsilon_{\ell}^{'}(i), \, \varepsilon_{\ell}(j)} \,  
q^{-(a(1)+ \cdots + a(i)-i)} q^{b(1) + \cdots + b(j)-j} 
\label{eq:E-sum-e}
\eea
Applying the gauge transformation of Lemma \ref{lem:inv-gauge}  
to (\ref{eq:E-sum-e}) we obtain Proposition \ref{prop:E-w}.

%
%
\setcounter{equation}{0} 
\renewcommand{\theequation}{B.\arabic{equation}}

\section{Reduction of spin-$\ell/2$ Hermitian elementary operators}

Let us introduce vectors $\widetilde{|| \ell, n \rangle}$ 
which are Hermitian conjugate to $\langle \ell, n ||$ when 
$|q|=1$ for positive integers $\ell$ with $n=0, 1, \ldots, \ell$ \cite{DM2}. 
Setting the norm of $\widetilde{|| \ell, n \rangle}$ 
such that $\langle \ell, n || \,  \widetilde{|| \ell, n \rangle}=1$, 
vectors $\widetilde{|| \ell, n \rangle}$ are given by   
\begin{equation} 
\widetilde{|| \ell, n \rangle} = 
\sum_{1 \le i_1 < \cdots < i_n \le \ell} \sigma_{i_1}^{-} 
\cdots \sigma_{i_n}^{-} ||\ell, \,  0 \rangle 
q^{-(i_1 + \cdots + i_n) + n \ell - n(n-1)/2} 
\left[ 
\begin{array}{cc} 
\ell \\ 
n 
\end{array} 
 \right]_q \, 
q^{-n(\ell-n)} 
\left( 
\begin{array}{cc} 
\ell \\ 
n 
\end{array} 
 \right)^{-1} \, . 
\end{equation}
We define the spin-$\ell/2$ Hermitian elementary matrices 
associated with homogeneous grading, 
$\widetilde{E}^{i, \, j \, (\ell, \, +)}$, by 
\be
\widetilde{E}^{i, \, j \, (\ell, \, +)} = 
\widetilde{|| \ell, i \rangle}  
\langle \ell, j || \, . 
\ee
Introducing ${\widetilde g}_{i,j}$ by 
\be 
\widetilde{|| \ell, i \rangle}  
\langle \ell, j || 
= 
\sum_{( \varepsilon_{\alpha}^{'} )_{\ell}} 
\sum_{ ( \varepsilon_{\beta} )_{\ell} } 
{\widetilde g}_{i,j}( \varepsilon_{\alpha}^{'}(i) \}, 
 \varepsilon_{\beta}(j) ) 
e_1^{\varepsilon_1^{'}(i), \, \varepsilon_1(j)} \cdots   
e_{\ell}^{\varepsilon_{\ell}^{'}(i), \, \varepsilon_{\ell}(j)} \, , 
\ee
 we have 
\be 
{\widetilde g}_{i,j}( \varepsilon_{\alpha}^{'}(i), 
\varepsilon_{\beta}(j))
= 
\left[ 
\begin{array}{c}  
\ell \\ 
i 
\end{array} 
\right]_q 
\left[ 
\begin{array}{c}  
\ell \\ 
j 
\end{array} 
\right]_q^{-1} 
\left( 
\begin{array}{c}  
\ell \\ 
i 
\end{array} 
\right)_q \, q^{i(i-1)/2-j(j-1)/2} 
\, q^{-(a(1) + \cdots +a(i) - i) + (b(1) + \cdots + b(j) -j)} 
\ee
We derive the reduction formula 
for the Hermitian elementary operators 
$\widetilde{E}^{i, \, j \, (\ell, \, +)}$ as follows. 
\bea 
& & \widetilde{E}^{i, \, j \, (\ell, \, +)} 
 = {\widetilde P}^{(\ell)} \widetilde{E}^{i, \, j \, (\ell, \, +)}  \non \\ 
&  & = 
\left[ 
\begin{array}{c}  
\ell \\ 
i 
\end{array} 
\right]_q 
\left[ 
\begin{array}{c}  
\ell \\ 
j 
\end{array} 
\right]_q^{-1} 
\left( 
\begin{array}{c}  
\ell \\ 
i 
\end{array} 
\right) \, q^{i(i-1)/2-j(j-1)/2} \widetilde{ || \ell, i \rangle} 
\non \\ 
& & \times 
\sum_{ ( \varepsilon_{\beta}(j) )_{\ell} } 
\sum_{ (\varepsilon_{\alpha}^{'}(i) )_{\ell} } 
\left( \langle \ell, i || \sigma_{a(1)}^{-}  \cdots \sigma_{a(i)}^{-} 
|| \ell, 0 \rangle q^{-(a(1) + \cdots + a(i) -i)}    
\right) \langle \ell, 0 || \sigma_{b(1)}^{+}  \cdots \sigma_{b(j)}^{+}  
q^{b(1) + \cdots + b(j) -i} \, .  \non \\ 
\eea
Here, applying Lemma \ref{lem:independance} we show that  
the inside of the parentheses (or the round brackets) 
is independent of $a(k)$s.  
Making use of the following:  
\be 
\sum_{( \varepsilon_{\alpha}^{'}(i))_{\ell} } 1 = 
\left( 
\begin{array}{c}  
\ell \\ 
i 
\end{array} 
\right) \, ,  
\ee
we thus have 
\bea
\widetilde{E}^{i, \, j \, (\ell, \, +)} 
& = &  
\left[ 
\begin{array}{c}  
\ell \\ 
i 
\end{array} 
\right]_q 
\left[ 
\begin{array}{c}  
\ell \\ 
j 
\end{array} 
\right]_q^{-1} 
\left( 
\begin{array}{c}  
\ell \\ 
i 
\end{array} 
\right)^{-1} \, q^{i(i-1)/2-j(j-1)/2} \widetilde{ || \ell, i \rangle} 
\langle \ell, i ||
\non \\ 
& & \times 
\left( 
\begin{array}{c}  
\ell \\ 
i 
\end{array} 
\right) \, 
\sum_{ (\varepsilon_{\beta}(j) )_{\ell} }    
 \sigma_{a(1)}^{-}  \cdots \sigma_{a(i)}^{-} 
|| \ell, 0 \rangle  \langle \ell, 0 || \sigma_{b(1)}^{+} 
 \cdots \sigma_{b(j)}^{+} \,  q^{-(a(1) + \cdots + a(i) -i)}  
q^{b(1) + \cdots + b(j) -i} \non 
\eea
\bea
& = &  
\left[ 
\begin{array}{c}  
\ell \\ 
i 
\end{array} 
\right]_q 
\left[ 
\begin{array}{c}  
\ell \\ 
j 
\end{array} 
\right]_q^{-1}
 q^{i(i-1)/2-j(j-1)/2} \, \widetilde{ || \ell, i \rangle} 
\langle \ell, i || \, e^{-(i-j) \xi_1} \, 
\sum_{ \{ \varepsilon_{\beta} \} }  
\chi_{1 \cdots \ell} \,  e_1^{\varepsilon_1^{'}, \, \varepsilon_{1}} 
\cdots e_{\ell}^{\varepsilon_{\ell}^{'}, \, \varepsilon_{\ell}}     
\, \chi_{1 \cdots \ell}^{-1} \, . \non \\ 
\label{eq:tilde-red-formula}
\eea
Here we have applied Lemma \ref{lem:inv-gauge} to derive the 
last line of eq. (\ref{eq:tilde-red-formula}).

%
%
\setcounter{equation}{0} 
\renewcommand{\theequation}{C.\arabic{equation}}
\section{Non-regularity of the transfer matrix} 

Let us consider the case of $L=3$. We introduce 
 $b_{0j}$ and $c_{0j}^{\pm}$ for $j=1, 2, 3$ by  
 $b_{0j}=b(\lambda-w_j^{(2)})$ and $c_{0j}^{\pm} = 
\exp(\pm (\lambda - w_j^{(2)})) c(\lambda - w_j^{(2)})$ 
for $j=1, 2, 3$, respectively. 

The matrix elements of the operator $A_{123}^{(1 \, +)}
(\lambda)$ in the sector of $M=1$ are given by  
\be 
\left. A_{123}^{(1 \, +)} (\lambda) \right|_{M=1}
=
\left(
\begin{array}{ccc} 
b_{03} & c_{02}^{+}c_{03}^{-} & c_{01}^{+} b_{02} c_{03}^{-} \\  
0 & b_{02} & c_{01}^{+} c_{02}^{-} \\ 
0 & 0 & b_{01} 
\end{array}
\right) \, , 
\ee
and  those of the operator $A_{123}^{(1 \, +)}
(\lambda)$ in the sector of $M=1$ are given by 
\be 
\left. D_{123}^{(3 \, +; \, 0)} (\lambda) \right|_{M=1}
=
\left(
\begin{array}{ccc} 
b_{01} b_{02} & 0 &  \\  
b_{01} c_{02}^{-}c_{03}^{+} & b_{01}b_{03} & 0 \\ 
c_{01}^{-} c_{03}^{+}   & c_{01}^{+} c_{02}^{+} b_{03} & b_{02} b_{03} 
\end{array}
\right) \, . 
\ee
Let us set $w_1=w_1^{(2)}= \xi_1$,  
 $w_2=w_2^{(2)}= \xi_1-\eta$, and 
 $w_3= \xi_2$.  Setting $\lambda=\xi_1$ we have 
\be 
\left. \left( A_{123}^{(2 \, +; \, 0)} (\xi_1) + 
D_{123}^{(2 \, +; \, 0)} (\xi_1) 
\right) 
\right|_{S^Z=1/2}
=
\left(
\begin{array}{ccc} 
b_{13}  & {\frac q {[2]_q}} c_{13}^{-}  
&  {\frac 1 {[2]_q}}c_{13}^{-} \\  
0 & {\frac 1 {[2]_q}} & {\frac {q^{-1}} {[2]_q}}  \\ 
c_{13}^{+}   & {\frac q {[2]_q}} b_{13} & {\frac 1 {[2]_q}}  b_{13} 
\end{array}
\right) \, . 
\ee
Here, the second and the third columns are parallel.  Thus, 
the determinant of the spin-1/2 transfer matrix in the sector of $M=1$ 
is non-regular.

%
%
\setcounter{equation}{0} 
\renewcommand{\theequation}{D.\arabic{equation}}
\section{Reducing spin-$\ell/2$ Bethe states 
 with principal grading}

In order to evaluate the spin-$s$ form factors 
for the spin-$s$ elementary operators $E^{i, \, j \, (\ell \, p)}$ associated with principal grading, 
we first transform them into those of homogeneous grading,  and then apply to them the formula for expressing 
the spin-$s$ elementary operators in terms of a sum of products  of 
spin-1/2 elementary operators.   

Let us recall the gauge transformation 
$\chi^{(1, \, \ell)}_{0, \, 1 2 \cdots N_s}$ which 
maps the higher-spin transfer matrix associated with principal grading 
of type $(1, {\ell}^{\otimes N_s})$ 
to that of homogeneous grading: 
\bea 
T^{(1, \, \ell +)}_{0, \, 1 2 \cdots N_s}(\lambda) & = & 
\chi_{0, \, 1 2 \cdots N_s}^{(1, \, \ell)}  T^{(1, \, \ell \, p)}(\lambda)  
\left( \chi_{0, \, 1 2 \cdots N_s}^{(1, \, \ell)}\right)^{-1} 
\non \\ 
& = & 
\left( 
\begin{array}{cc} 
\chi_{1 2 \cdots N_s}^{(\ell)} 
A^{(\ell \, p)}(\lambda) \chi_{1 2 \cdots N_s}^{(\ell) \quad -1}  
& e^{-\lambda} \chi_{1 2 \cdots N_s}^{(\ell)} 
B^{(\ell \, p)}(\lambda) \chi_{1 2 \cdots N_s}^{(\ell) \quad -1} \\ 
e^{\lambda} \chi_{1 2 \cdots N_s}^{(\ell)} 
C^{(\ell \, p)}(\lambda) \chi_{1 2 \cdots N_s}^{(\ell) \quad -1}  
&  \chi_{1 2 \cdots N_s}^{(\ell)} 
D^{(\ell \, p)}(\lambda) 
\chi_{1 2 \cdots N_s}^{(\ell) \quad -1}
\end{array} 
\right) \, . 
\eea
We also recall that the $C$ operator acting 
on the tensor product of the spin-$\ell/2$ representations 
$(V^{(\ell)})^{\otimes N_s}$   
is derived from the $C$ operator  
acting on the tensor product 
of the spin-1/2 representations $(V^{(1)})^{\otimes L}$ 
multiplied by the projection operators: 
$$ 
C^{(\ell \, +)}(\mu) = P^{(\ell)}_{1 2 \cdots L} 
C^{(\ell \,  +; \, 0)}(\mu) P^{(\ell)}_{1 2 \cdots L} \, e^{\mu} \, . 
$$  
Here, through the spin-1/2 gauge transformation we have 
$$ 
C^{(\ell \, +; \, 0)}(\mu) = 
\chi_{1 2 \cdots L} \, 
C^{(\ell \, p; \, 0)}(\mu) \, \chi_{1 2 \cdots L}^{-1} \, e^{\mu} . 
$$
We therefore have $\langle \{ \mu_{\alpha} \}_N^{(\ell, \, p)} |$ as  
\bea
 \langle 0 | \prod_{\alpha=1}^{N} 
C^{(\ell \, p)}(\mu_{\alpha}) 
& = & \langle 0 | \prod_{\alpha=1}^{N} \left( 
\left( \chi_{ 1 \cdots N_s}^{(\ell)} \right)^{-1}   
P^{(\ell)}_{1 \cdots L} \, \chi_{1 \cdots L}  
C^{(\ell \, p; \, 0)}(\mu_{\alpha})  \chi_{1 \cdots L}^{-1} \, 
P^{(\ell)}_{1 \cdots L} 
\left( \chi_{ 1 \cdots N_s}^{(\ell)} \right) 
\right) 
\non \\ 
& = & \langle 0 | \prod_{k=1}^{N}  C^{(\ell \, p; \, 0)}(\mu_{k}) 
\, \cdot \,  \chi_{ 1 \cdots L}^{-1}  
P^{(\ell)}_{1 \cdots L} \chi_{ 1 \cdots N_s}^{(\ell)}  \, . 
\eea 
Precisely we derive it as follows: 
\bea 
\langle 0 | \prod_{k=1}^{N} 
C^{(\ell \, p)}(\mu_{k}) 
& = & \langle 0 | \prod_{k=1}^{N} \left\{ 
\left( \chi_{ 1 \cdots N_s}^{(\ell)} \right)^{-1} 
  C^{(\ell \, +)}(\mu_{k}) e^{-\mu_k} 
\left( \chi_{ 1 \cdots N_s}^{(\ell)} \right) \right\}
\non \\ 
& = & \langle 0 | \prod_{k=1}^{N} \left\{ C^{(\ell \, +)}(\mu_{k}) e^{-\mu_k} 
\right\}  \, \, \cdot \, \chi_{ 1 \cdots N_s}^{(\ell)} 
\non \\ 
& = & \langle 0 | \prod_{k=1}^{N} \left( 
P^{(\ell)}_{1 \cdots L} 
 C^{(\ell \, +; \, 0)}(\mu_{k}) e^{-\mu_k}  
P^{(\ell)}_{1 \cdots L} \right) 
\, \, \cdot \, \chi_{ 1 \cdots N_s}^{(\ell)} \non \\ 
& = & \langle 0 | \prod_{k=1}^{N} \left\{ 
 C^{(\ell \, +; \, 0)}(\mu_{k}) e^{-\mu_k} \right\} 
\, \, \cdot \, P^{(\ell)}_{1 \cdots L} \chi_{ 1 \cdots N_s}^{(\ell)} \, . 
\eea
Here we have made use of the commutation relation 
with the projection $P^{(\ell)}_{1 \cdots L}$. Thus, we have 
\bea
\langle \{ \mu_{\alpha} \}_N^{(\ell, \, p)} | & = & \langle 0 | \prod_{k=1}^{N} \left( \chi_{ 1 \cdots L} 
 C^{(\ell \, p; \, 0)}(\mu_{k}) \chi_{ 1 \cdots L}^{-1} \right) 
\, \, \cdot \, P^{(\ell)}_{1 \cdots L} \chi_{ 1 \cdots N_s}^{(\ell)} \non \\ 
& = & \langle 0 | \prod_{k=1}^{N}  
 C^{(\ell \, p; \, 0)}(\mu_{k}) \, \, \cdot \, 
\chi_{ 1 \cdots L}^{-1} 
P^{(\ell)}_{1 \cdots L} \chi_{ 1 \cdots N_s}^{(\ell)} \, . \label{eq:left-p}
\eea
Similarly we have  
\bea
| \{ \mu_{\alpha} \}_N^{(\ell, \, p)} \rangle 
& = &  \prod_{\alpha=1}^{N} 
B^{(\ell \, p)}(\lambda_{\alpha}) | 0 \rangle 
\non \\ 
& = &  \prod_{\alpha=1}^{N} \left( 
\left( \chi_{ 1 \cdots N_s}^{(\ell)} \right)^{-1}   
P^{(\ell)}_{1 \cdots L} \,  \chi_{ 1 \cdots L}
B^{(\ell \, p; \, 0)}(\lambda_{\alpha}) 
\chi_{ 1 \cdots L}^{-1} \, 
P^{(\ell)}_{1 \cdots L} 
\left( \chi_{ 1 \cdots N_s}^{(\ell)} \right) 
\right) 
\non \\ 
& = & 
\left( \chi_{ 1 \cdots N_s}^{(\ell)} \right)^{-1} 
P^{(\ell)}_{1 \cdots L} \,  \chi_{ 1 \cdots L} \, \cdot \, 
 \prod_{\alpha=1}^{M}  
B^{(\ell \, p; \, 0)}(\lambda_{\alpha}) | 0 \rangle \, . \label{eq:right-p}
\eea

%
%
\setcounter{equation}{0} 
\renewcommand{\theequation}{E.\arabic{equation}}

\section{Symmetric multiple-integral representations}


For the spin-1 case, 
let us express the double sum 
$\sum_{c_1}^{M^{'}} \sum_{c_2}^{M} f(c_1, c_2)$ in the symmetric form 
which leads to the symmetric expression of the multiple-integral representations.  Here $c_1$ and $c_2$ run through from 1 to $M$ corresponding to 
all the 2-strings of the ground state.  
 

Recall that variable $c_j$ ($1 \le  j \le 2s m$) takes 
 integers from 1 to $M=N_s$ which correspond to $N_s/2$ sets of 2-strings. 
We express them in terms of integers $a(j, \beta)$ for $\beta=1, 2$, 
where $a(j, \beta)$ take integral values from 1 to $N_s/2$. 
We first express the sum over $c_1$ in terms of $a(1, \beta)$ as follows. 
\be 
\sum_{c_1} =  \sum_{a(1,1)=1}^{M/2} + \sum_{a(1,2)=1}^{M/2}  
\ee
More precisely we have 
\be 
\sum_{c_1} f(c_1) =  \sum_{a(1,1)=1}^{M/2} 
f( 2 (a(1,1)-1) + 1)  +  
\sum_{a(1,2)=1}^{M/2} f( 2(a(1,2)-1) + 1)  
\ee
For the spin-1 case with one-point function ($m=1$) we have 
\bea 
\sum_{c_1=1}^{M^{'}} \sum_{c_2=1}^{M} f(c_1, c_2)  
& = & 
\left( \sum_{a(1,1)=1}^{M^{'}/2} + \sum_{a(1,2)=1}^{M^{'}/2} \right)  
\left( \sum_{a(2,1)=1}^{M/2} + \sum_{a(2,2)=1}^{M/2} \right)  
 f(c_1, c_2)  
\non \\ 
& = & 
\left( 
\sum_{a(1,1)=1}^{M^{'}/2} \sum_{a(2,1)=1}^{M/2} + 
\sum_{a(1,1)=1}^{M^{'}/2} \sum_{a(2,2)=1}^{M/2} + 
\sum_{a(1,2)=1}^{M^{'}/2} \sum_{a(2,1)=1}^{M/2} +  
\sum_{a(1,2)=1}^{M^{'}/2} \sum_{a(2,2)=1}^{M/2} \right)  
 f(c_1, c_2)  
\non \\ 
& = & 
\left( 0  + 
\sum_{a(1,1)=1}^{M^{'}/2} \sum_{a(2,2)=1}^{M/2} + 
\sum_{a(1,2)=1}^{M^{'}/2} \sum_{a(2,1)=1}^{M/2} +  0  \right)  
 f(c_1, c_2) 
\eea
Here we recall that  $f(c_1, c_2) $ vanishes  
if the types of string rapidities $c_1$ and $c_2$ are the same.

Let us now introduce variables 
$a_j$ ($j=1,2$) which correspond to the centers of the 2-strings. 
We define an integer-valued variable ${\hat c}_j$ 
which is a  function of $a_j$ as follows  
\be 
{\hat c}_j = 2 (a_j -1) + \beta(j) 
\ee
Then, in terms of permutations $\pi$ in the symmetric group ${\cal S}_2$ 
we express the sum as follows 
\bea 
& & \sum_{c_1}^{M^{'}} \sum_{c_2}^{M} f(c_1, c_2)  
= \sum_{a_1=1}^{M^{'}/2} \sum_{a_2=1}^{M/2} f({\hat c}_1, {\hat c}_2)   
+ \sum_{a_2=1}^{M^{'}/2} \sum_{a_1=1}^{M/2} f({\hat c}_2, {\hat c}_1) 
\non \\ 
& & = \sum_{a_{e 1}=1}^{M^{'}/2} \sum_{a_{e 2}=1}^{M/2} 
f({\hat c}_{e 1}, {\hat c}_{e 2})   
+ \sum_{a_{(12) 1} =1}^{M^{'}/2} \sum_{a_{(12)2} =1}^{M/2} 
f({\hat c}_{(12) 1} , {\hat c}_{(12) 2} )
\non \\ 
& & = \sum_{\pi \in {\cal S}} 
\sum_{a_{\pi 1} =1}^{M^{'}/2} \sum_{a_{\pi 2} =1}^{M/2} 
f({\hat c}_{\pi 1} , {\hat c}_{\pi 2} ) \, . 
\eea
We thus have  
\be 
\sum_{c_1=1}^{M^{'}} \sum_{c_2=1}^{M} f(c_1, c_2)  
= \sum_{\pi \in {\cal S}} 
\sum_{a_{\pi 1} =1}^{M^{'}/2} \sum_{a_{\pi 2} =1}^{M/2} 
f({\hat c}_{\pi 1} , {\hat c}_{\pi 2} )
\ee
The result leads to the symmetric expression of the multiple-integral 
representation.

%
%

\end{document}